\pdfoutput=1
\documentclass[]{article}  
\usepackage{url,float}
\usepackage{graphicx}
\usepackage{amsmath}
\usepackage{amsfonts}
\usepackage{amssymb}
\usepackage{latexsym}
\usepackage{xcolor}
\usepackage[colorlinks]{hyperref}
\hypersetup{
    colorlinks=true,
    linkcolor=blue,
    filecolor=magenta,      
    urlcolor=cyan,
    citecolor=purple
    }

\newcommand{\hide}[1]{}

\newcommand{\ABox}{
\raisebox{3pt}{\framebox[6pt]{\rule{6pt}{0pt}}}
}
\newenvironment{proof}{{\bf Proof:}}{\hfill\ABox}

\newtheorem{thm}{{\bf Theorem}}
\newtheorem{cor}{Corollary}
\newtheorem{lem}{Lemma}
\newtheorem{prop}{Proposition}
\newtheorem{rmk}{Remark}

\newtheorem{ex}{Example}
\newtheorem{op}{Open Problem}

\newcommand{\lemlab}[1]{\label{lemma:#1}}
\newcommand{\thmlab}[1]{\label{therom:#1}}
\newcommand{\exlab}[1]{\label{ex:#1}}
\newcommand{\proplab}[1]{\label{prop:#1}}

\newcommand{\figlab}[1]{\label{fig:#1}}
\newcommand{\seclab}[1]{\label{sec:#1}}
\newcommand{\rmklab}[1]{\label{rmk:#1}}

\newcommand{\lemref}[1]{\ref{lemma:#1}}
\newcommand{\thmref}[1]{\ref{therom:#1}}

\newcommand{\propref}[1]{\ref{prop:#1}}
\newcommand{\exref}[1]{\ref{ex:#1}}
\newcommand{\secref}[1]{\ref{sec:#1}}
\newcommand{\figref}[1]{\ref{fig:#1}}

\newcommand{\rmkref}[1]{\ref{rmk:#1}}

\def\a{{\alpha}}
\def\b{{\beta}}
\def\p{{\phi}}
\def\q{{\theta}}

\def\C{{\mathcal C}}

\def\g{{\gamma}}

\def\d{{\delta}}
\def\D{{\Delta}}
\def\e{{\varepsilon}}
\def\I{{\iota}}
\def\conv{\mathop{\rm conv}\nolimits}
\def\R{{\mathbb{R}}}
\def\T{{\mathcal T}}
\def\Sk{{\operatorname {Sk}}}

\newcommand{\squeezelist}{\setlength{\itemsep}{0pt}}

\usepackage[shortlabels]{enumitem}

\title{Skeletal Cut Loci on Convex Polyhedra\footnote{
A preliminary version of this paper 
(excluding trees with degree-$2$ nodes and partially skeletal cut loci)
was presented at a conference~\cite{SkCL-CCCG}.}
}

\author{
Joseph O'Rourke
\and
Costin V\^\i lcu
}

\date{\today}

\begin{document}
\maketitle

\begin{abstract}
On a convex polyhedron $P$, the \emph{cut locus} $\C(x)$ with respect to a point $x$
is a tree of geodesic segments (shortest paths) on $P$ that includes every vertex. 
We say that $P$ has a \emph{skeletal cut locus} if there is some $x \in P$
such that  $\C(x) \subset \Sk(P)$, where  $\Sk(P)$ is the \emph{$1$-skeleton} of $P$.
At a first glance, there seems to be very little relation between the cut locus and the $1$-skeleton,
as the first one is an intrinsic geometry notion, and the second one specifies the combinatorics of $P$.

In this paper we study skeletal cut loci, obtaining four main results.
First, given any combinatorial tree $\T$,
there exists a convex polyhedron $P$ and a point $x$ in $P$
with a cut locus that lies in $\Sk(P)$, and whose combinatorics match $\T$.
Second, any (non-degenerate) polyhedron $P$ has at most a finite number of 
points $x$ for which $\C(x) \subset \Sk(P)$.
Third, we show that almost all polyhedra have no skeletal cut locus.
Fourth, we provide a combinatorial restriction to the existence of skeletal cut loci.

Because the source unfolding of $P$ with respect to $x$ 
is always a non-overlapping net for $P$,
and because the boundary of the source unfolding is the (unfolded) cut locus,
source unfoldings of polyhedra with skeletal cut loci are edge-unfoldings, and moreover ``blooming,'' avoiding self-intersection during an unfolding process.

We also explore partially skeletal cut loci, leading to \emph{partial edge-unfoldings}; i.e., unfoldings obtained by cutting along some polyhedron
edges and cutting some non-edges.
\end{abstract}


\section{Introduction}
\seclab{Introduction}


\subsection{Background and Results}
Our focus is the cut locus $\C(x)$ of a point $x$ on a convex polyhedron $P$, 
and the relationship of $\C(x)$ to the \emph{$1$-skeleton} of $P$---
the graph of vertices and edges---which we denote by $\Sk(P)$.

The \emph{cut locus} $\C(x)$ of $x \in P$ is the closure of the set of points on $P$ to which there is more than one geodesic segment (shortest path) from $x$.
$\C(x)$ is a tree whose leaves are vertices of $P$. 
Nodes of degree $k \ge 3$ are \emph{ramification points}
to which there are $k$ distinct geodesic segments from $x$.
Nodes $v$ of degree $2$ in $\C(x)$ can also occur, if $v$ is a vertex of $P$.
For details, see Section~\secref{Preliminaries}.

The $1$-skeleton of a non-degenerate polyhedron is a $3$-connected 
planar graph by Steinitz's theorem.
We call a doubly-covered convex polygon a \emph{degenerate} convex polyhedron,
for which the $1$-skeleton is a cycle.

We say that $P$ has (or posesses)
a \emph{skeletal cut locus} if there is some $x \in P$
such that  $\C(x) \subset \Sk(P)$.
Such a polyhedron $P$ is called \emph{cut locus amenable}, \emph{amenable} for short.

The edges of $\C(x)$ are known to be geodesic segments~\cite{aaos-supa-97},
so it is at least conceivable that an edge of $\C(x)$ lies along an edge of $P$.
Theorem~\thmref{EveryTree} shows that, for certain polyhedra $P$ and points $x \in P$,
all of $\C(x)$ lies in the $1$-skeleton of $P$:  $\C(x) \subset \Sk(P)$.
As a simple example, we will see in Lemma~\lemref{EveryTetra} that the three edges incident to any
vertex $v_i$ of a tetrahedron form $\C(x)$ for an appropriate $x$,
and are therefore a skeletal cut locus.

Although Theorems~\thmref{Finite} and~\thmref{Rare} will show that skeletal cut loci are ``rare'' in senses we'll make precise, Theorem~\thmref{EveryTree}
and its proof establish that uncountably many polyhedra do admit skeletal cut loci, 
in a sense made quantitatively precise by Proposition~\propref{Uncountable}.

Theorem~\thmref{Every-vertex} characterizes those polyhedra every vertex of which has a skeletal cut locus.
Complementing its first part, Theorem~\thmref{Restriction} provides a simple combinatorial restriction to the existence of skeletal cut loci, connecting to a current topic in graph theory.

Theorem~\thmref{EveryTree} can also be viewed as a companion to the
main result in~\cite{ov-clrcp-2023}, that any \emph{length tree}---a tree
with specified edge lengths---can be realized as the cut locus on some polyhedron.
Here we only match the combinatorics of $\T$, not its metrical properties, 
but requiring additionally for $\T$ to be included in $\Sk(P)$.

\paragraph{Connection to Unfolding.}
It has long been known that cutting the cut locus $\C(x)$
and unfolding to the plane leads to the non-overlapping
\emph{source unfolding}:
If $x$ is not itself at a vertex, then the unfolding arrays all
the shortest paths $2\pi$ around $x$, with the image of the
cut locus forming the boundary of the unfolding~\cite{m-fspcp-85}~\cite{ss-spps-86}.
For the polyhedra in Theorem~\thmref{EveryTree},
the source unfolding is an edge-unfolding.
This adds another infinite class of polyhedra 
(which we call \emph{tapered} in Section~\secref{Case_d})
that are known to have edge-unfolding \emph{nets}.
And because it is known that the source unfolding can be \emph{bloomed}---unfolded continuously
from $\R^3$ to $\R^2$ without self-intersection~\cite{demaine2011continuous}---Theorem~\thmref{EveryTree} 
and its companion Proposition~\propref{Uncountable} provide  
 perhaps the first infinite class of examples of blooming edge-unfoldings.

A central open problem asks for an accounting
of all the polyhedra $P$ that support a skeletal cut locus.
All of these enjoy the property that source unfoldings  are also blooming edge-unfoldings.

In Section~\secref{Partial-Edge Unfoldings} we introduce and briefly explore
partially skeletal cut loci, leading to \emph{partial edge-unfoldings}; i.e., unfoldings obtained by cutting along some polyhedron edges and cutting some non-edges.


\subsection{Cut Locus Preliminaries}
\seclab{Preliminaries}
For the readers convenience, we list next several basic properties of cut loci,
sometimes used implicitly in the following.

\begin{enumerate}[label={(\roman*)}]
\item 
\label{i}
$\C(x)$ is a tree drawn on the surface of $P$.
Its leaves are vertices of $P$, and all vertices of $P$, excepting $x$ (if it is a vertex) are included in $\C(x)$. 
All points interior to $\C(x)$ of degree $3$ or more are known as \emph{ramification points} of $\C(x)$.
All vertices of $P$ interior to $\C(x)$ are also considered as ramification points, of degree at least $2$;
see e.g. Fig.~\figref{DiPyramid}.
\item 
\label{ii}
Each point $y$ in $\C(x)$ is joined to $x$ by as many geodesic segments\footnote{%
We will sometimes abbreviate ``geodesic segment'' by \emph{geoseg},
and ``geodesic arc'' by \emph{geoarc}.}  
as the number of connected components of $\C(x) \setminus {y}$.
For ramification points in $\C(x)$, this is precisely their degree in the tree.
\item 
\label{iii}
The edges of $\C(x)$ are geodesic segments on $P$.
\item 
\label{iv}
Assume the distinct geodesic segments $\g$ and $\g'$ 
 from $x$ to $y \in C(x)$ bound a domain $D$ of $P$, 
which intersects no other geodesic segment from $x$ to $y$.
Then there is an arc of $\C(x)$ at $y$ which intersects $D$ 
and bisects the angle of $D$ at $y$.
\item 
The tree $\C(x)$ is reduced to a path if and only if
the polyhedron is a doubly-covered (planar) convex polygon, with $x$ on the rim.
\end{enumerate}

\noindent
Further details and references can be found in~\cite[Ch.~2]{Reshaping}.


\section{Main Result and Examples}
\seclab{Overview}
Our main result is the following theorem.

\begin{thm}
\thmlab{EveryTree}
Given any combinatorial tree $\T$ 
there is a convex polyhedron $P$ and a point $x \in P$
such that the cut locus $\C(x)$ is entirely contained in $\Sk(P)$,
and the combinatorics of $\C(x)$ match $\T$.
\end{thm}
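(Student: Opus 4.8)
The plan is to construct $P$ explicitly from $\T$, building the polyhedron so that $\T$ sits in its $1$-skeleton and arises as a cut locus. The natural first move is to realize the tree $\T$ as a planar tree drawn in a convex region, then "inflate" it into a convex polyhedron. I would start with the simplest cases established in the excerpt---Lemma~\lemref{EveryTetra} shows a single vertex of a tetrahedron yields a star $\T$ (a single ramification point of degree $3$), and this is the base case. The inductive or modular strategy is to show how each combinatorial feature of a tree (a leaf, a degree-$k$ ramification point, and---crucially, per the footnote---a degree-$2$ node) can be manufactured locally on the surface of a convex polyhedron, then glued together while preserving convexity.

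\medskip
\noindent\textbf{Main construction.}
First I would take an abstract drawing of $\T$ in the plane, with leaves on the outer boundary and ramification points in the interior. The idea is to let the edges of $\T$ become actual edges of $P$, so $\T \subseteq \Sk(P)$ by design; the work is then to place a source point $x$ so that $\C(x)$ recovers exactly these edges and no others. By property~\ref{iii}, edges of $\C(x)$ are geodesic segments, so polyhedron edges (which are geodesics across the fold) are legitimate candidates. By property~\ref{ii}, a point $y$ on $\C(x)$ sees $x$ along as many geodesics as the number of components of $\C(x)\setminus\{y\}$, so a degree-$k$ ramification point must be equidistant from $x$ along $k$ distinct shortest paths. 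Concretely, I would build $P$ as a "tapered" shape (the term flagged in Section~\secref{Case_d}): place $x$ at an apex-like point, let geodesics fan out $2\pi$ around $x$, and arrange the far side so that the bisecting arcs predicted by property~\ref{iv} coincide with the prescribed skeleton edges. The degree-$2$ nodes are forced to be genuine vertices of $P$ (by property~\ref{i} a degree-$2$ node of $\C(x)$ must be a vertex), so at each such node I would introduce a small positive curvature vertex lying on the edge-path.

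\medskip
\noindent\textbf{Key steps, in order.}
\emph{Step 1:} fix a convex planar embedding of $\T$ and assign metric edge lengths (exploiting the freedom that we match only combinatorics, not metrics). \emph{Step 2:} realize this as the cut-locus side of a source unfolding---i.e.\ work backward from the planar unfolding, whose boundary is the unfolded cut locus, using the source-unfolding correspondence cited in the Introduction. \emph{Step 3:} fold the unfolding up into a convex polyhedron $P$ with the chosen combinatorics, verifying via a Gauss--Bonnet/angle-deficit check that the total curvature is $4\pi$ and that each intended vertex carries positive curvature. \emph{Step 4:} confirm that the cut locus of $x$ on the resulting $P$ is exactly $\T$ and lies in $\Sk(P)$---in particular that no spurious ramification points appear and that each $\T$-edge is actually a single polyhedron edge rather than a polygonal path crossing faces.

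\medskip
\noindent\textbf{The hard part} will be \textbf{Step 4}: guaranteeing that the combinatorial tree we drew is \emph{exactly} the cut locus, with no extra branching and no edge of $\C(x)$ straying off the skeleton. Convexity and the equidistance conditions interact delicately, and a generic construction would tend to produce cut-locus edges that cut across faces (non-skeletal) or to split a degree-$2$ node into two ramification points. Controlling this likely requires a symmetry or genericity argument to pin down the bisectors exactly on the edges, plus a careful treatment of degree-$2$ nodes---precisely the case the conference version omitted---where a vertex must be inserted on an edge-path without destroying either convexity or the single-geodesic-segment count on the two sides. I expect the proof to dispatch this by an explicit parametrized family (yielding the uncountably-many claim of Proposition~\propref{Uncountable}) rather than a pure existence argument, so that the equidistance and bisection conditions can be checked directly.
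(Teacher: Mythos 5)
Your proposal gets several peripheral things right---the tetrahedron base case, the centrality of the bisection property~\ref{iv}, the recognition that degree-$2$ nodes are the hard case, and the expectation of an explicit parametrized family---but its central mechanism has a genuine gap. In Steps 2--3 you propose to draw $\T$, build a planar source unfolding around $x$ whose boundary is the unfolded tree, and then ``fold the unfolding up into a convex polyhedron $P$ with the chosen combinatorics.'' That folding step is exactly what no available tool provides. Folding a planar gluing pattern into a convex surface is Alexandrov's Gluing Theorem, and that theorem is purely intrinsic: it produces a convex polyhedron unique up to congruence, but it gives no control whatsoever over where the extrinsic edges and vertices of that polyhedron lie. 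The creases of the $3$D realization are determined by the metric, not by the lines you drew in the plane, so ``$\T \subseteq \Sk(P)$ by design'' is unjustified---this is precisely the intrinsic-versus-combinatorial tension the paper emphasizes in its abstract and Section~\secref{Combinatorial}, and it is why D\"urer's problem is hard. A Gauss--Bonnet/angle-deficit check (your Step 3) is necessary but nowhere near sufficient to force the intended skeleton. You do flag in Step 4 that each $\T$-edge must come out as a genuine polyhedron edge, but you offer only ``a symmetry or genericity argument'' to secure this, and genericity works against you: by Theorem~\thmref{Rare}, almost all polyhedra have \emph{no} skeletal cut locus, so perturbing to general position destroys, rather than rescues, the required coincidences.

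The paper avoids this trap by working forward in $\R^3$ rather than backward from the unfolding: it starts from a regular tetrahedron with $x$ the base centroid (so the three apex edges are already in $\C(x)$), and realizes each ramification point of $\T$ by explicitly truncating a vertex with planes whose tilts are pinned down by mediator planes $\Pi_i$ and bisectors $B_i$ (Section~\secref{ConstructionDetails}). Because the polyhedron is built explicitly, its skeleton is known and convexity is manifest, and Lemma~\lemref{TruncEdges} proves each truncation edge lies in $\C(x)$ via a local intrinsic symmetry about the quasigeodesics $x t_i z$---this is what replaces your unresolved Step 4. Degree-$2$ nodes are then handled not by abstractly ``inserting a small positive curvature vertex,'' but by a four-case analysis (Section~\secref{deg2}) with a separate explicit construction for each case, and the whole proof is assembled by induction on distance to the root (Section~\secref{Induction Proof}). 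To salvage your route you would essentially need an Alexandrov-type theorem with prescribed skeleton, which is an obstacle of open-problem difficulty; the forward construction sidesteps it entirely.
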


The proof of Theorem~\thmref{EveryTree} is quite long.
It consists of a case analysis (Section~\secref{deg2}), a detailed construction for each (sub)case
(Sections~\secref{ConstructionDetails}--\secref{Examples}, 
\secref{Case_a},
\secref{Case_b},
\secref{Case_c},
\secref{Case_d}), 
and a concluding induction (Section~\secref{Induction Proof}).
Rather than plunging right into the proof, we instead
continue by sketching the main proof idea, and then
presenting the consequences and implications of Theorem~\thmref{EveryTree},
postponing the proof details to 
Sections~\secref{ConstructionDetails}--\secref{Induction Proof}.

We next illustrate the main idea of the construction,
with the simple case of a tree without degree-$2$ nodes.
Suppose the given tree $\T$ is the $7$-leaf tree shown in Fig.~\figref{Tree_7leaves}.
We select a degree-$3$ node as root $a$, which corresponds to the apex
of a regular tetrahedron $a v_1 v_2 v_3$.
We fix $x$ at the centroid of the base $Q$. 
\begin{figure}[htbp]
\centering
\includegraphics[width=0.5\columnwidth]{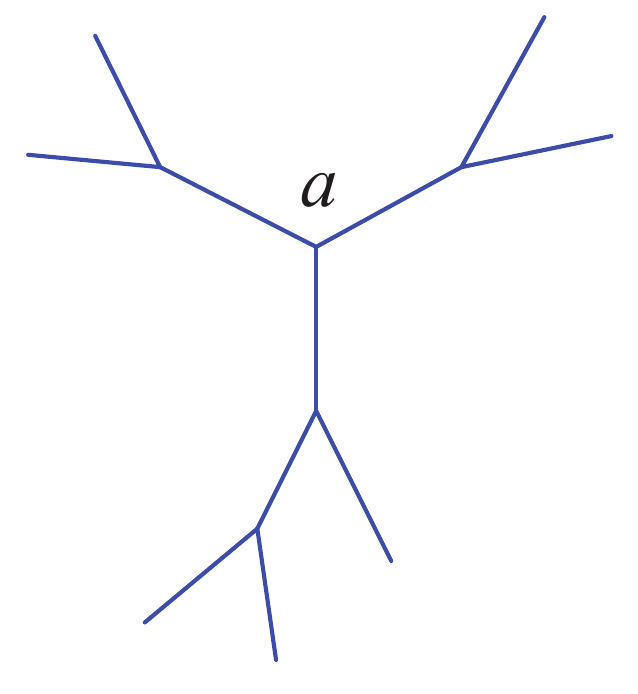}
\caption{Tree $\T$ with $7$ leaves.}
\figlab{Tree_7leaves}
\end{figure}

Fig.~\figref{T123_3d2d}(a) show one possible construction of $P$.
The edges incident to $a$ are clearly in $\C(x)$ with $x$ at the centroid of the base triangle.
All three base vertices of the tetrahedron are then truncated, with the
truncation of $v_1$ truncated a second time.
Now $T$ corresponds to all the non-base edges of $P$.

The truncations are not arbitrary:
the truncation planes must have precise tilts 
in order for the edges of each truncation to lie in $\C(x)$.
Fig.~\figref{T123_3d2d}(b) shows the source unfolding of $P$, 
with $a_1,a_2,a_3$ the three images of $a$.
The red bisector rays from $x$ through the truncation vertices on the base $Q$
suggest that indeed any point $p$ on a truncation edge is equidistant from $x$
and therefore on $\C(x)$.

\begin{figure}[htbp]
\centering
\includegraphics[width=1.0\columnwidth]{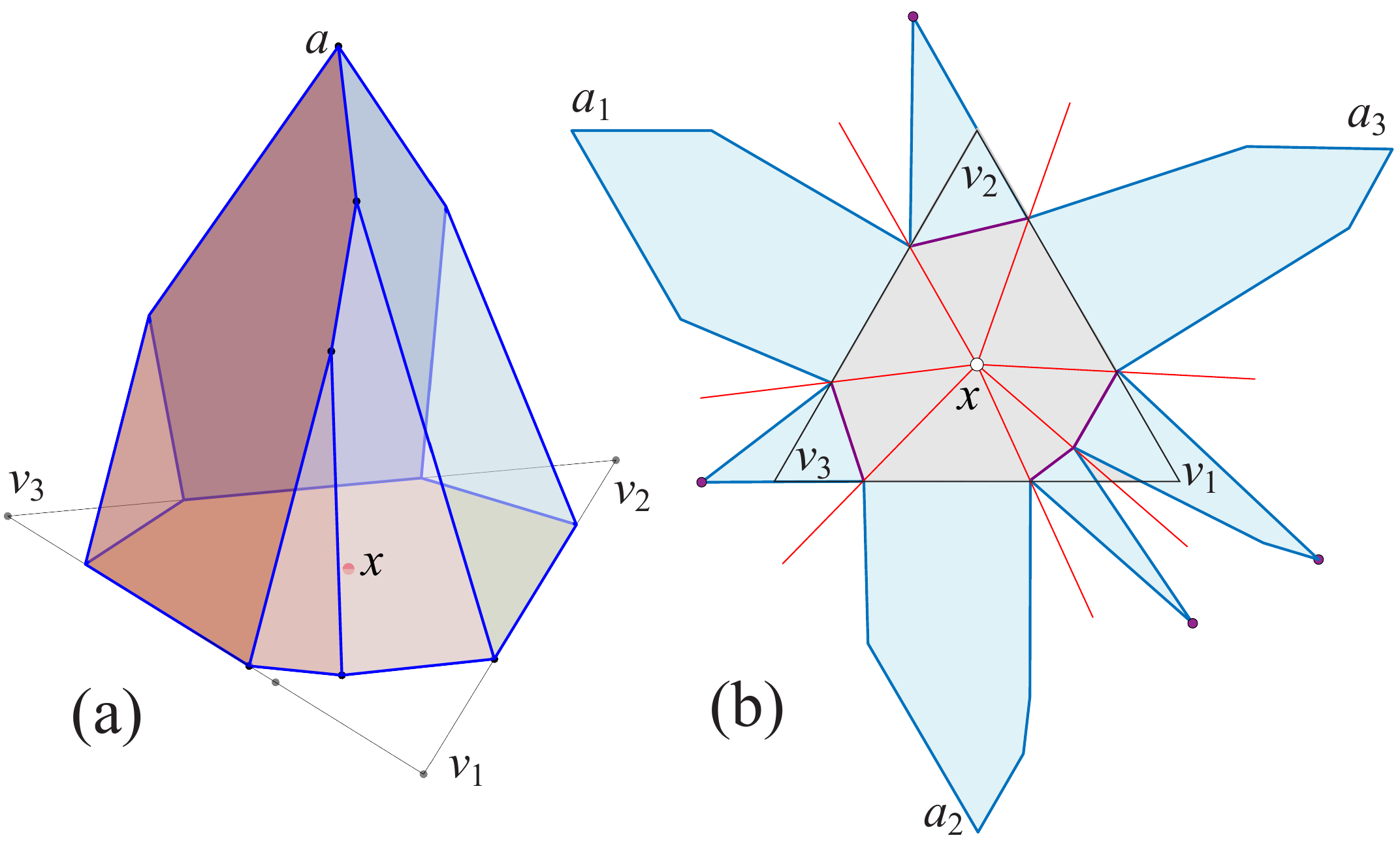}
\caption{(a)~$P$ is created from a regular tetrahedron by four vertex truncations.
$\C(x)$ consists of all non-base edges.
(b)~Source unfolding of $P$ from $x$. Bisectors shown red.}
\figlab{T123_3d2d}
\end{figure}

Returning to the need for precise tilts of the tuncation planes,
let $z$ be the point on the edge $a v_1$ through which the truncation plane passes,
creating a truncation triangle $z t_1 t_2$.
As indicated in Fig.~\figref{Tfour_3d},
the tilt is uniquely determined by the location of $z$:
the placement of $z$ determines $t_1,t_2$, and the edge $t_1 t_2$ determines $z$.

\begin{figure}[htbp]
\centering
\includegraphics[width=0.80\columnwidth]{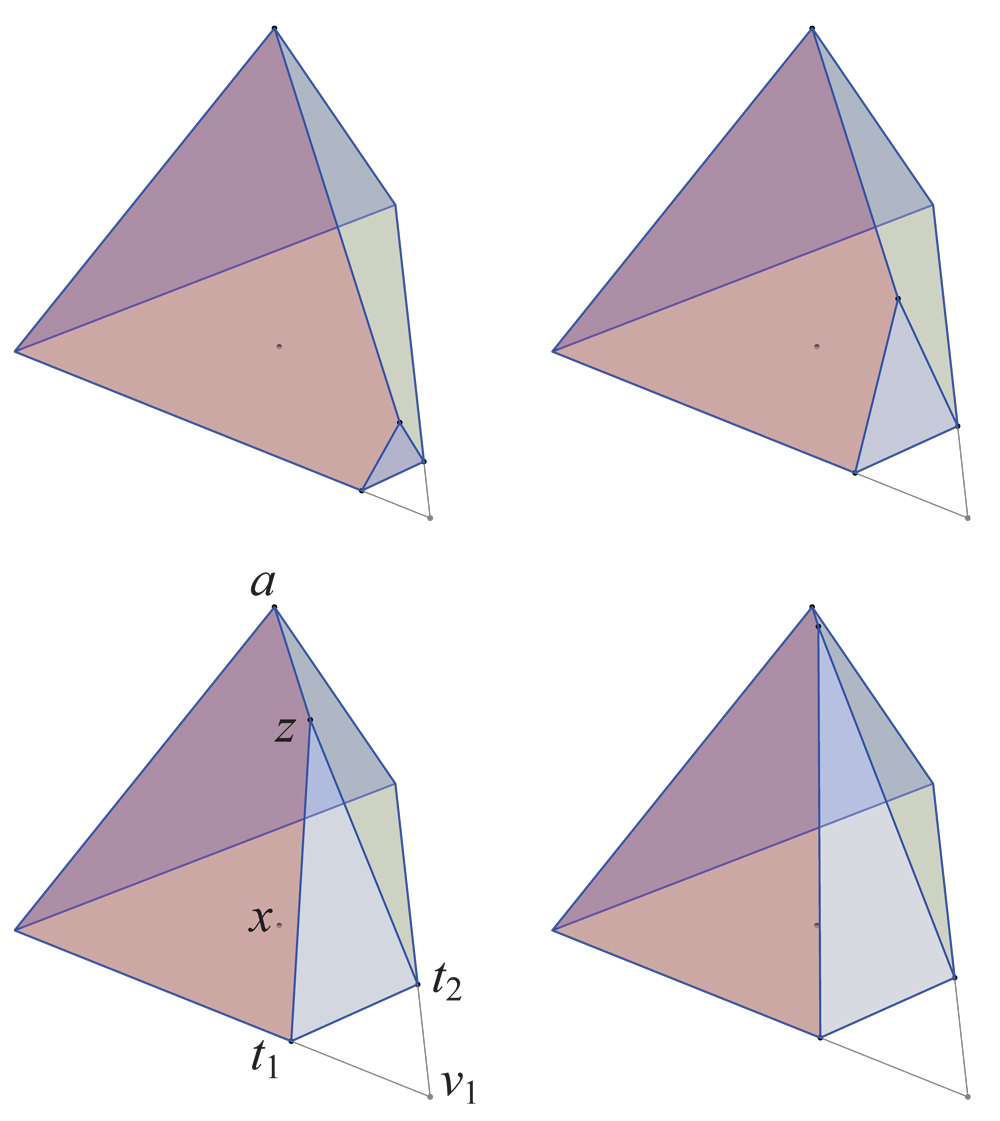}
\caption{The tilt of the truncation plane is determined by the position of $z$
on $a v_1$.}
\figlab{Tfour_3d}
\end{figure}

\medskip

Vertex truncations naturally increase the degree of a polyhedron vertex,
matching the degree of a node of $\T$. See, e.g., Fig.~\figref{k=8_nolab} in 
Section~\secref{Examples}.
One reason the proof of Theorem~\thmref{EveryTree} is so long is that
a degree-$2$ node of $\T$ cannot be created by the same basic truncation process.
Instead we found it necessary to partition the possible occurrences of degree-$2$ nodes
into four subcases (Fig.~\figref{FourCases}).


\section{Theorem~\thmref{EveryTree} Discussion}
\seclab{Theorem1Conclusions}
We mentioned in Section~\secref{Introduction} that Theorem~\thmref{EveryTree} leads
to an uncountable number of skeletal polyhedra.
This follows immediately from the freedom to place $z$ at any point interior to $a v_1$
in the construction detailed in Section~\secref{ConstructionDetails}.
We can be more quantitatively precise, as follows.

Assume that $\T$ is a cubic tree without degree-$2$ nodes, 
so it has $n$ leaves and $n-2$ ramification points.
Aside from 
one ramification point, which is chosen as the apex of the starting 
tetrahedron, all others are free to vary on their respective edges 
in our construction, which implies $n-3$ free parameters.
Because $\C(x)$ is skeletal, each ramification point of $\T$ is a vertex of $P$,
so $P$ has $V=2n-2$ vertices, and $n=V/2 +1$. 
The space ${\mathfrak P}_V$ of all convex polyhedra with $V$ vertices, up to isometries, has dimension $3V-6$ (see for example~\cite{lebedeva2022alexandrov}),
hence the starting tetrahedron provides another $6$ free parameters and
we have the next result.

\begin{prop}
\proplab{Uncountable}
The set of convex polyhedra admitting skeletal cut loci---and hence blooming edge-unfoldings---contains a subset of dimension $\geq V/2+ 4$ in the $(3V{-}6)$-dimensional space of all convex polyhedra with $V$ vertices, up to isometries. 
\end{prop}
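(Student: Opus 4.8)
The plan is to count free parameters in the construction underlying Theorem~\thmref{EveryTree} and compare against the dimension $3V-6$ of the full moduli space ${\mathfrak P}_V$. First I would set up the bookkeeping for a cubic tree $\T$ (every internal node of degree exactly $3$, no degree-$2$ nodes). A standard count for such trees gives the relation between the number of leaves $n$ and the number of ramification points: a cubic tree with $n$ leaves has exactly $n-2$ internal nodes. Since $\C(x)$ is required to be skeletal, the leaves and ramification points of $\T$ are all vertices of $P$, so I would tally $V = n + (n-2) = 2n-2$, giving $n = V/2 + 1$; this is the arithmetic already indicated in the preamble to the statement.

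Next I would isolate the sources of free parameters in the construction of Section~\secref{ConstructionDetails}. The construction starts from a tetrahedron, one of whose vertices is designated as the apex $a$ realizing a chosen root ramification point, and then produces the remaining ramification points by truncations, each of which introduces a vertex $z$ that is free to slide along its incident edge (as Fig.~\figref{Tfour_3d} makes explicit, $z$'s position uniquely determines the truncation plane's tilt, so $z$ contributes exactly one continuous degree of freedom). I would argue that, of the $n-2$ ramification points, the single root is pinned to the starting tetrahedron, leaving $n-3$ ramification points each contributing one free parameter, for a total of $n-3$ truncation parameters. Separately, the starting regular tetrahedron may be replaced by an arbitrary tetrahedron; the space of tetrahedra (up to isometry) is $6$-dimensional, since $3V-6 = 3\cdot 4 - 6 = 6$, contributing $6$ more parameters.

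I would then assemble the dimension lower bound. The parameters are genuinely independent: the $6$ tetrahedral parameters determine the initial shape, and the $n-3$ truncation slides move along distinct edges and can be varied independently without destroying the skeletal-cut-locus property, so long as each stays in the open interior of its edge (the open interval guaranteeing $\C(x)\subset\Sk(P)$ persists on a neighborhood). Summing gives a subset of dimension at least $6 + (n-3) = n+3 = (V/2 + 1) + 3 = V/2 + 4$ inside ${\mathfrak P}_V$, which is exactly the claimed bound. I would want to state that I am exhibiting a lower bound only, so the inequality $\geq$ is the correct form of the claim.

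The main obstacle is not the arithmetic but justifying that the parameters index genuinely distinct polyhedra and that the parameter-to-polyhedron map is (locally) injective with open image, so that the count is a true dimension and not an overcount. In particular I would need to argue that nearby choices of the truncation positions and of the tetrahedron do not accidentally coincide up to isometry, and that the constructed family forms a manifold piece of the stated dimension rather than a lower-dimensional subvariety; this requires that the construction of Section~\secref{ConstructionDetails} depends continuously and non-degenerately on these parameters, which is exactly the content I would be assuming from the proof of Theorem~\thmref{EveryTree}. A secondary, more routine point is confirming that the skeletal property is stable under small perturbations of each parameter, which follows because $z$ may be placed \emph{anywhere} interior to its edge and the bisector condition in Fig.~\figref{T123_3d2d}(b) holds on an open set.
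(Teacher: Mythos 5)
Your proposal follows essentially the same route as the paper's own proof: the identical bookkeeping for a cubic tree ($n-2$ ramification points, $V=2n-2$, so $n=V/2+1$), the same count of $n-3$ free sliding parameters for the non-apex ramification points, and the same $6$ extra parameters from varying the starting tetrahedron, summing to $V/2+4$. The paper's argument is exactly this parameter count, and, like you, it leaves the local injectivity and non-degeneracy of the parameter-to-polyhedron map implicit rather than proving it.
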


Our construction for trees without degree-$2$ nodes in Theorem~\thmref{EveryTree}
(see Sections~\secref{Overview} and \secref{ConstructionDetails})
results in a \emph{dome}, a convex polyhedron $P$ with a distinguished base face $Q$,
with every other face sharing an edge with $Q$.
It was already known that domes have edge-unfoldings~\cite[p.~325]{do-gfalop-07}, 
although the proof of non-overlapping for our domes is almost trivial---the source
unfolding does not overlap.

However, there are many other polyhedra with skeletal cut loci,
see, e.g., Fig.~\figref{DiPyramid}, Theorem~\thmref{Every-vertex},
Fig.~\figref{StackedPyramids}, and Section~\secref{Tapered}.
Which leaves us with this central open problem:
\emph{Characterize all convex polyhedra $P$ which admit skeletal cut loci},
i.e., characterize the \emph{amenable} convex polyhedra.
The remainder of the paper, before giving the main proof, 
addresses and partially answers this problem.

\medskip
Several natural questions now suggest themselves:
\begin{enumerate}[(1)]
\item For a fixed $P$, how many distinct points $x$ can lead to skeletal cut loci?
(Theorem~\thmref{Finite}).
\item Can all of $\Sk(P)$ for a given $P$ be covered by several cut loci? (Proposition~\propref{Degenerate}).
\item How common / rare are skeletal cut loci in the space of all convex polyhedra?
(Theorem~\thmref{Rare}).
\item Are there restrictions for the existence of skeletal cut loci?
(Proposition~\propref{Degenerate}, Theorems~\thmref{Finite} and~\thmref{Restriction}).
\end{enumerate}


\section{Existence of Several Skeletal Cut Loci}
\seclab{Several}

\noindent
In the first two questions in the list above, degenerate $P$ play a special role:

\begin{prop}
\proplab{Degenerate} 
\begin{enumerate}[(a)] 
\item There exists infinitely many points $x$ with $\C(x) \subset \Sk(P)$ 
if and only if $P$ is degenerate.
\item There exists two points $x_1,x_2$ on $P$ whose cut loci together cover $\Sk(P)$ 
if and only if $P$ is degenerate.
\end{enumerate}
\end{prop}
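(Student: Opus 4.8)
The plan is to treat the two biconditionals separately, handling each ``degenerate $\Rightarrow$'' (``if'') implication by a single explicit construction on a doubly-covered polygon, and reducing the ``only if'' implications to an already-available finiteness result and to a combinatorial-geometric obstruction.

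\emph{Part (a).} For the ``only if'' direction I would simply invoke Theorem~\thmref{Finite}: a non-degenerate $P$ admits at most finitely many $x$ with $\C(x)\subset\Sk(P)$, so infinitely many such $x$ already forces $P$ to be degenerate. For the ``if'' direction, let $P$ be the double cover of a convex polygon $Q$, so that $\Sk(P)$ is the rim $\partial Q$, and let $\sigma$ be the isometric involution exchanging the two copies $Q^+,Q^-$ and fixing $\partial Q$ pointwise. I would fix $x$ in the relative interior of an edge $e$ of $Q$ and first establish $d(x,y)=|xy|$ for the planar chord $|xy|$: the folding map $P\to Q$ (the identity on each copy) is $1$-Lipschitz, so every surface path from $x$ to $y$ projects to a $Q$-path of no smaller length, while the chord realizes $|xy|$. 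Consequently a point $y$ in the open top face has the unique shortest path given by its chord in $Q^+$ and lies off $\C(x)$, whereas a rim point $y\notin e$ has the two distinct equal-length chords in $Q^+$ and $Q^-$ and hence lies in $\C(x)$; a point of $\mathrm{int}(e)$ has its single chord along $e$. This identifies $\C(x)=\partial Q\setminus\mathrm{int}(e)\subset\Sk(P)$, and since $x$ ranges over a whole edge we obtain infinitely many skeletal cut loci.

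\emph{Part (b), ``if''.} Reusing the computation above, I would choose $x_1,x_2$ in the relative interiors of two distinct edges $e_1\ne e_2$ of $Q$. Then $\C(x_i)=\partial Q\setminus\mathrm{int}(e_i)$, and since $\mathrm{int}(e_1)\cap\mathrm{int}(e_2)=\varnothing$, their union is all of $\partial Q=\Sk(P)$.

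\emph{Part (b), ``only if''.} This is the hard direction. Assume $P$ is non-degenerate and $\C(x_1)\cup\C(x_2)\supseteq\Sk(P)$. The first, combinatorial, observation is that each $\C(x_i)$ is a tree, so the skeleton edges lying entirely in $\C(x_i)$ form a forest; were every edge covered \emph{as a whole} edge, $\Sk(P)$ would be a union of two forests, forcing $|E|\le 2|V|-2$ and contradicting $|E|\ge 2|V|-1$. This disposes of all ``edge-rich'' polyhedra but leaves two gaps for sparse ones: an edge may be covered only \emph{partially}, split between the two cut loci, and even under whole-edge covering the bound is vacuous when $|E|\le 2|V|-2$ (as for the cube). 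To close these gaps I would exploit the geometric rigidity of cut loci recorded in Section~\secref{Preliminaries}: a cut locus can leave the interior of a polyhedron edge only at a ramification point, since degree-$2$ nodes occur only at vertices of $P$; and a whole sub-segment of an edge $uv$ can belong to $\C(x)$ only when, developing the two incident faces, the two images of $x$ are mirror-symmetric across the line through $uv$. I expect the main obstacle to be precisely this case of sparse polyhedra with partially covered edges: one must show that the reflection constraints imposed along the covered sub-arcs, together with the bisector branching structure at the shared ramification points and the total curvature $\sum_v\omega(v)=4\pi$, cannot all be met by just two sources unless the curvature degenerates onto a single cycle---that is, unless $P$ is a doubly-covered polygon. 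Making this curvature accounting quantitative, rather than merely generic, is where the real work lies.
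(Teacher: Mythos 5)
Your part~(a) and the ``if'' half of part~(b) are correct and essentially the paper's own route: the paper likewise dismisses the degenerate direction by taking sources on the rim of a doubly-covered polygon (your explicit computation that $\C(x)=\partial Q\setminus\mathrm{int}(e)$ via the $1$-Lipschitz folding map is a more detailed version of what the paper treats as immediate), and it likewise gets finiteness for non-degenerate $P$ from Theorem~\thmref{Finite}. The problem is the ``only if'' half of part~(b), which you correctly identify as the hard direction but do not actually prove. Your forest-counting argument (two trees cover at most $2|V|-2$ whole edges) only handles edge-rich polyhedra, fails already for the cube, and says nothing about edges covered partially by the two loci; and your proposed repair---a quantitative curvature accounting showing the reflection constraints force degeneracy---is stated as a program (``where the real work lies''), not an argument. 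As written, the biconditional in (b) is therefore not established.

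The missing idea is that no curvature accounting is needed: the obstruction is purely topological. Since $P$ is non-degenerate, a cut locus contained in (or covering part of) $\Sk(P)$ cannot be a path---by the fifth property listed in Section~\secref{Preliminaries}, path cut loci occur only on doubly-covered polygons with the source on the rim---so $\C(x_1)$ has a ramification point of degree $d\ge 3$, whose $d$ incident arcs meet at least three faces of $P$. The paper then observes that the edges of those faces \emph{not} lying in $\C(x_1)$ contain a cycle of $\Sk(P)$ entirely disjoint from $\C(x_1)$. That cycle would have to be covered by $\C(x_2)$ alone, and a tree cannot contain a closed curve. This one observation simultaneously kills your two problem cases: it is insensitive to whether other edges are covered wholly or partially (the exhibited cycle avoids $\C(x_1)$ as a point set, so splitting it between the two loci is impossible), and it makes no use of edge counts, so sparse skeletons like the cube's pose no difficulty. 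Your instinct to exploit the tree structure of cut loci was right; the step you lacked was localizing at a ramification point to manufacture a cycle disjoint from one of the two trees, rather than counting edges globally.
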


The finitness claim in Theorem~\thmref{Finite} is then a corollary of claim (a).
Before arguing for a quantitative statement of this theorem, we make two
observations. First, for $P$ degenerate, $\Sk(P)$ is the rim of $P$,
and for any $x$ on the rim, $\C(x)$ is a subset of $\Sk(P)$.
So one direction (a) of the proposition is trivial.
Second, a special case asks whether it could be that each
vertex $v$ of $P$ leads to a skeletal cut locus $\C(v)$.
The answer is \textsc{yes}, realized, for example, by the regular octahedron.

\begin{thm}
\thmlab{Finite}
For any non-degenerate $P$ with $E$ edges,
there are at most $2 {E \choose 2}$ flat points $x$ of $P$ such that $\C(x) \subset \Sk(P)$.
\end{thm}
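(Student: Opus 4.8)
The plan is to treat this as a purely quantitative refinement: since Proposition~\propref{Degenerate}(a) already gives finiteness for non-degenerate $P$, it suffices to exhibit an injection from the valid sources into a set of size $2\binom{E}{2}$. The first step is structural. If $x$ is flat then $x$ is not a vertex, so by property~\ref{i} every vertex of $P$ lies on $\C(x)$. Moreover, since $\C(x)\subset\Sk(P)$, any ramification point (degree $\geq 3$) must sit at a vertex of $P$: an interior point of a $P$-edge offers only two skeleton directions, too few for degree $\geq 3$. Degree-$2$ nodes are vertices of $P$ by definition, and leaves are vertices by property~\ref{i}. Hence every node of $\C(x)$ is a vertex of $P$, each edge of $\C(x)$ runs (containing no interior $P$-vertex) along a single edge of $P$, and therefore $\C(x)$ is a spanning tree of $\Sk(P)$. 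In particular it has at least two leaves.

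The heart of the argument is an angle-bisector rigidity at the leaves. Fix a leaf $\ell$ of $\C(x)$, let $e_\ell$ be the unique cut-locus edge at $\ell$ (an edge of $P$), and let $\gamma_\ell$ be the unique geodesic from $x$ to $\ell$ (unique by property~\ref{ii}, since removing $\ell$ leaves one component). Cutting along $e_\ell$ and flattening a neighborhood of $\ell$ produces a planar wedge of apex angle $\theta_\ell$, the cone angle at $\ell$, bounded by the two copies of $e_\ell$. For the points just beyond $\ell$ along $e_\ell$ to belong to the cut locus (two equal shortest paths), the developed image of $x$ must be equidistant from the two boundary rays of this wedge; a short computation forces $\gamma_\ell$ to bisect $\theta_\ell$, i.e.\ to leave $\ell$ along the ray opposite $e_\ell$. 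Since at a cone point the ``opposite'' bisector ray is unique, $x$ must lie on a geodesic ray $\rho_\ell$ emanating from $\ell$ whose initial direction is determined by $e_\ell$ and the intrinsic geometry of $P$ alone, independently of $x$.

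To pin down and charge $x$, I would select two leaves $\ell_1,\ell_2$ of $\C(x)$ (say the two lexicographically first), with necessarily distinct incident edges $e_{\ell_1}\ne e_{\ell_2}$ since a spanning tree of a non-degenerate $P$ is not a single edge. Then $x\in\rho_{\ell_1}\cap\rho_{\ell_2}$, and crucially each $\rho_{\ell_i}$ is a fixed geodesic ray depending only on the directed edge $(\ell_i,e_{\ell_i})$, so the intersection set depends only on the pair $\{e_{\ell_1},e_{\ell_2}\}$ and not on $x$. I would then charge $x$ to this unordered pair of edges together with a binary label recording which of the (at most two) intersection points it is. With $\binom{E}{2}$ choices of edge pair and at most two admissible intersection points per pair, this yields an injection into a set of size $2\binom{E}{2}$, as claimed.

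The main obstacle is controlling the number of intersection points so as to land \emph{exactly} on the constant $2\binom{E}{2}$: on a convex polyhedron two geodesic rays can in principle cross more than once, so the delicate work is to show that the two determined rays $\rho_{\ell_1},\rho_{\ell_2}$ admit at most two points capable of serving as flat sources, and that the assignment $x\mapsto(\{e_{\ell_1},e_{\ell_2}\},\text{label})$ is genuinely injective (no two distinct valid sources sharing both chosen edges and the same label). A secondary point requiring care is verifying the leaf-bisector computation, and checking that it survives unchanged in the boundary case where $x$ lies in the relative interior of an edge of $P$ rather than in a face interior.
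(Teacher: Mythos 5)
Your proposal is, at its core, the same argument as the paper's: the bisection property~(iv) applied at a leaf $\ell$ of $\C(x)$ shows that the unique geodesic segment from $x$ to $\ell$, together with the incident cut-locus edge $e_\ell$, bisects the complete angle at $\ell$, so $x$ must lie on the intrinsically determined ``straight extension'' ray of the $P$-edge $e_\ell$ beyond $\ell$; the count then comes from intersecting pairs of such rays. (The paper runs this at the vertices of the face(s) $F$ containing $x$---which are automatically leaves of $\C(x)$, since the straight segment inside $F$ is the unique shortest path to each of them---rather than at your two lexicographically first leaves; that difference is immaterial. Your preliminary spanning-tree observation is correct but not needed.)

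However, the step you explicitly leave open---showing that at most two intersection points of $\rho_{\ell_1}$ and $\rho_{\ell_2}$ can serve as sources---is exactly the step that does the quantitative work, so as written the proposal has a genuine gap. The paper closes it by asserting that only the \emph{first} intersection of two extensions beyond their endpoints can be $x$; the justification (implicit in the paper) is that the sub-segments of $\rho_{\ell_1}$ and $\rho_{\ell_2}$ from the leaves to $x$ are the reverses of the shortest paths $\gamma_{\ell_1},\gamma_{\ell_2}$ emanating from $x$, and two shortest paths from a common source cannot cross at a point interior to both (splicing at such a crossing would produce an equally short path with a corner); hence no intersection point of the two rays may precede $x$ along both rays. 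This is the missing lemma your charging scheme needs. Two further remarks. First, your opening appeal to Proposition~\propref{Degenerate}(a) for finiteness is circular relative to the paper's logic: the nontrivial direction of that proposition is deduced \emph{from} Theorem~\thmref{Finite}; fortunately your injection, once completed, yields finiteness on its own. Second, your worry about landing exactly on $2\binom{E}{2}$ is legitimate: the ``no earlier crossing along both rays'' condition only restricts valid sources to an antichain of intersection points, which is not obviously of size at most two per pair of edges, and the paper's own bookkeeping of the constant (two extensions per edge versus ``$E$ straight extensions'') is itself loose; what both arguments robustly deliver is an explicit $O(E^2)$ bound, i.e., finiteness with a polynomial constant.
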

\begin{proof}  
Assume there exists a flat point $x$ of $P$, such that $\C(x) \subset \Sk(P)$.
Then $x$ belongs to one or two faces, $x \in F_j$, with $j \in \{1,2\}$.
Let $F$ denote either $F_1$ if $j=1$, or the union $F_1\cup F_2$ if $j=2$.

Denote by $v_i$, $i \geq 3$, the vertices of $F$, and by $e_i$ 
the edge of $\C(x)\subset \Sk(P)$ incident to $v_i$ and not included in $F$.
Finally, denote by $\g_i$ the geodesic segment from $x$ to $v_i$.

Because $e_i \subset C(x)$, $\g_i$ and $e_i$ together bisect the complete angle at $v_i$, 
by the bisection property~\ref{iv} of the cut locus.
In other words, the straight extensions $E_i$ into $F$ 
by all $e_i$ are concurrent: they intersect at the point $x$.

Now we count all the possible locations $x$ over all edges of $P$.
Consider a pair of edges $e_i,e_j$.
Each has possible edge extensions from each endpoint.
So the edge extensions are geodesic rays.
Two such straight extensions could intersect several times on $P$.
However, only their first intersection beyond the endpoints is a possible
location for $x$.
Each edge has two extensions, one from each endpoint, and
because there are $E$  straight extensions of the $E$ edges of $P$, there are at most $2 {E \choose 2}$
possible locations for $x$.
\end{proof}

\medskip
\noindent
So this theorem settles the other direction of Proposition~\propref{Degenerate}(a).

Now we prove Proposition~\propref{Degenerate}(b),
that only degenerate $P$ allows covering $\Sk(p)$ by 
only two cut loci.

\medskip
\noindent
\begin{proof}
If $P$ is degenerate then any two points on its rim, but not on the same edge, satisfy the conclusion.

Assume now that $P$ is non-degenerate and  $x \in P$ such that $\C(x) \subset \Sk(P)$.
Then $\C(x)$ has at least one ramification point of degree $d \geq 3$,
as it is known that only degenerate $P$ support path cut loci. 
The $d$ edges of $\C(x)$ lie in at least $3$ faces of $P$.
Then there exists a cycle in $\Sk(P)$, formed by edges of those faces which are not in $\C(x)$.
But such a cycle cannot be covered by only one other cut locus, which is a tree.
\end{proof}

\medskip

\begin{ex}
\exlab{octahedron}
Consider a regular dipyramid $P$ over a convex $2m+1$-gon;
see Fig.~\figref{DiPyramid}. 
One can see that, for every midpoint $x$ of a ``base edge'' $e$, 
$\C(x)$ is included in $\Sk(P)$.
More precisely, $\C(x)$ contains all base edges other than $e$, 
and the two ``lateral edges'' opposite to $x$.
In particular, this provides $2m+1$ such points, for $V=2m+3$ vertices.
\end{ex}

\begin{figure}[htbp]
\centering
\includegraphics[width=0.8\columnwidth]{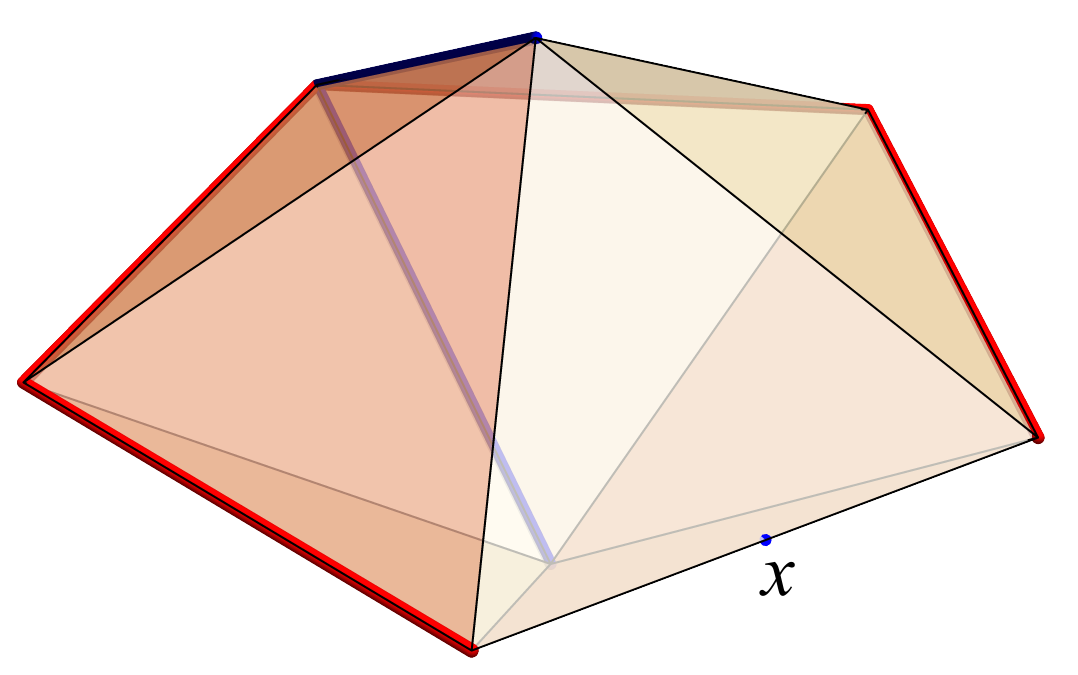}
\caption{$P$: pentagonal dipyramid. $\C(x)$: red and blue edges of $\Sk(P)$.}
\figlab{DiPyramid}
\end{figure}


\section{Absence of Skeletal Cut Loci}
\seclab{Absence of Skeletal Cut Loci}

\medskip
The following lemma will explain a condition in Theorem~\thmref{Rare} to follow.

\begin{lem}
\lemlab{EveryTetra}
Every tetrahedron $T$ has four points $x \in T$ such that 
$\C(x) \subset \Sk(T)$.
\end{lem}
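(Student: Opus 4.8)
The plan is to treat each of the four vertices of $T$ in turn as the intended degree-$3$ ramification point, and for each to construct a point $x$ whose cut locus is the tripod formed by the three edges incident to that vertex. Fix one vertex $a$, let $Q=v_1v_2v_3$ be the opposite (base) face, and cut the three lateral edges $av_1,av_2,av_3$. Unfolding the three lateral faces outward across the base edges $v_1v_2$, $v_2v_3$, $v_3v_1$ produces a planar hexagon $v_1\,a_{12}\,v_2\,a_{23}\,v_3\,a_{31}$, where $a_{12},a_{23},a_{31}$ are the three images of $a$. I would take $x$ to be the circumcenter of the triangle $a_{12}a_{23}a_{31}$, i.e.\ the unique point with $|x\,a_{12}|=|x\,a_{23}|=|x\,a_{31}|$, and then show that with this $x$ the cut set coincides with $\C(x)$.

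Next I would verify that this $x$ places all three lateral edges in $\C(x)$. The edge $av_1$ appears in the unfolding as the two segments $v_1a_{12}$ and $v_1a_{31}$, of equal length $|av_1|$ and emanating from the common point $v_1$; hence the point at arclength $t$ along $av_1$ has its two planar images at distance $t$ from $v_1$ on these segments, and $x$ is equidistant from both images for every $t$ exactly when $x$ lies on the bisector of $\angle a_{12}v_1a_{31}$, which (equal lengths from $v_1$) is precisely the condition $|x\,a_{12}|=|x\,a_{31}|$. The analogous statements at $v_2$ and $v_3$ read $|x\,a_{12}|=|x\,a_{23}|$ and $|x\,a_{23}|=|x\,a_{31}|$, so the three bisection conditions of property~\ref{iv} collapse to the single circumcenter condition and are met simultaneously. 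Thus the two planar shortest paths to each point of each lateral edge are equal, putting all three edges into $\C(x)$; since these edges already join $a$ to all three base vertices and every leaf of a cut locus is a vertex of $T$ (property~\ref{i}), the tree $\C(x)$ can carry no further arcs and is exactly the desired tripod matching the three edges at $a$.

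The main obstacle is to show the construction is legitimate for \emph{every} tetrahedron, not just the symmetric one: concretely, that $x$ lands in the interior of the base face and that the two unfolded segments to each edge point are genuine (globally shortest) geodesics. For the former I would use the spherical-triangle inequalities satisfied by the three face angles $\beta_1,\alpha_2,\alpha_3$ meeting at the convex vertex $v_1$ (each less than the sum of the other two): a short angle computation shows the bisector ray of $\angle a_{12}v_1a_{31}$ points strictly into the base sector at $v_1$, and likewise at $v_2,v_3$, forcing the common point $x$ into the base triangle. For the latter I would invoke the non-overlapping of the source unfolding, whose boundary is the cut locus: checking that the hexagon is a simple polygon, star-shaped with respect to $x$, certifies that it is the source unfolding and hence that the three lateral edges constitute $\C(x)$.

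Finally I would note that distinct choices of apex vertex produce distinct leaf sets, hence distinct cut loci and four distinct points $x$; since the circumcenter is unique for each fixed apex, this yields exactly the claimed four points.
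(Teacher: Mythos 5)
Your strategy (unfold the three lateral faces into the base plane, take $x$ to be the circumcenter of the three apex images $a_{12},a_{23},a_{31}$, and check the bisection conditions) is genuinely different from the paper's proof, and the bookkeeping in your second paragraph is correct as far as it goes. But there is a real gap at the crux: showing that the two unfolded straight segments to each point of a lateral edge are \emph{globally shortest} paths, which is what membership in $\C(x)$ requires. You propose to certify this by ``checking that the hexagon is a simple polygon, star-shaped with respect to $x$,'' invoking the non-overlap of the source unfolding. That invocation runs backwards: the known theorem is that cutting $\C(x)$ produces a simple, star-shaped unfolding; what you need is the converse---that if cutting some vertex-spanning tree $Y$ yields a simple polygon, star-shaped about the image of $x$, all of whose boundary identifications are equidistant from that image, then $Y=\C(x)$. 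That converse is true, but it is itself a theorem requiring proof (e.g., a development argument showing the planar distance from the image of $x$ pulls back to the intrinsic distance from $x$); nothing in properties~\ref{i}--\ref{iv} delivers it as a black box. Moreover, you never actually perform the two ``checks'': simplicity of the hexagon for an \emph{arbitrary} tetrahedron (provable by an angular-sector argument at the shared base vertices), and star-shapedness about the circumcenter (equivalent to each $a_{ij}$ lying in the cone from $x$ through $v_i$ and $v_j$), which is not obvious. This verification is exactly where the difficulty lives; it is the analogue of what Lemma~\lemref{TruncEdges} spends most of its length establishing in the paper's main construction.

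A secondary, fixable gap: your argument that $x$ lands in the base face shows that each bisector \emph{ray} at $v_i$ points into the base sector, but the circumcenter is a priori only the common point of the three bisector \emph{lines}, and concurrence of lines through the vertices does not by itself place the point inside the triangle (external cevians can concur outside). The fix is that two internal cevian rays from distinct vertices always meet inside the closed triangle, and that meeting point is equidistant from all three $a_{ij}$, hence is the circumcenter. By contrast, the paper's proof needs none of this machinery: it takes $x_i$ to be the ramification point of $\C(v_i)$, notes by property~\ref{ii} that there are then three geodesic segments from $x_i$ to $v_i$, so $v_i$ is a degree-$3$ ramification point of $\C(x_i)$, and concludes by properties~\ref{i} and~\ref{iii} that $\C(x_i)$ consists of the three edges at $v_i$ (each arc of $\C(x_i)$ is a shortest path from $v_i$ to another vertex, and an edge of a convex polyhedron is the unique shortest path between its endpoints). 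That duality argument requires no explicit construction, no unfolding, and no shortest-path verification; your route, if completed, would in effect re-prove the pyramid base case of the construction behind Theorem~\thmref{EveryTree}.
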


\begin{proof}
For each vertex $v_i$, denote by $x_i$ the ramification point of $\C(v_i)$.
It follows, from cut locus property 
(ii), that that $v_i$ is the ramification point of $\C(x_i)$. 
Then, by (i) and (iii),
$\C(x_i)$ consists of the three edges incident to $v_i$.
\end{proof}

\medskip
The next theorem establishes the rarity of skeletal cut loci.
In the statement, by \emph{almost all} we mean ``all in an open and dense set''
in ${\mathfrak P}_V$.

\begin{thm}
\thmlab{almost-all-no-CL-Sk}
\thmlab{Rare}
For almost all convex polyhedra $P$ with $V >4$ vertices, there exists no point $x \in P$ with 
$\C(x) \subset \Sk(P)$.
\end{thm}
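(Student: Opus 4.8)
The plan is to show that the set of polyhedra admitting a skeletal cut locus is contained in a set that is nowhere dense and closed (equivalently, its complement contains an open dense set), so "amenable" polyhedra are confined to a measure-zero, lower-dimensional stratum once $V>4$. The organizing idea is to exploit the rigidity already extracted in the proof of Theorem~\thmref{Finite}: if $\C(x)\subset\Sk(P)$ and $x$ is a flat point, then the edges $e_i$ of the cut locus incident to the boundary vertices $v_i$ of the face(s) containing $x$ must, together with the geodesics $\g_i$, bisect the full angle at each $v_i$; equivalently, the straight extensions $E_i$ of these skeletal edges into $F$ are \emph{concurrent} at $x$. Concurrency of three or more lines is a codimension-$\geq 1$ condition, and for $V>4$ the cut locus must contain a ramification point of degree $\geq 3$ (only degenerate $P$ admit path cut loci, used already in Proposition~\propref{Degenerate}), so at least three such extensions are forced to meet. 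I would therefore set up the argument as: (1) fix the combinatorial type of $P$ (finitely many types for fixed $V$); (2) within each type, parametrize the polyhedra by ${\mathfrak P}_V$ and write the concurrency/bisection requirement as a system of equations in the vertex coordinates; (3) show that for each fixed choice of ``which skeletal edges form the cut locus'' and ``which face(s) $x$ lies in,'' the locus of polyhedra satisfying that system is a proper analytic subvariety of ${\mathfrak P}_V$, hence closed with empty interior.

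The second ingredient handles the possibility that $x$ is itself a vertex rather than a flat point, and more generally that the distinguished point $x$ ranges over infinitely many positions. Here Theorem~\thmref{Finite} is the crucial lever: for any fixed non-degenerate $P$ there are at most $2\binom{E}{2}$ flat points $x$ with $\C(x)\subset\Sk(P)$, and the vertex-apex case contributes only the $V$ vertices. Thus ``$P$ is amenable'' is equivalent to ``at least one of finitely many candidate points $x$ (determined combinatorially by pairs of edge-extensions, plus the vertices) yields a skeletal cut locus.'' Because the candidate set is finite and each candidate imposes a closed analytic condition on $P$, the amenable set is a \emph{finite} union of such conditions, one per combinatorial type and per candidate-selection. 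A finite union of proper closed subvarieties is again closed and nowhere dense, so its complement is open and dense; this is exactly the ``almost all'' conclusion in the stated sense.

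The step I expect to be the main obstacle is showing the concurrency/bisection system is \emph{nontrivial}, i.e., that it genuinely cuts down dimension rather than being satisfied identically. One must exhibit, within each relevant combinatorial type with $V>4$, at least one polyhedron for which the condition fails, so that the subvariety is proper. I would argue this by a perturbation: starting from any candidate configuration, I perturb the position of a single vertex $v_k$ (or tilt one face) so that one extension $E_k$ no longer passes through the common point $x$ while the others are held, which breaks concurrency; the technical care lies in verifying that such a perturbation stays inside ${\mathfrak P}_V$ (remains convex and non-degenerate) and that it actually moves the relevant geodesic $\g_k$ away from the bisecting direction. The subtlety is that moving a vertex also moves the intrinsic metric and hence all the geodesics $\g_i$ simultaneously, so I would isolate a perturbation direction affecting the angle at a chosen $v_k$ to first order while the concurrency point $x$ is pinned by the other edges, guaranteeing a first-order violation. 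The condition $V>4$ enters precisely because Lemma~\lemref{EveryTetra} shows \emph{every} tetrahedron is amenable, so the excluded degenerate stratum cannot be avoided at $V=4$; for $V>4$ there is enough freedom in ${\mathfrak P}_V$ to realize the breaking perturbation.

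Finally, I would assemble the pieces: the amenable set equals a finite union $\bigcup (\text{type})\times(\text{candidate selection})$ of proper real-analytic subvarieties of ${\mathfrak P}_V$; each is closed with empty interior by the perturbation argument; the finite union is therefore closed and nowhere dense; hence its complement is open and dense, proving that almost all $P$ with $V>4$ have no skeletal cut locus.
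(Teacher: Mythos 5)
Your proposal is correct and follows essentially the same route as the paper's proof: both hinge on the concurrency/bisection condition extracted from the proof of Theorem~\thmref{Finite}, the finiteness of candidate source points that theorem provides, and a vertex-perturbation argument (enabled by $V>4$, since Lemma~\lemref{EveryTetra} makes tetrahedra unavoidable exceptions) to break concurrency generically. Your packaging via finite unions of proper closed analytic subvarieties is just a more formal organization of the paper's direct case analysis (face-interior, edge-interior, and vertex sources) plus perturbation, and it correctly identifies the same key subtlety the paper resolves---that perturbing one vertex moves several edges and geodesics at once, which is why one needs independent perturbations of vertices outside the face containing $x$.
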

\noindent Note that Lemma~\lemref{EveryTetra} establishes the need for $V > 4$.

\medskip
\noindent
\begin{proof}
Notice first that almost all convex polyhedra $P$ are non-degenerate.

Assume, for the simplicity of the exposition, that every face of $P$ is a triangle and $\Sk(P)$ is a cubic graph.

\begin{description}
\item[Case 1.] 
Assume there exists a flat point $x$ interior to some face $F$ of $P$, such that $\C(x) \subset \Sk(P)$.

Repeating the notation in Theorem~\thmref{Finite},
denote by $v_i$, $i=1,2,3$, the vertices of $F$, and by $e_i$ the edges of $P$ 
incident to $v_i$ and not included in $F$.
Moreover, denote by $\g_i$ the geodesic segment from $x$ to $v_i$.

As in Theorem~\thmref{Finite},
it follows that $e_i \subset C(x)$ so, together, $\g_i$ and $e_i$ bisect the complete angle at $v_i$.
In other words, the straight extensions $E_i$ into $F$ by all the $e_i$
are concurrent: they all intersect at the same point.

Now we perturb the vertices of $P$ to destroy this concurrence.
If $P$ were a tetrahedron, then perturbing the apex would 
simultaneously move the edges incident to it.
But the assumption that $V > 4$ means that 
there are at least two vertices outside the $3$-vertex face $F$ containing $x$.
Perturbing these two vertices independently
moves the edges incident to $F$ independently, breaking the concurrence at $x$.

Because there are at most finitely many such points $x$ by Theorem~\thmref{Finite}, the conclusion follows in this case.

\item[Case 2.]  Assume there exists a flat point $x$ interior to some edge $e$ of $P$, 
such that $\C(x) \subset \Sk(P)$.
Denote by $v_i$, $i=1,2$, the vertices of $e$, and by $e_i$ the edges of $P$ 
incident to $v_i$ included in $\C(x)$.
As above, it follows that the straight extensions of $e_1,e_2$ coincide with $e$.
Now, small perturbations of the vertices of $P$ destroy this coincidence.
Note that if $e,e_1,e_2$  form a triangle, then $e_1,e_2$ will move together.
But still, perturbations at other vertices of $P$ (not $v_1,v_2, e_1\cap e_2$) will destroy the concurrence.

\item[Case 3.]
Assume finally there exists a vertex $v$ of $P$, such that $\C(v) \subset \Sk(P)$.
Here we obtain again that the straight extensions of two edges 
contain (other) edge-pair extensions, and small perturbations of the vertices of $P$ destroy this coincidence.
\end{description}
\end{proof}

\medskip

We mentioned the simple fact that, for the regular octahedron, 
for every vertex $v$, $\C(v)$ is skeletal.
In the next section we detail the special conditions such polyhedra must satisfy.


\section{Every Vertex a Skeletal Source}
\seclab{EveryVertex}
By Theorem~\thmref{Rare}, few convex polyhedra $P$ have a point $x$ with $\C(x) \subset \Sk(P)$.
So assuming that every vertex of $P$ has this property should yield some exceptional polyhedra.

\begin{thm}
\thmlab{Every-vertex}
Assume that every vertex of $P$ has a skeletal cut locus.
Then the following statements hold.
\begin{enumerate}
\item Every face of $P$ is a triangle.
\item Every vertex of $P$ has even degree in $\Sk(P)$.
\item The edges at every vertex $v$ split the complete angle at $v$ into evenly many sub-angles, every two opposite such angles being congruent.
\item If, moreover, every vertex of $P$ has degree $4$ in $\Sk(P)$ then $P$ is an octahedron: 
\begin{itemize}
\item with three planar symmetries, and
\item all faces of which are acute congruent (but not necessarily equilateral) triangles.
\end{itemize}
\end{enumerate}
\end{thm}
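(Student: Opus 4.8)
First I would record two structural facts that follow at once from $\C(u)\subset\Sk(P)$ for a source $u$. Since $\C(u)$ lies in the $1$-skeleton, every branch point of the tree $\C(u)$ sits where edges of $P$ meet, i.e.\ at a vertex of $P$; combined with property~\ref{i} this shows that the nodes of $\C(u)$ are \emph{exactly} the vertices of $P$ other than $u$, and each tree-edge is a full edge of $P$, so $\C(u)$ is a spanning tree of $\Sk(P)\setminus\{u\}$ with $V-2$ edges. Next, an edge $e$ incident to $u$ cannot meet $\C(u)$ in its interior: distinct edges of $P$ meet only at vertices, the whole of $e$ is not in $\C(u)$ since shortest paths near $u$ are unique, and hence $\C(u)\cap\mathrm{int}(e)=\emptyset$. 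Thus each edge at a source is a shortest path to its far endpoint, and since \emph{every} vertex is a source, every edge of $P$ is a shortest path between its endpoints. The engine for everything else is the local picture from properties~\ref{ii} and~\ref{iv}: writing $m(u,w)$ for the number of shortest paths between $u$ and $w$, at any vertex $w$ the $m(u,w)$ geodesics from $u$ and the $m(u,w)$ skeletal cut-edges of $\C(u)$ at $w$ alternate around $w$, each cut-edge bisecting the sector between its two flanking geodesics; equivalently those two geodesics are mirror images across the cut-edge.

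\textbf{Parts 2 and 3 (the angle structure at a vertex).} Fix $v$ with incident edges $e_1,\dots,e_d$ in cyclic order and face angles $\a_1,\dots,\a_d$, so $\sum_i\a_i=\q_v<2\pi$. For a neighbour $u_i$ the edge $e_i=vu_i$ is a shortest path, hence one of the geodesics of $\C(u_i)$ arriving at $v$, and the alternating/mirror description pins the cut-edges of $\C(u_i)$ at $v$ to half-angle positions determined by $e_i$. The plan is to run this for every neighbour and reconcile the mirror relations so that they constrain \emph{all} of $e_1,\dots,e_d$, not merely the cut-edges of one source. The target is that the edges at $v$ split into $d/2$ \emph{antipodal} pairs, each pair cutting $\q_v$ into two equal halves; this says precisely that $d$ is even (Part 2) and that opposite face angles are congruent, $\a_i=\a_{i+d/2}$ (Part 3). \emph{This reconciliation is the main obstacle of the whole theorem:} a single source only controls its own cut-edges, so one must argue, using the mirror symmetry of flanking geodesics together with the fact that every edge is simultaneously a geodesic for each of its two endpoints, that the union of these local symmetries forces a global central symmetry of the star at $v$.

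\textbf{Part 1 (faces are triangles).} I would argue by contradiction: suppose a face $F$ has $k\ge4$ sides. Because $F$ is a planar convex polygon, each diagonal of $F$ is a geodesic strictly shorter than the boundary detour, so for non-adjacent vertices $p,q$ of $F$ the chord $pq$ competes to be the shortest path between them. Feeding such chords into the bisection analysis at the vertices of $F$ --- where each chord, as a geodesic of the relevant cut locus, must bisect the sector between the two skeletal cut-edges flanking it --- yields angle conditions on $F$ and its neighbours that become over-determined once \emph{every} vertex of $F$ is required to be a skeletal source; these are incompatible with $F$ closing up as a convex $k$-gon for $k\ge4$ (alternatively, one exhibits a forced cycle inside some $\C(u)$, contradicting that $\C(u)$ is a tree). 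Since this leans on the pairing of Parts 2--3, I would prove it after, or interleaved with, those parts.

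\textbf{Part 4 (degree $4$ forces a symmetric acute octahedron).} With all faces triangular and every vertex of degree $4$, $\sum_v d_v=2E$ gives $E=2V$, $3F=2E$ gives $F=\tfrac{4}{3}V$, and Euler's relation forces $V=6$, $E=12$, $F=8$: the combinatorial octahedron. Its six vertices fall into three pairs of mutually non-adjacent (``opposite'') vertices, and each triangular face contains exactly one vertex from each pair, giving the eight sign-combinations. I would then use the opposite-angle congruences of Part 3 at all six vertices, together with the full skeletal-source hypothesis, to show all eight faces are congruent to one triangle $T$. Granting this, each of the four faces around a vertex meets it in the same corner-type of $T$, contributing the same angle $\b$, whence $4\b=\q_v<2\pi$ and $\b<\tfrac{\pi}{2}$; running this at the three vertex-types shows every angle of $T$ is acute. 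Finally the congruence and the opposite-pair structure furnish, for each pair, a reflection interchanging that pair and fixing the other four vertices --- the three planar symmetries. The laborious step here is closing the angle system to prove all eight faces congruent; acuteness and the symmetries then follow cleanly.
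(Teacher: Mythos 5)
Your framework is the right one---the bisection property (iv) applied with every vertex as a source is exactly what drives the paper's proof---but in each part the decisive step is announced rather than executed, and in two places the announced plan would not go through as stated. For Parts 2--3 you declare the ``reconciliation'' of the mirror relations to be the main obstacle of the whole theorem and leave it open; in fact there is nothing left to reconcile. For the source $u_i$, skeletality forces the branch of $\C(u_i)$ at $v$ to be itself an edge of $P$ at $v$, and (iv) says that this edge and $e_i$ split the complete angle $\q_v$ into two equal halves. Since there is only one direction at clockwise angular distance $\q_v/2$ from $e_i$, each edge has a unique partner; the partnership is symmetric and fixed-point-free, so $d=2k$ is even; and since two bisecting pairs must interleave, the pairing is the combinatorially antipodal one, so writing the bisection identities $\sum_{i=1}^k \a_i = \sum_{i=1}^k \a_{k+i}$ and $\sum_{i=2}^{k+1} \a_i = \sum_{i=2}^{k+1} \a_{k+i}$ and subtracting gives $\a_1=\a_{k+1}$, and cyclically $\a_i=\a_{k+i}$. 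No ``global central symmetry of the star'' has to be conjured; it falls out in three lines from the relations you already wrote down. Part 1 is similarly left at the level of ``over-determined angle conditions,'' which is not an argument. The clean contradiction: for non-adjacent $u,v$ on a face $F$, the diagonal $uv$ is a shortest path (its length equals the $\R^3$-distance, a lower bound for any surface path), and since $v\in\C(u)\subset\Sk(P)$ there is an edge $vw\subset\C(u)$, so $uv$ and $vw$ bisect the complete angle at $v$; now take $w$ as source: the edge $wv$ is a geodesic segment, so the branch of $\C(w)$ at $v$ must bisect with it, i.e.\ must leave $v$ along the diagonal direction into the interior of $F$---contradicting $\C(w)\subset\Sk(P)$.

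The most serious gap is in Part 4's metric claims. Both arrows of your plan---``Part 3 angle relations imply all eight faces congruent'' and ``congruent faces furnish the three reflections''---are asserted without proof, and the second is a genuine non sequitur: congruence of the faces of an embedded polyhedron does not by itself produce an ambient isometry; some rigidity theorem is indispensable, and none appears in your proposal. The paper argues in the opposite order: by Part 3, the equatorial cycle $C=abcd$ bisects the complete angle at each of its four vertices, so the two half-surfaces bounded by $C$ are intrinsically isometric (with equal curvatures at $u$ and $v$ by Gauss--Bonnet); by the rigidity (uniqueness) part of Alexandrov's theorem this intrinsic isometry is realized by an isometry of $\R^3$ fixing $C$, whence $C$ is planar and $P$ has a reflection symmetry. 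Repeating over the three opposite pairs yields the three planar symmetries, and only then congruence of all eight faces. Finally, your acuteness step silently assumes the four angles at a vertex are all equal (``the same corner-type of $T$''), but Part 3 only gives opposite angles equal: a priori they could alternate $\a,\b,\a,\b$ with $\a\neq\b$, and then $2\a+2\b<2\pi$ gives no bound of $\pi/2$ on either angle. Equality of adjacent angles at a vertex is again a consequence of the reflections (the symmetry through the plane of $u,b,v,d$ fixes $u$ and swaps the two faces sharing the edge $ub$), so acuteness must come after the symmetries, not before them.
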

\begin{proof}
\begin{enumerate}[(1)]
\item Assume there exists a non-triangular face $F$ of $P$, so there are non-adjacent vertices $u,v$ of $F$.
Because $v \in C(u) \subset  \Sk(P)$, there exists an edge $vw$ of $P$ with $vw \subset C(u)$.
Moreover, the diagonal $uv$ of $F$ and $vw$ bisect the complete angle at $v$.

Because $vw$ is an edge, it is a geodesic segment from $w$ to $v$.
So $v$ is a leaf of $\C(w)$, and $\C(w)$ starts at $v$ in the direction of the diagonal $vu$, hence $\C(w) \not\subset \Sk(P)$.

\item Consider now a vertex $u$ of $P$ of degree $d$ in $\Sk(P)$, and
denote by $u_1,\ldots, u_d$ its neighbors in $\Sk(P)$.

For every $u_i$, $i=1,\ldots,d$, $u$ is a leaf of $\C(u_i)$, so the edge $u_i u$ and the edge of $\C(u_i) \cap \Sk(P)$ at $u$ bisect the complete angle at $u$.
Hence the edges at $u$ can be paired two-by-two, hence their number is even.

\item Denote by $e_1, \ldots, e_k, e_{k+1}, \dots, e_{2k}$ the edges sharing the vertex $u$, indexed circularly, 
and put $\a_i=\angle(e_i, e_{i+1})$, with index equality $2k+1=1$. 

The bisecting property of cut loci implies that
the edge $e_1$ (as a geodesic segment from vertex $u_1$ to $u$) 
and the edge $e_{k+1}$ (as the branch of $\C(u_1)$ at leaf $u$) 
bisect the complete angle at $u$:
$$\sum _{i=1}^k \a_i = \sum_{i=1}^k \a_{k+i}.$$

Similarly,
$$\sum _{i=2}^{k+1} \a_i = \sum_{i=2}^{k+1} \a_{k+i}.$$

Subtracting, we get $\a_1=\a_{k+1}$.

Analogous reasoning implies the other equalities: $\a_i=\a_{k+i}$, with index equality $2k+j=j$.

\item For the combinatorial part, 
denote by $F,E,V$ the number of faces, edges, and respectively vertices of $P$.
Euler's formula for convex polyhedra gives $F-E+V=2$.
Our assumptions imply $3F=2E$, and $4V=2E$.
These equations yield $V=6$ and $F=8$, hence $P$ is an octahedron.

Denote by $u,v,a,b,c,d$ the vertices of $P$, with $a,b,c,d$ neighbor to both $u$ and $v$.

Applying the hypothesis for $a,b,c,d$ shows that the cycle $C=abcda$ in $\Sk(P)$ is a bisecting polygon.
Therefore, there exists a local isometry $\I$ of the `upper' and `lower' neighborhoods $N_u, N_v$ of $C$.
In particular, the curvatures at $u$ and $v$ are equal, by Gauss-Bonnet.

It follows even more, that the local isometry $\I$ extends to an intrinsic isometry between the `upper' and the `lower' closed half-surfaces bounded by $C$ (regarding them as cones), 
hence it further extends to an isometry of $P$ fixing $C$.
Therefore, $C$ is planar and $P$ is symetric with respect to the respective plane, 
by the rigidity part of Alexandrov's Gluing Theorem.

Repeating the reasoning for other pairs of `opposite' vertices shows that all faces of $P$ are congruent triangles.

The four faces sharing the vertex $u$ have congruent angles at $u$, hence those angles are acute.
\end{enumerate}
\end{proof}


\begin{ex}
\exlab{SKCL-dipyramids}
Suitable dipyramids over convex $2 m$-gons,
similar to Example~\exref{octahedron}, provide non-octahedron polyhedra whose the cut loci of the vertices cover the $1$-skeleton.
\end{ex}


\section{A Combinatorial Restriction}
\seclab{Combinatorial}
Already mentioned in the Abstract,
at a first glance there seems to be very little relation between the cut locus and the $1$-skeleton, as the first one is an intrinsic geometry notion, and the second one specifies the combinatorics of $P$.
A background connection between the two notions can however be established in two steps:
Alexandrov's Gluing Theorem connects the intrinsic and the extrinsic geometry of $P$, while Steinitz's Theorem relates the combinatorics to the extrinsic geometry.

In this section we provide an easy combinatorial restriction to the existence of skeletal cut loci
for cubic graphs, complementing the first part of Theorem~\thmref{Every-vertex}.

\medskip

In the literature, a spanning tree without degree-$2$ nodes is called a
\emph{HIST}.\footnote{%
HIST abbreviates ``homeomorphically irreducible spanning tree.'' 
See, e.g., \cite{goedgebeur2024hist} and the references therein.}
So every spanning tree of a HIST-free graph has a degree-$2$ node.

\begin{thm}
\thmlab{Restriction}
A HIST-free cubic polyhedral graph cannot be realized with skeletal cut loci.
\end{thm}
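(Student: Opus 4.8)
The plan is to connect the combinatorial hypothesis (HIST-free) directly to the structure forced by a skeletal cut locus. The key observation is that if $P$ has a skeletal cut locus $\C(x) \subset \Sk(P)$, then $\C(x)$ is a tree lying in the $1$-skeleton; I claim this tree must in fact be a \emph{spanning} tree of $\Sk(P)$ that has \emph{no degree-$2$ node}, i.e.\ a HIST. If I can establish that, the theorem is immediate: a HIST-free graph has no HIST, yet a skeletal cut locus would produce one, a contradiction.

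First I would argue that $\C(x)$ spans all vertices of $\Sk(P)$. By cut locus property~\ref{i}, every vertex of $P$ (except possibly $x$ itself, if $x$ is a vertex) lies in $\C(x)$. Here I must handle the location of $x$: since the graph is cubic and polyhedral, and I am told (by Theorem~\thmref{Finite} and the surrounding discussion) that the relevant $x$ are flat points, I would take $x$ to be a flat point in the interior of a face or edge, so that $x$ is not a vertex of $P$ and hence \emph{every} vertex of $P$ is contained in $\C(x)$. Because $\C(x) \subset \Sk(P)$ is a connected tree containing all vertices, and its leaves are vertices of $P$, it is a spanning tree of $\Sk(P)$.

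Second, and this is the crux, I would show that $\C(x)$ has no degree-$2$ node \emph{as a spanning tree of $\Sk(P)$}. A node of degree $2$ in the tree $\C(x)$ can only occur at a vertex $v$ of $P$ (cut locus property~\ref{i}: interior ramification points of the tree have degree $\ge 3$, while genuine degree-$2$ points of $\C(x)$ must be polyhedron vertices). So suppose some vertex $v$ of $P$ has degree $2$ in the spanning tree $\C(x)$. Since $\Sk(P)$ is cubic, $v$ has exactly $3$ incident edges, so exactly one edge $e$ at $v$ is \emph{not} in $\C(x)$. The two tree-edges at $v$ make $v$ a degree-$2$ node of the cut locus, meaning $v$ is a genuine non-leaf, non-ramification vertex on the tree. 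The bisection property~\ref{iv} forces the two cut-locus edges at $v$ together with the geodesic segment from $x$ to bisect the complete angle at $v$; in the cubic case the third edge $e$ is the unique non-tree edge, and I would examine whether $e$ can be avoided. The HIST-free hypothesis is precisely the statement that \emph{every} spanning tree has such a degree-$2$ node, so I do not need to rule out degree-$2$ nodes in general — rather, I need the reverse implication. Let me reconsider: the correct logical structure is that a \emph{skeletal cut locus is a spanning tree}, and I want to show that the existence of a skeletal cut locus would require a HIST.

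The cleaner route, which I would ultimately adopt, is: assume $P$ is realized with a skeletal cut locus $\C(x)$; by the spanning argument above, $\C(x)$ is a spanning tree $T$ of the cubic graph $\Sk(P)$. A HIST-free graph has a degree-$2$ node in every spanning tree, so $T$ has a vertex $v$ of degree $2$. In a cubic graph this means exactly one edge $e=vw$ at $v$ is omitted from $T$. Now I invoke cut-locus structure at $w$: the omitted edge $e$ is a genuine polyhedron edge, hence a geodesic segment, so $w$ reaches $v$ by a shortest path along $e$; but $v$, being a degree-$2$ node of $\C(x)$, is joined to $x$ by exactly two geodesic segments (property~\ref{ii}), and the presence of the additional edge $e$ toward $w$ creates a geometric obstruction analogous to the one used in the proof of part~(1) of Theorem~\thmref{Every-vertex}: the cut locus $\C(w)$ would be forced to start at $v$ in a direction along $e$, so $\C(w) \not\subset \Sk(P)$ \emph{unless} $e$ were itself a tree edge — contradicting that $e \notin T$. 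I expect the main obstacle to be making this last geometric step fully rigorous: precisely quantifying, via the angle-bisection property~\ref{iv}, why a degree-$2$ vertex of a skeletal cut locus in a cubic graph cannot coexist with the omitted third edge being a true polyhedron edge. Once that local incompatibility is pinned down, the theorem follows: no spanning tree of $\Sk(P)$ can serve as a skeletal cut locus, so $P$ is not amenable.
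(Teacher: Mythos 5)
Your overall frame---a skeletal cut locus is a tree in $\Sk(P)$ containing every vertex, HIST-freeness then forces it to have a degree-$2$ node at some vertex $v$, and one must derive a local contradiction at $v$---is exactly the structure implicit in the paper's proof. But the crux, the local impossibility, is precisely the step you leave open, and the substitute you propose is not valid. You argue that the omitted third edge $e=vw$ would force $\C(w)$ to leave $\Sk(P)$, ``analogous to'' part~(1) of Theorem~\thmref{Every-vertex}. That style of argument works there only because of that theorem's hypothesis, namely that \emph{every} vertex has a skeletal cut locus; in Theorem~\thmref{Restriction} the hypothesis constrains only the single cut locus $\C(x)$, so establishing $\C(w)\not\subset\Sk(P)$ for some other source $w$ contradicts nothing. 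The contradiction must be extracted from $\C(x)$ itself, locally at $v$.

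The missing step---which is essentially the entirety of the paper's proof---is local and elementary. At a degree-$2$ node $v$ of $\C(x)$ there are exactly two geodesic segments from $x$ (property~\ref{ii}); by property~\ref{iv}, each of the two cut-locus arcs at $v$ bisects the angle of the domain, bounded by those two segments, that contains it. Adding the two halves, the union of the two cut-locus edges at $v$ splits the complete angle at $v$ into two \emph{equal} parts. Since the cut locus is skeletal and $v$ has degree $3$ in $\Sk(P)$, these two edges are polyhedron edges and the third edge at $v$ lies strictly inside one of the two parts; hence one face angle at $v$ equals the sum of the other two. This contradicts Lemma~2.8 of~\cite{Reshaping}, cited in the paper: at a degree-$3$ vertex of a convex polyhedron, the sum of any two face angles is strictly larger than the third. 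A secondary gap: your reduction to spanning trees assumes $x$ is not a vertex, and Theorem~\thmref{Finite} does not justify this---vertex sources of skeletal cut loci do exist (e.g.\ on the regular octahedron). The local argument just described is indifferent to where $x$ lies: if $x$ were a vertex, every neighbor $w$ of $x$ has degree at most $2$ in $\C(x)$ (the edge $xw$ cannot belong to $\C(x)$), and the same angle lemma applied at $w$---using, in the leaf case, the bisection by the edge $xw$ and the single cut-locus arc at $w$---again yields the contradiction.
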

\begin{proof}
Lemma 2.8 in~\cite{Reshaping} shows that, at a vertex $v$ of $P$ of degree $3$ in $\Sk(P)$, the sum of any two face angles is strictly larger than the third angle.
Therefore such a degree-$3$ $v$ in the graph cannot be a degree-$2$ node in a cut locus, because of Property~\ref{iv} of cut loci.
\end{proof}

\begin{op}
Theorem~\thmref{Restriction} provides a necessary condition for a cubic polyhedral graph
to be realizable with a skeletal cut locus. Is it also sufficient?
\end{op}

\begin{cor}
Among the Platonic solids, only the regular tetrahedron and the regular octahedron have skeletal cut loci.
\end{cor}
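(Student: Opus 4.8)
The plan is to verify the two amenable solids directly and then exclude the other three by classifying the possible source $x$ according to its location: at a vertex, interior to an edge, or interior to a face. The regular tetrahedron is amenable by Lemma~\lemref{EveryTetra}, and the regular octahedron by the vertex sources recorded just before Theorem~\thmref{Finite} (it is the extremal object of Theorem~\thmref{Every-vertex}(4)). So it remains to show that the cube, dodecahedron, and icosahedron admit no skeletal source $x$.

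\emph{Vertex sources.} If $\C(v)\subset\Sk(P)$ for a single vertex $v$, then, since the symmetry group of each Platonic solid is vertex-transitive and symmetries are isometries fixing $\Sk(P)$, every vertex is a skeletal source, so Theorem~\thmref{Every-vertex} applies. Its part~(1) fails for the cube and dodecahedron (square, resp.\ pentagonal, faces), and its part~(2) fails for the icosahedron (degree $5$ is odd). Hence no vertex of the three is a skeletal source.

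\emph{Interior sources.} The common tool is that at a \emph{leaf} $v$ of a cut locus the unique geodesic from $x$ arrives opposite to the cut-locus edge at $v$, splitting the complete angle into two equal halves (the limiting case of the bisection Property~\ref{iv}). In a regular solid the edges at a vertex of degree $d$ are spaced by the common face angle $\beta$, so the half-angle $d\beta/2$ is an edge direction only when $d$ is even. If $x$ lies interior to an edge $e$ with endpoint $v$, then $v$ is a leaf and the geodesic $xv$ runs \emph{along} $e$; its opposite would have to be another edge at $v$ at angle $d\beta/2$ from $e$, impossible since $d\in\{3,5\}$ is odd. If instead $x$ lies interior to a face $F$, every vertex $v_i$ of $F$ is reached by the straight segment inside $F$, hence is a leaf, and the same opposition forces each geodesic $x v_i$ to bisect the face angle of $F$ at $v_i$; the angle bisectors of a regular polygon meet only at its center, so $x$ must be the center of $F$.

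It remains to kill the face-center source, and here I would use symmetry rather than computing $\C(x)$. The face-rotation $\rho$ of order $p\ge 3$ fixes $x$, hence is an automorphism of the tree $\C(x)$, and therefore fixes the tree's center, which---$\C(x)$ being skeletal---is a vertex or an edge of $P$. But for the cube, dodecahedron, and icosahedron each face-axis passes through the center of the opposite face, so $\rho$ fixes no vertex and no edge of $P$, a contradiction. (This is exactly where the tetrahedron escapes: its face-axis runs through the opposite \emph{vertex}, which is fixed, and that vertex is indeed the center of the tetrahedral cut locus.) The steps I expect to need the most care are the leaf-opposition property and, in the face-center case, the verification that no vertex or edge of $P$ lies on a face-axis, since this is precisely the point separating the non-amenable trio from the amenable tetrahedron.
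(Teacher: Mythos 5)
Your proposal is correct, but it takes a genuinely different route from the paper's. The paper dispatches the cube and dodecahedron purely combinatorially: their graphs are HIST-free, so Theorem~\thmref{Restriction} applies, with no geometry needed; the icosahedron, whose graph is not cubic and so is untouched by Theorem~\thmref{Restriction}, is handled by reducing the candidate sources to face centers via the proof of Theorem~\thmref{Finite} and then checking directly (Fig.~\figref{Icosahedron}) that the face-center cut locus leaves $\Sk(P)$. You instead give a uniform argument for all three solids: vertex sources are excluded by vertex-transitivity plus Theorem~\thmref{Every-vertex} (parts~(1) and~(2)), flat sources are forced to face centers by the leaf-bisection property, and face centers are excluded by a fixed-point argument---the face rotation is an isometry fixing $x$, hence an automorphism of the tree $\C(x)$, hence fixes a node or inverts an edge of it; since a skeletal cut locus has all its nodes at vertices of $P$ and each of its edges inside a single edge of $P$, this would force the rotation to fix a vertex or an edge of $P$, which no face axis of the cube, dodecahedron, or icosahedron permits. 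Your route avoids both the HIST-freeness verification and the paper's unproved ``direct considerations'' for the icosahedron, and it isolates structurally why the tetrahedron escapes (its face axis ends at the opposite vertex); the paper's route, in exchange, is independent of symmetry and of the metric---Theorem~\thmref{Restriction} rules out \emph{every} convex realization of the cube and dodecahedron graphs, not just the regular ones.

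One caveat: in the edge- and face-interior cases you assert that the relevant vertices are \emph{leaves} of $\C(x)$, i.e., that the straight segment inside $F$ (or along $e$) is the \emph{unique} shortest path from $x$; that is true for these solids but you do not prove it. The gap is harmless: if such a vertex $v$ were joined to $x$ by several geodesics, then applying Property~\ref{iv} sector by sector, with every cut-locus arc at $v$ lying along an edge of $P$, yields the same parity obstruction (for odd vertex degree) or the same bisector condition as in the leaf case. Note also that the paper's own proof of Theorem~\thmref{Finite}, which your reduction parallels, makes the identical implicit assumption.
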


\begin{proof}
Notice first that all tetrahedra have skeletal cut loci (Lemma~\lemref{EveryTetra}), 
as does the regular octahedron, because
 it is a special case of Theorem~\thmref{Every-vertex} and Example~\exref{SKCL-dipyramids}).

One can check straightforwardly that the cube and the dodecahedron graphs are HIST-free, 
hence these polyhedra do not admit skeletal cut loci by Theorem~\thmref{Restriction}.

We clarify next the situation of the icosahedron $I$.
By the proof of Theorem~\thmref{Finite}, 
the only candidate source points $x$ are the centers of the faces.
Direct considerations show that such a cut locus is not completely included in $\Sk(I)$, see Fig.~\figref{Icosahedron}.
\end{proof}
%
\begin{figure}[htbp]
\centering
\includegraphics[width=0.7\columnwidth]{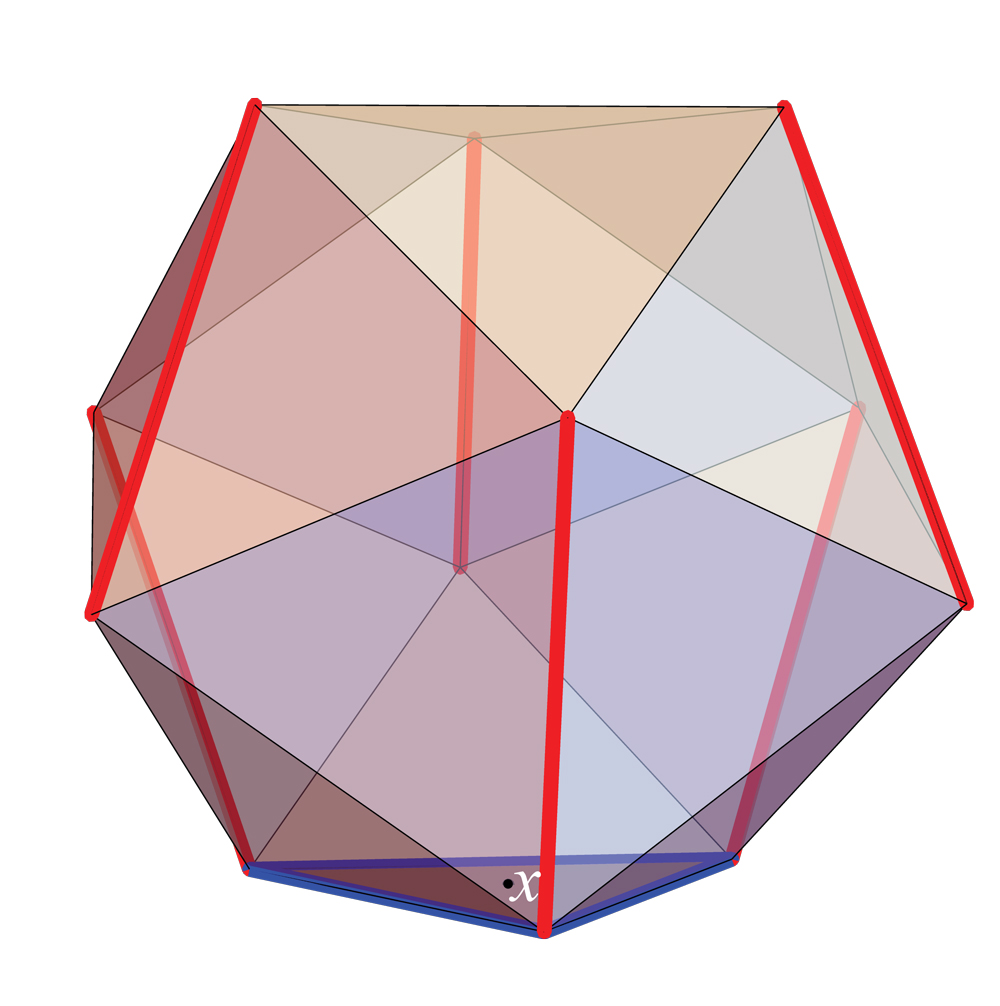}
\caption{Source $x$ is the center of the base face (blue).
The six disjoint red edges are included in $\C(x)$, but none of the other polyhedron edges are in $\C(x)$.
Therefore, $\C(x)$ must also include some non-edge geosegs to connect $\C(x)$ to a tree.
}
\figlab{Icosahedron}
\end{figure}

\medskip

The cube example shows there exist convex polyhedra admitting
edge-unfoldings, but not skeletal cut loci.


\section{Partially Skeletal Cut Loci}
\seclab{Partial-Edge Unfoldings}
Theorem~\thmref{Rare} states that 
almost all convex polyhedra $P$ with $V >4$ vertices have no skeletal cut loci.
This section is an attempt to roughly clarify ``how far'' are those polyhedra from having such cut loci.

\begin{rmk}
\rmklab{1-edge-unfoldings}
For every convex polyhedron $P$ and 
each edge $e$ of $P$,
there are infinitely many points $x \in P$ such that
$e \subset \C(x)$.
\end{rmk}

\begin{proof}
Consider an extremity $v$ of $e$, 
and the geodesic segment $\g_v$ starting at $v$ which,
with $e$, bisects the complete angle at $v$.
Also consider points $x$ on $\g_v$ sufficiently close to $v$.
Then, either the proof of Theorem~\thmref{Finite}, or
Property~\ref{iv} of cut loci in Section~\secref{Preliminaries} on which it is based,
directly implies $e \subset \C(x)$.
\end{proof}

\medskip

Fig.~\figref{Icosahedron} indicates a point $x$ on an icosahedron 
with $6$ polyhedron edges in $\C(x)$,
but several edges of $\C(x)$ are not part of the $1$-skeleton.

Fig.~\figref{OneShort} presents a cut locus which 
fails to be skeletal by a single edge.
In our example the respective edge is internal, but minor changes show that it could as well be external.
(An arc of a tree is called \emph{external} it it is incident to a leaf, and \emph{internal} otherwise.)
We believe that Theorem~\thmref{Rare} can be adapted to cover this case as well.
%
\begin{figure}[htbp]
\centering
\includegraphics[width=0.7\columnwidth]{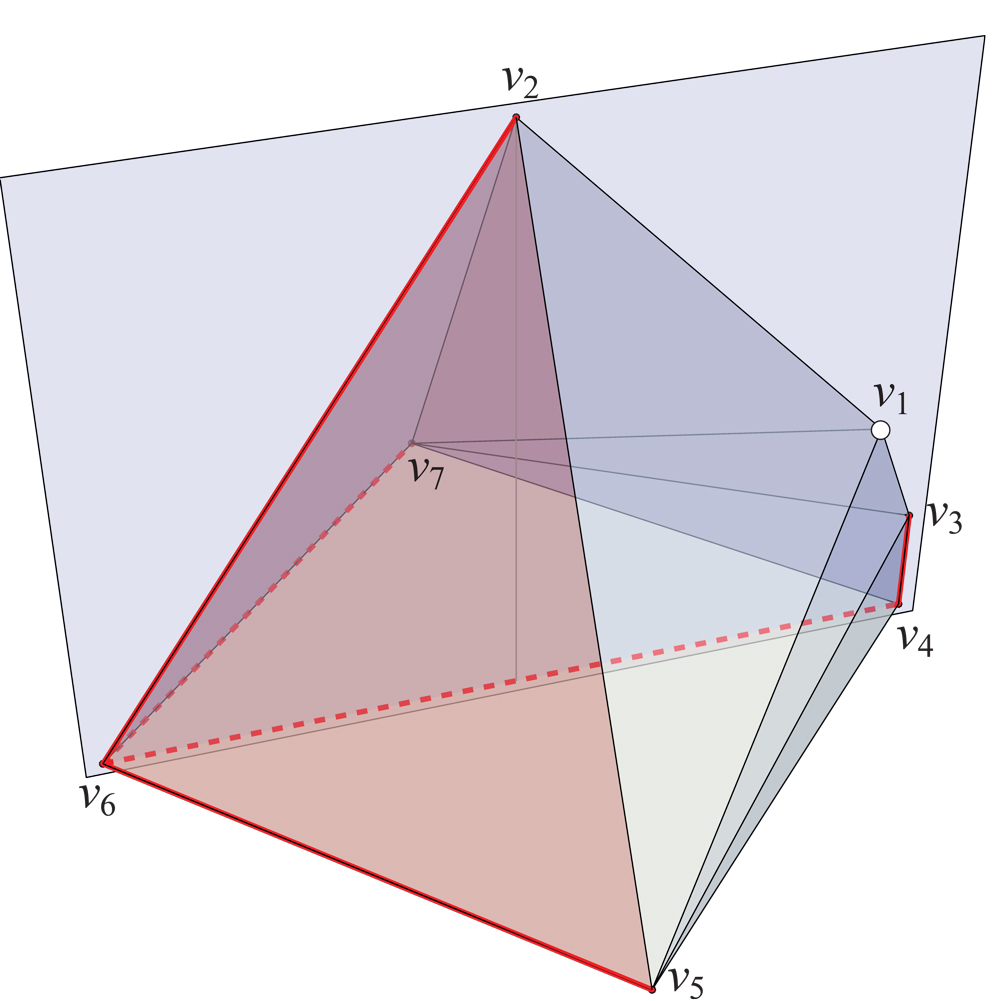}
\caption{A convex polyhedron of $7$ vertices and a vertical symmetry plane,
obtained from a regular pyramid $v_2 v_4 v_5 v_6 v_7$. 
The cut locus $\C(v_1)$, drawn red,
has $4$ polyhedral edges and one non-polyhedral edge, $v_4 v_6$.}
\figlab{OneShort}
\end{figure}

For a non-degenerate convex polyhedron $P$, define $L(P) \geq 1$ as 
the maximal integer such that
$P$ admits a point $x$ with $L(P)$ polyhedral edges in $\C(x)$.
The quantity $L(P)$ can be regarded as 
a measure of how close is $P$ to having skeletal a cut loci.

Direct considerations show that $L(C) = 4$ for the cube $C$, while $L(I) = 6$ for the 
icosahedron $I$ (see again Fig.~\figref{Icosahedron}).

The proof of Theorem~\thmref{EveryTree} can be easily adapted to provide examples of
convex polyhedra and cut loci missing an arbitrary number of their edges\footnote{
However, it doesn't work as such if one asks for the precise position of the non-polyhedral edges in the given tree.} 
from being skeletal.

\medskip

A \emph{$k$-edge-unfolding} is an unfolding whose cut tree contains precisely $k$ edges.

A \emph{partial edge-unfolding} is an unfolding whose cut tree contains at least one edge, 
but it is not skeletal;
so it is a $k$-edge-unfolding, for some $k \geq 1$.
These unfoldings are obtained by cutting partially along edges and partially outside edges,
so the concept is a bridge between 
edge-unfoldings (whose cut trees are composed by edges), and 
anycut-unfoldings (whose cut trees may contain no edge).

Remark~\rmkref{1-edge-unfoldings} shows that 
every convex polyhedron admits a $1$-edge-unfolding.

\medskip

The following two questions are now natural, 
and related to D\"urer's problem~\cite{o-dp-13}---whether or not 
every convex polyhedron has an edge-unfolding to a net.

\begin{op}
Find the maximal number $L_V \geq 1$ such that each non-degenerate convex polyhedron with $V \geq 5$ vertices\footnote{
Notice that $L_4=3$, by Lemma~\lemref{EveryTetra}.}
 has a point whose cut locus contains $L_V$ edges.

In particular, does every convex polyhedron have a point with two edges in its cut locus?
\end{op}

\begin{op}
Find the maximal number $K_V \geq 1$ such that each non-degenerate convex polyhedron with $V$ vertices has a $K(V)$ edge-unfolding to a net.
\end{op}

Clearly $1\leq L_V \leq K_V$, and Theorem~\thmref{Rare} shows that, for $V>4$, $L_V$ cannot equal the number of edges of a spanning tree with $n=V/2+1$ leaves
(see also the second paragraph in Section~\secref{Theorem1Conclusions}).

%

\medskip

The part of the paper presenting comments on, and consequences of, Theorem~\thmref{EveryTree} ends here.
The remaining is devoted to the proof of Theorem~\thmref{EveryTree},
which consists of a case analysis (Section~\secref{deg2}), a detailed construction for each (sub)case
(Sections~\secref{ConstructionDetails}--\secref{Examples}, 
\secref{Case_a},
\secref{Case_b},
\secref{Case_c},
\secref{Case_d}), 
and a concluding induction (Section~\secref{Induction Proof}).


\section{Proof of Theorem~\thmref{EveryTree}, Case of no Degree-$2$ Nodes}
\seclab{ConstructionDetails}
Throughout this section
we assume $\T$ has no degree-$2$ nodes.
Start with $P$ a pyramid with apex $a$ centered over a regular $n$-gon base $Q$, 
with $x$ the centroid of $Q$.
Label the vertices of $Q$ as $v_1,\ldots,v_n$.

The construction does not depend on the degree of apex $a$, so it is
no loss of generality to assume $a$ has degree-$3$ so that $P$ starts as a regular tetrahedron.
Let $z$ be a node of $\T$ adjacent to $a$. 
(We will often use $a$ and $z$ and
other variables to both refer to a node of $\T$ and a corresponding vertex of $P$.)
Let $z$ have degree $k+2$ in $\T$.
Truncation of $k$ planes through $z$ will create a vertex at $z$ of degree $k+2$.
E.g., if $z$ is degree-$3$, $k=1$ plane through $z$ creates a vertex of degree-$3$,
as we've seen in Fig.~\figref{Tfour_3d}.

We aim to understand how to truncate $k \ge 1$ planes through $z$
so that the $k+1$ truncation edges 
incident to the base $Q$ are part of $\C(x)$.
We will illustrate in detail the case $k=2$ shown in 
Fig.~\figref{k=2_nolab}.
Looking ahead, if we know how to construct $k$ planes through $z$,
then we can apply the same logic to construct $j$ planes through a child $y$ of $z$. 
The $j=1$ case is illustrated in
Fig.~\figref{k=2j=1_nolab}, with the red truncation triangle incident to $y$.
Then the same construction technique can be used to inductively
create the full subtree rooted at $z$.
We will show later that the subtrees rooted at the other two children of $a$
can be arranged to avoid interfering with one another.

\begin{figure}[htbp]
\centering
\includegraphics[width=0.97\columnwidth]{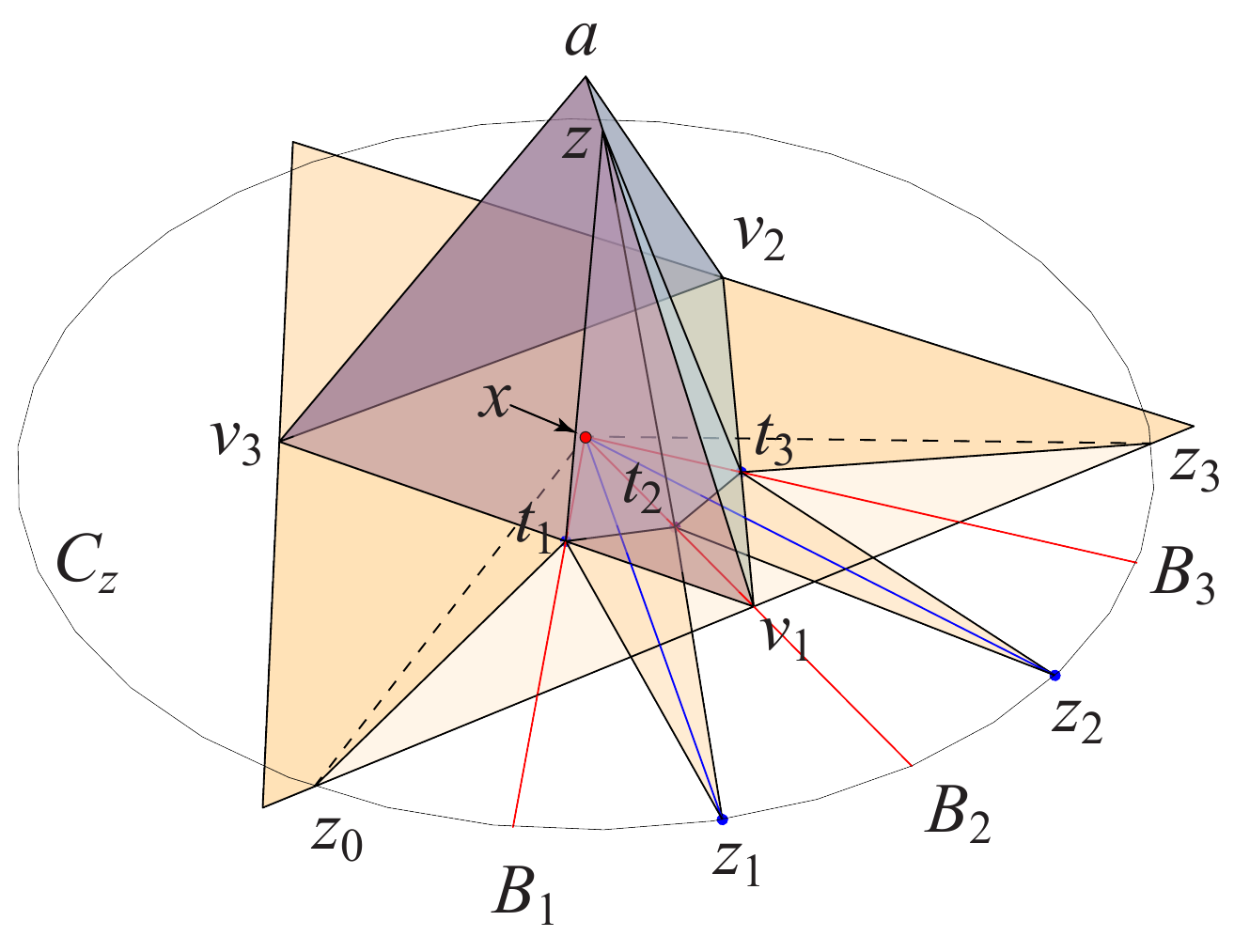}
\caption{$k=2$ truncation planes through $z$.}
\figlab{k=2_nolab}
\bigskip\bigskip 
\centering
\includegraphics[width=0.97\columnwidth]{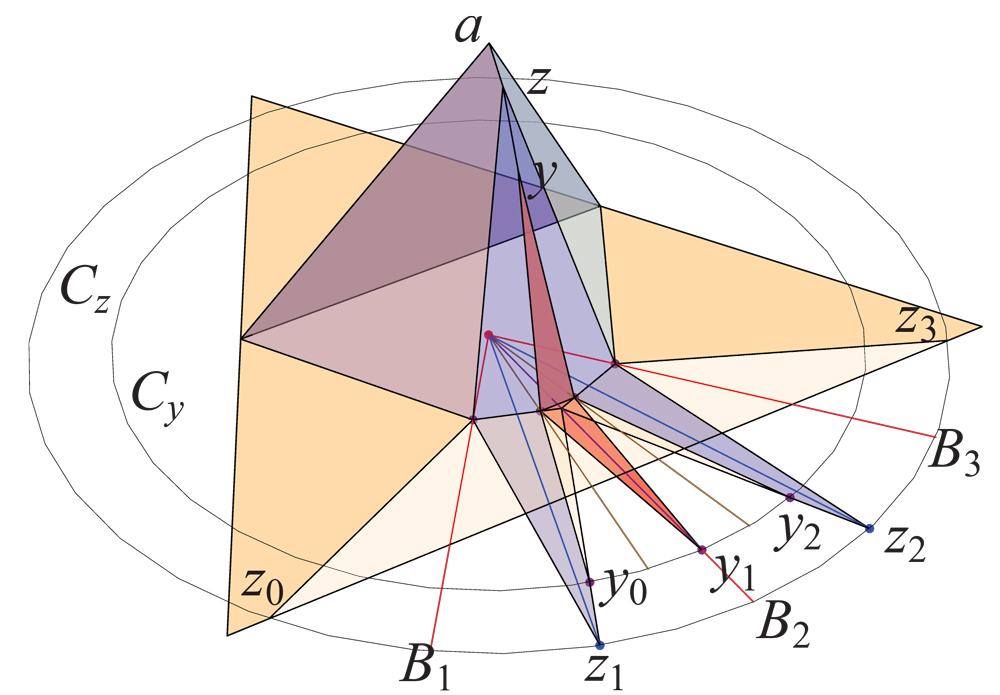}
\caption{$k=2$, $j=1$. The $y$-truncation cuts the $z t_2$ edge
in Fig.~\protect\figref{k=2_nolab}.}
\figlab{k=2j=1_nolab}
\end{figure}

We express the construction as a multi-step algorithm, and later prove that
the truncation edges are in $\C(x)$.
Fix $k \ge 1$, and position $z$ anywhere in the interior of $a v_1$.
The goal is to compute the \emph{truncation chain} 
$t_1,t_2,\ldots,t_k,t_{k+1}$ on base $Q$, where $t_1 \in v_1 v_n$ and $t_{k+1} \in v_1 v_2$
(e.g., $t_1, t_2, t_3$ in Fig.~\figref{k=2_nolab}). 
Each truncation triangle is then $z t_i t_{i+1}$.

The construction of the truncation chain is effected by first computingd
the unfolded positions $z_i$, the images of $z$ in the unfolding.
It is perhaps counterintuitive, but we can calculate $z_i$ without knowing 
 $t_i t_{i+1}$; instead we use $z_i$ to calculate $t_i t_{i+1}$.
The next construction depends of our choice of several parameters; we'll see later that it provides a suitable polyhedron.

\begin{enumerate}[(1)]
\item $z_0$ is the position of $z$ unfolded with the left face of the tetrahedron, $a v_3 v_1$.
$z_0$ can be determined by $| v_1 z | = | v_1 z_0 |$.
Then $z_{k+1}$ is the reflection of $z_0$ across $x v_1$.
\item Set $r_z = | x z_0 | = | x z_{k+1} |$.
\item All the $z_i$'s are chosen to lie on the circle $C_z$ centered on $x$ of radius $r_z$. 
\item Let $A$ be the angle $z_0 x z_{k+1}$.
Partition $A$ into $k+1$ angles $\a$.
This is another choice,
to maximize the symmetry of the construction.
\item The $z_i$'s lie on rays from $x$ separated by $\a$.
Together with $C_z$, this determines the location of the $z_i$'s.
\item Set $B_i$ to bisect the 
angle at $x$ between the
$z_{i-1}, z_i$ rays, $i=1,\ldots,k+1$.
%
\item We determine $t_1$ and $t_{k+1}$ using the first and last bisector:
$t_1 = v_1 v_n \cap B_1$, $t_{k+1} = v_1 v_2 \cap B_{k+1}$. 
The intermediate chain vertices $t_2,\ldots,t_k$ are not yet determined.
\item Let $\Pi_i$ be the mediator plane through $z z_i$,
the plane orthogonal to $z z_i$ through its midpoint.
It is these planes that determine $t_i$, $i=2,\ldots,k$.
%
\item $\Pi_i$ intersects the $xy$-plane in a line $L_i$ containing $t_i t_{i+1}$.
\item $t_i = L_i \cap B_i$.
\end{enumerate}

First note that the mediator plane construction of $t_i t_{i+1}$ guarantees that
$z$ unfolds to $z_i$.
Second, the angles between
edges $t_i z_{i-1}$ and $t_i z_i$ are split by $B_i$ by construction.
So any point $p$ on the interior of edge $z t_i$ unfolds to two images in the plane equidistant from $x$.

\begin{lem}
\lemlab{TruncEdges}
Each truncation edge $z t_i$ is an edge of $\C(x)$.
\end{lem}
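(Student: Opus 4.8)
The plan is to show that every interior point $p$ of the segment $z t_i$ is joined to $x$ by two distinct geodesic segments of equal length, one through each of the two faces of $P$ meeting along $z t_i$; the bisection criterion of cut loci then places $p$ in $\C(x)$, and since $\C(x)$ is closed and $z,t_i$ are vertices of $P$, the whole segment $z t_i$ is an arc of the tree $\C(x)$. First I would record what the construction already supplies: by the mediator-plane step, the two faces incident to $z t_i$ unfold with $z \mapsto z_{i-1}$ and $z \mapsto z_i$, so the edge $z t_i$ has the two unfolded images $t_i z_{i-1}$ and $t_i z_i$, issuing from the common image $t_i$. The core computation is a congruence: the triangles $x t_i z_{i-1}$ and $x t_i z_i$ share the side $x t_i$, have $|x z_{i-1}| = |x z_i| = r_z$ since both $z_{i-1}$ and $z_i$ lie on the circle $C_z$, and have equal angles at $x$ because $t_i$ was placed on the bisector $B_i$ of $\angle z_{i-1} x z_i$. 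By SAS they are congruent, hence mirror images across the line $x t_i = B_i$.

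From this the equidistance is immediate: a point $p$ on $z t_i$ unfolds to $p' \in t_i z_{i-1}$ and $p'' \in t_i z_i$ with $|t_i p'| = |t_i p| = |t_i p''|$, and as $z_{i-1},z_i$ are reflections across $B_i$, so are $p',p''$, whence $|x p'| = |x p''|$. Thus $x$ is joined to $p$ by two geodesic segments of equal length, and by the same mirror symmetry these make equal angles with $z t_i$ at $p$; that is, $z t_i$ bisects the angle at $p$ of the lens they bound. This is exactly the bisection signature of Property~\ref{iv}, which---used as a recognition principle precisely as in the proof of Remark~\rmkref{1-edge-unfoldings}---yields $z t_i \subset \C(x)$. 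Running the argument for each $i = 1,\ldots,k+1$ (the extreme images $z_0,z_{k+1}$ also lying on $C_z$) covers every truncation edge, and taking the closure upgrades the interior to the full segment. Since $z$ and $t_i$ are vertices of $P$, $z t_i$ is a genuine edge of $\C(x)$.

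The step I expect to be the crux is the promotion of ``two geodesic segments of equal length'' to ``two \emph{shortest} geodesics,'' i.e. excluding a third, strictly shorter path to $p$ through some other face sequence. Within the present framework this is absorbed into the bisection criterion (Property~\ref{iv}), exactly as the paper already does when concluding $e \subset \C(x)$ in Remark~\rmkref{1-edge-unfoldings}; the one substantive hypothesis it rests on is that the chosen parameters really do assemble into a convex polyhedron whose source unfolding from $x$ is non-overlapping, which is what the later construction and concluding induction establish.
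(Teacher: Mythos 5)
Your proposal correctly reconstructs the setup that the paper dispatches \emph{before} the lemma: the mediator-plane and bisector choices make the two unfolded images of any $p \in z t_i$ equidistant from $x$ (your SAS congruence of $x t_i z_{i-1}$ and $x t_i z_i$ is exactly this). But that equidistance is not the content of Lemma~\lemref{TruncEdges}; it is its hypothesis. The entire burden of the paper's proof is the step you flag as the crux and then set aside: showing that no \emph{strictly shorter} geodesic reaches $p$ by some other face sequence, i.e.\ by crossing $t_j t_{j+1}$ for a different $j$ and then traversing several truncation triangles before arriving at $p$. The paper handles this with a dedicated symmetry argument: each path $x t_j z$ is a quasigeodesic $q_j$ flanked by the congruent triangles $x t_j z_{j-1}$ and $x t_j z_j$, so the surface has a local intrinsic reflection symmetry across $q_j$; any geodesic arc $\d$ that crosses $q_j$ on its way to $p$ has a mirror companion $\d'$ of equal length meeting it on $q_j$, hence $\d$ stops being a shortest path at that crossing and cannot realize the distance to $p$. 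Only after eliminating all such detours can one conclude that the two equal-length arcs are genuinely shortest, so $p \in \C(x)$; the digon argument then propagates this from the extreme edges $z t_1$, $z t_{k+1}$ to the interior edges $z t_i$.

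Your substitute for this step does not work, for two reasons. First, Property~\ref{iv} is not a recognition principle: it asserts that if $y$ is \emph{already} in $\C(x)$ and two geodesic segments bound a domain, then the cut-locus arc at $y$ bisects the angle; it does not say that two equal-length geodesic arcs arriving with a bisected angle force membership in $\C(x)$. (In Remark~\rmkref{1-edge-unfoldings} the paper can lean on it only because $x$ is taken \emph{sufficiently close} to the vertex $v$, so no competing face sequence can be shorter; no such locality is available here, since your $p$ can be far from $x$ and the polyhedron has arbitrarily many truncation faces.) Second, appealing to the non-overlap of the source unfolding is circular: the planar layout with images $z_0,\ldots,z_{k+1}$ is only known to \emph{be} the source unfolding once one knows that $\C(x)$ lies along the truncation edges, which is precisely what the lemma must establish. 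So the proposal, as written, has a genuine gap exactly at the point the paper's quasigeodesic-symmetry argument was designed to fill.
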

\begin{proof}
We first prove that $z t_1$ lies in $\C(x)$.
Throughout refer to Fig.~\figref{Abstract_proof}.

Before truncation, the segment $z t_1$ lies on the face $a v_3 v_1$ 
of the polyhedron $P$, which is a regular tetrahedron in this case.

Fix a point $p \in z t_1$.
The unique shortest path $\g$ to $p$ crosses edge $v_1 v_3$.
After truncation, $\g$ remains a geodesic arc. We aim to prove that it remains shortest,
and moreover there is another companion geodesic segment $\g'$, establishing that $p \in \C(x)$.

Now we consider the situation after truncation. Let $\d$ be a geodesic arc from $x$ to $p$,
approaching $p$ from the other side of $z t_1$; see Fig.~\figref{Abstract_proof}(b).
If $\d$ crosses the edge $t_1 t_2$, then we have $| \g | = | \d |$ by construction,
and we have found $\g'=\d$. 

Suppose instead that $\d$ crosses edge $t_i t_{i+1}$ for $i \ge 2$, and then crosses
the truncation triangles
$z t_i t_{i+1}, z t_{i-1} t_i, \ldots , z t_1 t_2$
(right to left, i.e., clockwise, in Fig.~\figref{Abstract_proof}(a)) before reaching $p$.
To simplify the discussion, we illustrate $i=2$, so $\d$ crosses
$t_2 t_3$ and then triangles $z t_2 t_3$ and $z t_1 t_2$.
See Fig.~\figref{Abstract_proof}(b).


Let $q_2$ be the quasigeodesic\footnote{
A \emph{quasigeodesic} is a path with at most $\pi$ surface to either side of every point.} 
$x t_2 z$ on $P'$; it must be crossed by $\d$ to reach $p$. 
There are two triangles $x t_2 z_1$ and  $x t_2 z_2$
bounding $q_2$ to either side, congruent by the construction.
Thus the construction has local intrinsic symmetry about $q_2$.

\begin{figure}[htbp]
\centering
\includegraphics[width=0.80\columnwidth]{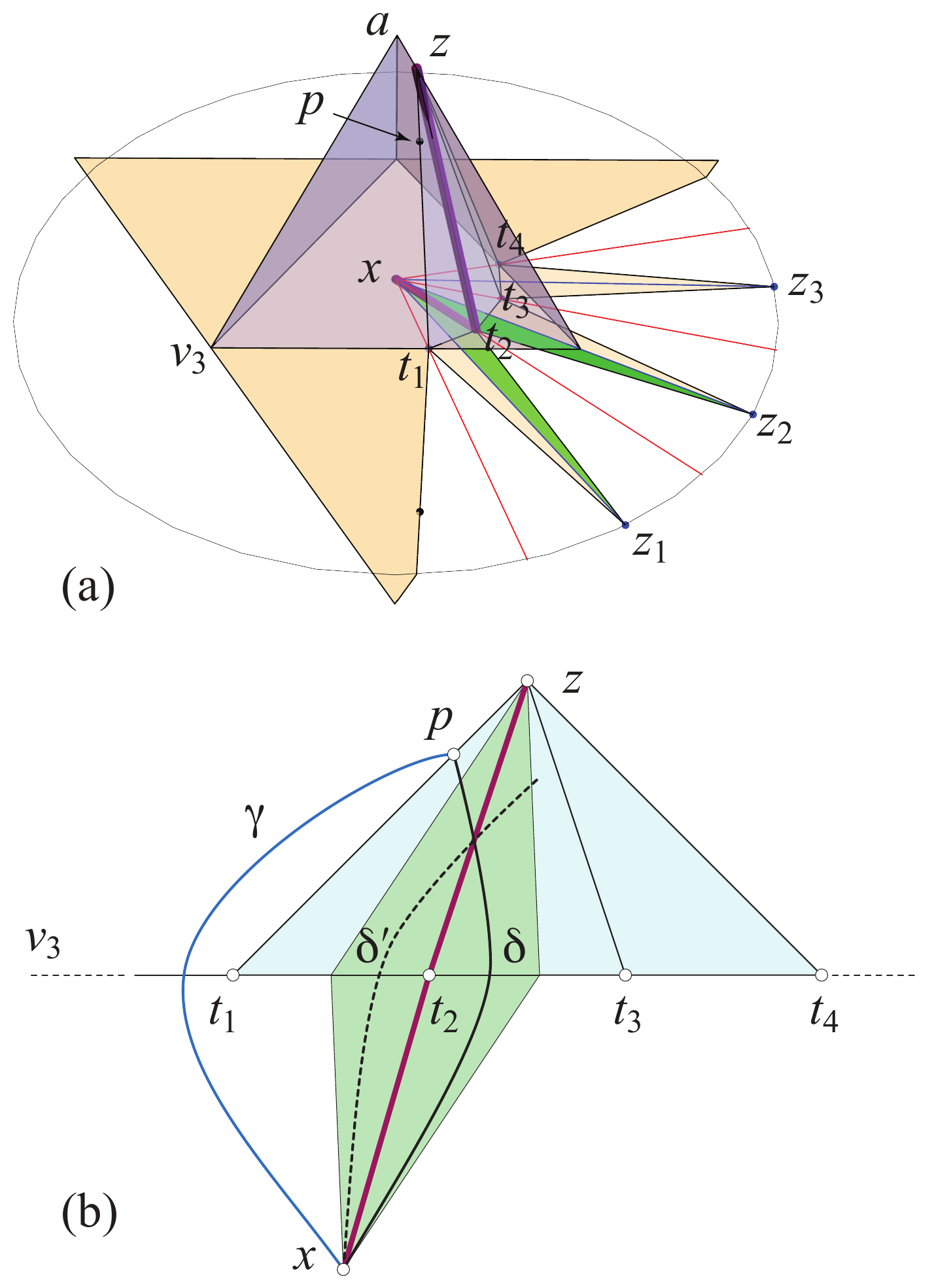}
\caption{Proof that $p \in z t_1$ is on $\C(x)$.
(a)~Quasigeodesic $q_2 = x t_2 z$ shown purple
and congruent triangles $x t_2 z_1$ and  $x t_2 z_2$ shaded green.
(b)~Abstract picture 
depicting geodesic segments $\g, \d, \d'$.}
\figlab{Abstract_proof}
\end{figure}

Let $s$ be the point at which $\d$ crosses $t_2 t_3$, $ \{s \} = \d \cap t_2 t_3$.
First assume that $s$ lies in the triangle $x t_2 z_2$.
Then $\d$ remains in $x t_2 z_2$ until it crosses $q_2$.
Then there must be another geodesic arc $\d'$ symmetric with $\d$ about $q_2$,
as illustrated in (b).
So $\d$ and $\d'$ meet at a point of $q_2$.
Because $\d$ and $\d'$ have the same length, neither can be a shortest path
beyond that point of intersection. Therefore $\d$ cannot reach $p$
as a geodesic segment.

Second, if $s$ instead lies in the triangle $x t_3 z_2$, then it is clear from the planar image in (a) of
the figure that $\d$ cannot cross the segment $x z_2$ clockwise, which it must to reach $p$
from the right in the figures. So $\d$ must head counterclockwise, 
crossing $q_3 = x t_3 z$. Then the same argument applies,
based this time on the local intrinsic symmetry about $q_3$, 
and shows that $\d$ cannot be a shortest path beyond $q_3$.

We have established that every point $p$ on $z t_1$ is on $\C(x)$, and
so $z t_1 \subset \C(x)$.
The same argument applies to $z t_{k+1}$, the rightmost truncation edge in the figures.

So now we know that two geodesic segments from $x$ to $z$ cross
$t_1 t_2$ and $t_k t_{k+1}$.
These two segments determine a digon $D$ within which the remaining segments
of $\C(x)$ lie.
But within $D$ we have local intrinsic symmetry with respect to the quasigeodesics
$q_i = x t_i z$, because $q_i$ is surrounded by the 
congruent triangles $x t_i z_{i-1}$ and $x t_i z_i$.
Therefore, the previous argument shows that all the edges $zt_i$ are included on $\C(x)$.
\end{proof}

\medskip

We now return to the claim that the three subtrees descendant from $a$ do
not interfere with one another.

\begin{lem}
\lemlab{NoInterference}
The truncations for one subtree descendant of apex $a$
do not interfere with another subtree descendant.
\end{lem}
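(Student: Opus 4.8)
The plan is to make each of the three subtrees occupy an arbitrarily small, isolated neighborhood of its corresponding base vertex, so that the three truncation processes act on disjoint portions of $P$ and hence cannot influence one another.

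First I would exploit the freedom, noted in the construction of Section~\secref{ConstructionDetails}, to place each child $z_i$ of $a$ anywhere in the interior of its lateral edge $a v_i$. Fix attention on the subtree rooted at $z_1 \in a v_1$. Its first-level truncation replaces the corner at $v_1$ by the truncation chain $t_1,\dots,t_{k+1}$, with $t_1 \in v_1 v_n$ and $t_{k+1} \in v_1 v_2$; every later truncation in the subtree is cut from an edge $z_1 t_i$, or recursively from an edge of a descendant triangle, and hence its truncation triangles remain within the union $\Delta_1$ of the first-level triangles $z_1 t_i t_{i+1}$ and the base region they bound. Thus the entire \emph{new} geometry created for the subtree at $v_1$ is contained in $\Delta_1$. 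A direct continuity check on the construction formulas shows that, as $z_1 \to v_1$ along $a v_1$, one has $t_i \to v_1$ for every $i$, so $\Delta_1$ shrinks to the single point $v_1$. Consequently, for any prescribed $\varepsilon>0$, choosing $z_1$ sufficiently close to $v_1$ confines the whole subtree footprint $\Delta_1$ to the surface ball $N_1=B(v_1,\varepsilon)$ on $P$; the only part of the subtree reaching away from $v_1$ is the lateral edge $a z_1$, which is merely the branch attaching the subtree to the central tripod at $a$.

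Running this for all three subtrees and taking $\varepsilon$ smaller than one third of $\min_{i\neq j}|v_i v_j|$, I obtain pairwise disjoint neighborhoods $N_1,N_2,N_3$ containing the three footprints. Since the lateral edges $a v_1, a v_2, a v_3$ meet only at $a$, the new geometries are then pairwise disjoint and the truncation processes are geometrically independent. It remains to check independence for the cut locus, i.e., that introducing the truncations for subtree $j$ does not alter the shortest paths relevant to subtree $i$. For this I would show that every shortest path from $x$ to a point of $N_i$ stays inside a thin sector $S_i$ about the straight ray $x v_i$, and that $S_1, S_2, S_3$ are pairwise disjoint once $\varepsilon$ is small. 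This follows from a continuity argument: as the truncations shrink, $P$ converges, as an Alexandrov surface, to the regular tetrahedron, on which the geodesics from the centroid $x$ to a neighborhood of $v_i$ are $C^0$-close to the segment $x v_i$ and well separated from those aimed at $v_j$; small truncations preserve this separation. Hence Lemma~\lemref{TruncEdges} applies to each subtree in isolation, which is exactly the independence asserted.

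The main obstacle is this last, path-interference, step: one must rule out a long geodesic that winds around $P$ and enters another subtree's neighborhood, producing an unwanted companion path. I would dispatch it by the smallness and continuity estimate above, using compactness of the relevant family of geodesics on the near-tetrahedral $P$ to convert the $C^0$-closeness of geodesics into a uniform angular separation of the sectors $S_i$. Granting that separation, the geometric disjointness of the $N_i$ immediately yields that no truncation of one subtree can serve as a shortcut or a bisector for another, completing the proof.
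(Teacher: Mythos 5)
Your containment claim for the recursion is the step that fails, and it is precisely the difficulty the paper's proof is built around. You assert that every later truncation ``remains within the union $\Delta_1$ of the first-level triangles $z_1 t_i t_{i+1}$ and the base region they bound.'' That is false for children placed on the \emph{extreme} edges $z_1 t_1$ and $z_1 t_{k+1}$: truncating a child $z'$ on $z_1 t_1$ cuts off the base vertex $t_1$, and the new truncation chain must end on the two base edges incident to $t_1$ --- one of which is the portion of $v_1 v_n$ \emph{beyond} $t_1$, outside $\Delta_1$. So the footprint creeps outward along the base at every recursion level on an extreme edge, and your shrinking estimate ($\Delta_1 \to \{v_1\}$ as $z_1 \to v_1$) only controls the first level, not the cumulative creep. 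The paper's proof of Lemma~\lemref{NoInterference} identifies exactly this as ``the only concern'' and dispatches it differently: it shows $t_1$ approaches the line $x z_0$ as $k \to \infty$, so the first-level chain stays a margin $\e$ on the $v_1$-side of the midpoint of $v_1 v_3$, and then uses the monotonicity of $t_1$ toward $v_1$ as $z$ moves down $a v_1$ to widen $\e$ and absorb the extreme-edge truncations, keeping the whole subtree between the midpoints of $v_1 v_3$ and $v_1 v_2$ for \emph{every} placement of $z$. Your approach could be patched --- the tree has finite depth, so one can choose each descendant's position close enough to its base endpoint that the total outward creep stays below $\e$ --- but that inductive control is the missing idea, and without it the proof does not go through.

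Two further remarks. First, your ``main obstacle'' (a geodesic winding around $P$ into another subtree's neighborhood) is flagged but not actually proved: the Alexandrov-convergence/compactness argument is left as a sketch (``I would show\dots'', ``Granting that separation\dots''), and it is also heavier than needed, since the paper's Lemma~\lemref{TruncEdges} is already local --- the digon $D$ bounded by the two geodesic segments from $x$ to $z$ and the intrinsic symmetry about the quasigeodesics $q_i = x t_i z$ confine the relevant shortest-path comparisons, so only disjointness of the footprints is required. Second, even patched, your proof establishes non-interference only for $z_i$ sufficiently close to $v_i$, a weaker statement than the paper's, which holds for any placement of $z$ in the interior of $a v_1$; that full freedom is what the paper later counts as the $n-3$ free parameters in Proposition~\propref{Uncountable} (your version would still give open intervals, hence the same dimension bound, but the restriction should at least be acknowledged).
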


\begin{proof}
First, as $k \to \infty$, $t_1$ approaches the line $x z_0$.
This is evident in Fig.~\figref{k=8_nolab} where $k=8$.
Thus the leftmost truncation triangle stays to the $v_1$-side of the midpoint of $v_1 v_3$,
say by $\e$.
Second, subsequent truncations to all but the extreme edges $z t_1$ and $z t_{k+1}$
stay inside the $t_1,\ldots,t_k$ chain.
The only concern would be that truncation of the $z t_1$ edge crossed 
the midpoint of $v_1 v_3$ (and so possibly interfering with truncations of $a v_3$).
However, as is evident in the earlier Fig.~\figref{Tfour_3d},
the position of $t_1$ moves monotonically toward $v_1$ as $z$ moves down $a v_1$.
Thus we can widen $\e$ to accommodate a truncation of $z t_1$ (or of $z t_{k+1}$).
So the entire subtree rooted at $z$ stays between the midpoints of $v_1 v_3$ and $v_1 v_2$.
\end{proof}

\medskip
\noindent
Further examples are shown in Section~\secref{Examples}:
$k=4$ in Figs.~\figref{k=4_nolab} and~\figref{k=4j=3_nolab},
and
$k=8$ in Figs.~\figref{k=8_nolab} and~\figref{k=8j=1_nolab}.

\medskip
Lemmas~\lemref{TruncEdges} and~\lemref{NoInterference} together establish this case of Theorem~\thmref{EveryTree}:
$\C(x) \subset \Sk(P)$ matches the given $\T$.

\bigskip

In this section we proved Theorem~\thmref{EveryTree} 
for trees $\T$ without degree-$2$ nodes.
Our construction can be viewed as realizing degree-$2$ nodes 
of $\T$ with flat ``vertices'' on $\Sk(P)$---points interior to edges of $P$.
The passage from flat vertices to positive curvature vertices is a long proof,\footnote{
In this respect, there is some similarity to the proof of Steinitz's Theorem.}
accomplished in the following,
after giving a few more examples for the current construction in the next section.


\section{Further Examples}
\seclab{Examples}

In this section we illustrate the previous construction with more examples.

\begin{figure}[htbp]
\centering
\includegraphics[width=0.8\columnwidth]{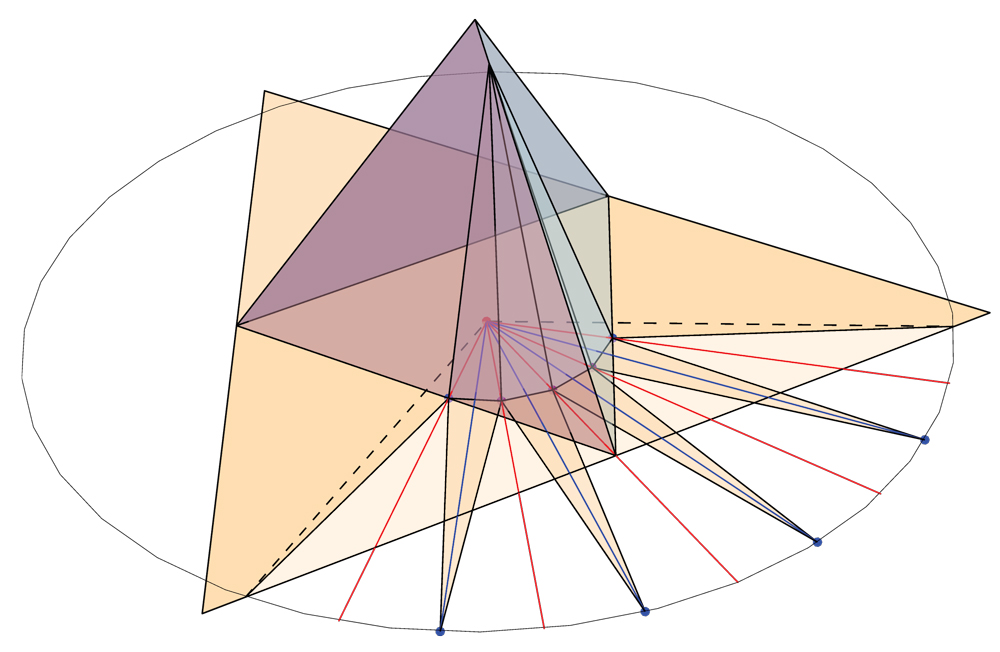}
\caption{$k=4$.}
\figlab{k=4_nolab}
\bigskip
\centering
\includegraphics[width=0.8\columnwidth]{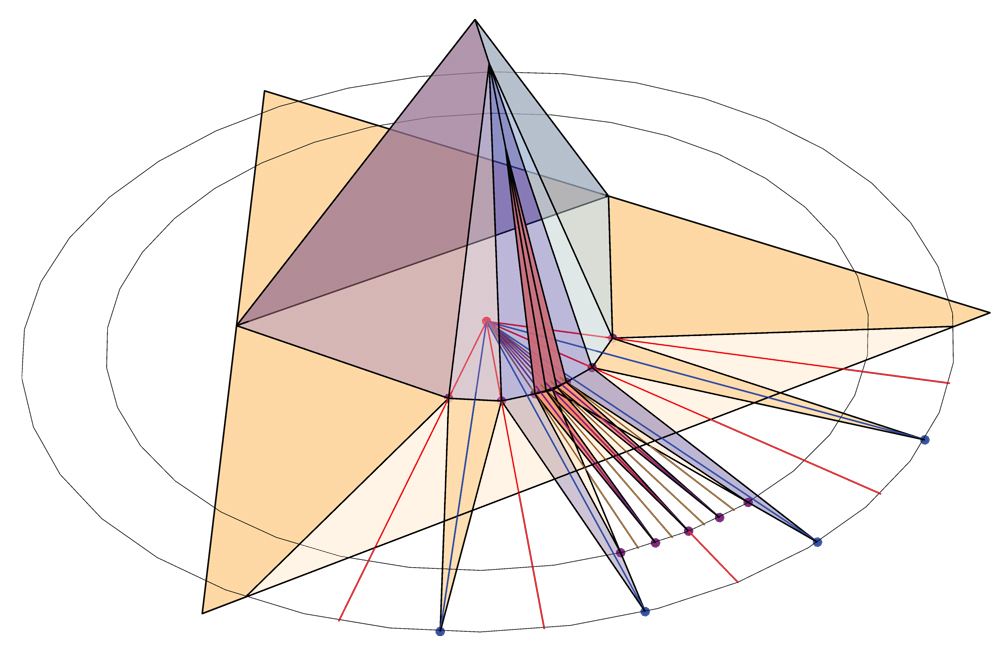}
\caption{$k=4$, $j=3$.}
\figlab{k=4j=3_nolab}
\end{figure}

\newpage 
\begin{figure}[htbp]
\centering
\includegraphics[width=1.0\columnwidth]{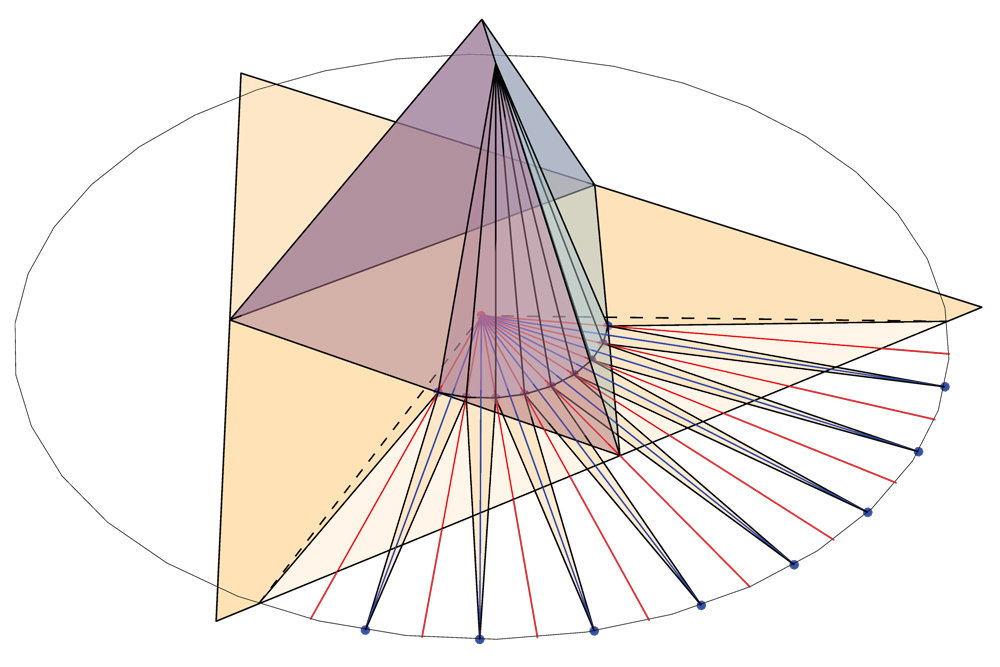}
\caption{$k=8$.}
\figlab{k=8_nolab}
\bigskip\bigskip 
\centering
\includegraphics[width=1.0\columnwidth]{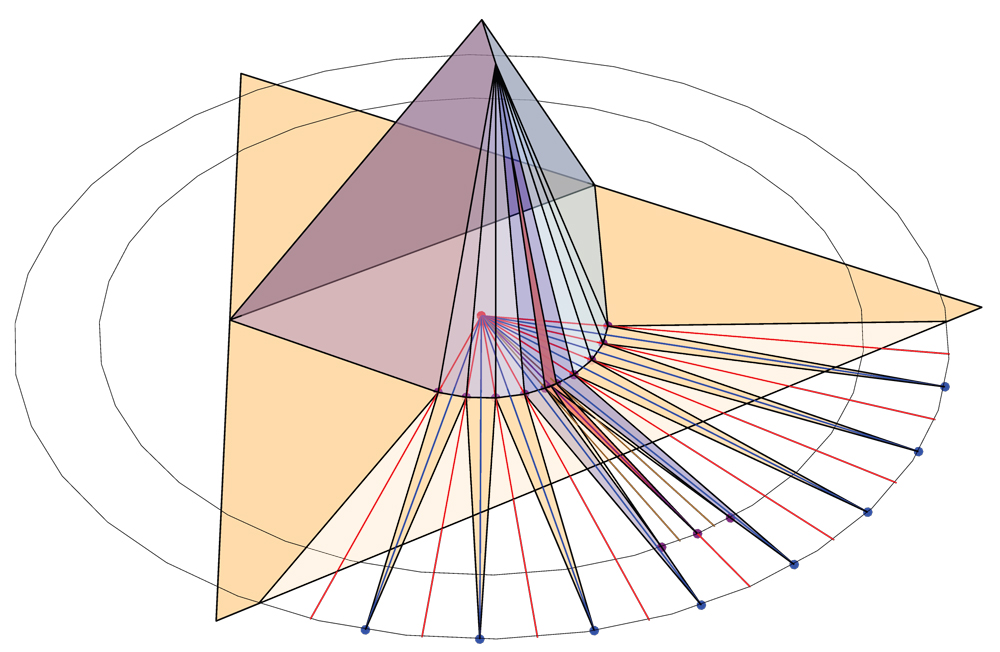}
\caption{$k=8$, $j=1$.}
\figlab{k=8j=1_nolab}
\end{figure}

\clearpage
\section{Degree-$2$ Nodes: Four Cases}
\seclab{deg2}

We turn now to degree-$2$ nodes.
The overall plan is to start with a zero-curvature degree-$2$ node $u$ identified
on an edge $ab$ of $\C(x)$.
Then conveniently bend the edge at $u$ by moving $b$
so that $u$ gains positive curvature, 
while maintaining that $\C(x)$ includes $a u$ and $u b$.

The bending at $u$
introduces two new polyhedron edges incident to $u$ on each side.
Those two new edges could both end on the base,
or one terminating on the base and the other on a non-base vertex,
or both edges terminating on a non-base vertices.
See Fig.~\figref{FourCases} for examples of each case,
and Fig.~\figref{StackedPyramids} for a polyhedron falling in Case~(d).
Each case will be further described in the appropriate section.

\begin{figure}[htbp]
\centering
\includegraphics[width=1.0\textwidth]{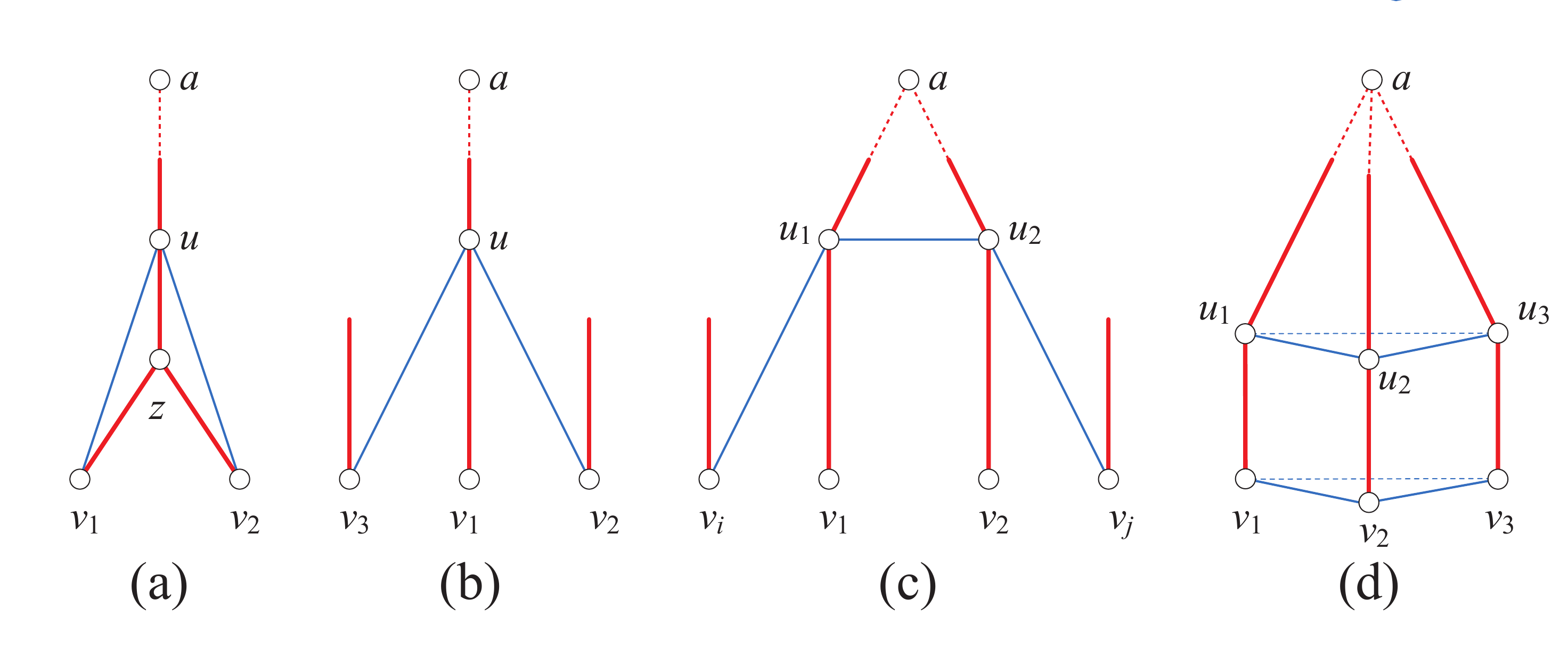}
\caption{Examples of four cases. Red: $\C(x)$ edges. Blue: edges of $P$.
}
\figlab{FourCases}
\end{figure}

\begin{figure}[htbp]
\centering
\includegraphics[width=0.95\linewidth]{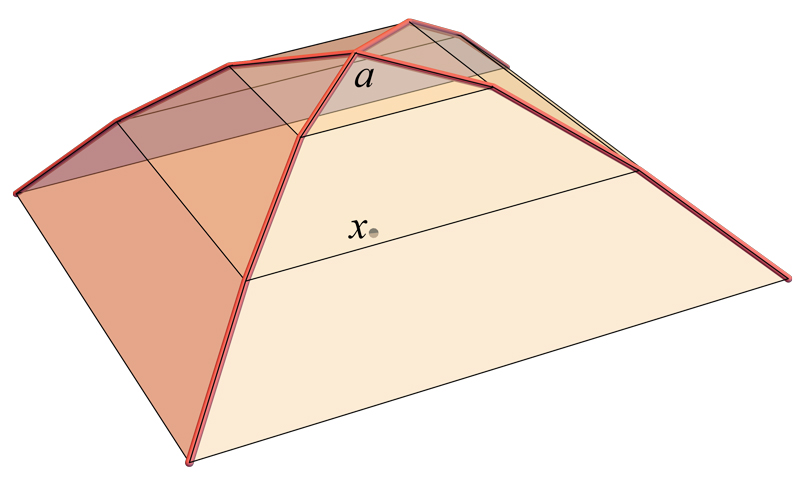}
\caption{Case~(d): Cycles of degree-$2$ nodes.}
\figlab{StackedPyramids}
\end{figure}


\subsection{Case~(a)}
\seclab{Case_a}
In both Case~(a) and Case~(b), a degree-$2$ vertex $u$ is connected on both sides to
base vertices $v_i, v_j$.
Case~(a) occurs when $u$ is a parent of a non-base vertex $z$, whereas in Case~(b),
$u$ is a parent of a base vertex. See
Fig.~\figref{FourCases}(a,b).

In Case~(a), the degree-$2$ vertex $u$ can realized by modifying the construction
that achieves Lemma~\lemref{TruncEdges}.
It will suffice to show how to deal with a degree-$2$ node $u$ a child of apex $a$ in the tree $\T$, and $z$ a child of $u$ of degree $\ge 3$.
The construction generalizes to arbitrary placements of such degree-$2$ nodes.

So let $u$ be on edge $a v_1$ but $z$ on edge $u v'_1$, where $v'_1 \neq v_1$ is on the line segment $x v_1$.
See Fig.~\figref{Deg2_k1}.
Thus $u$ is a degree-$4$ vertex of $P$, but we want to arrange that two of its edges
are not part of $\C(x)$.
The two segments $a u$ and $u z$ are in $\C(x)$, as they lie on the vertical 
symmetry plane containing $a x v_1$.

\begin{figure}[htbp]
\centering
\includegraphics[width=1.0\textwidth]{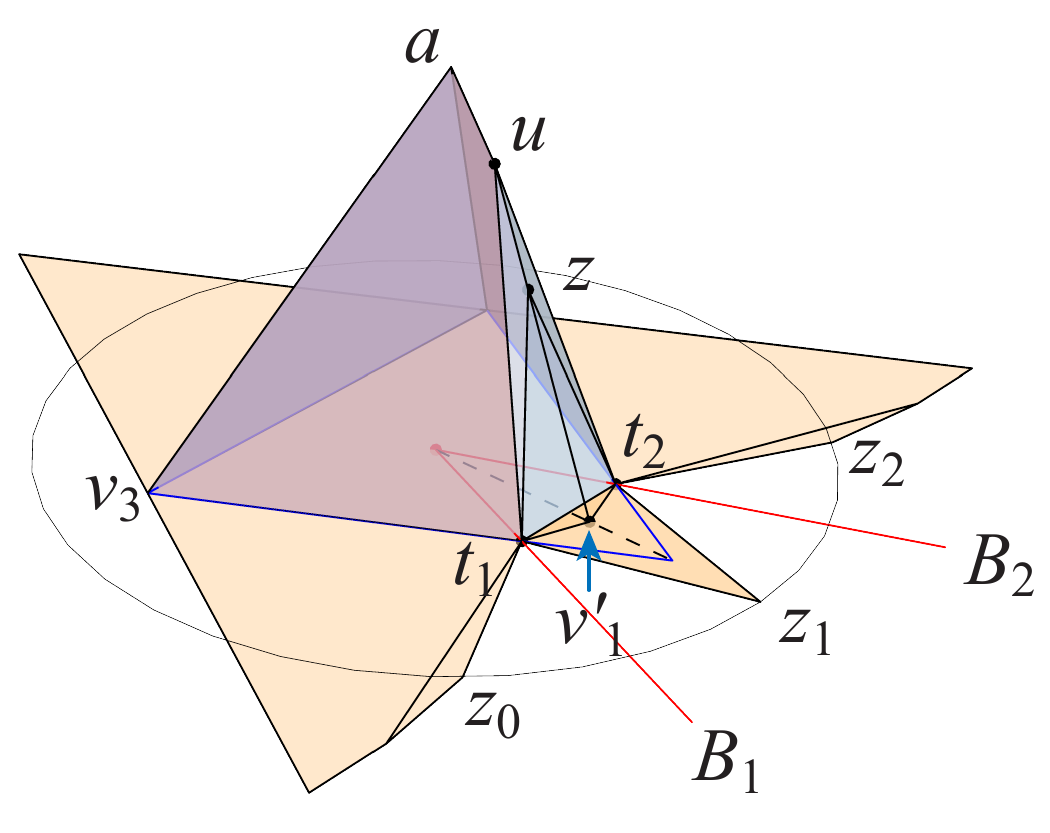}
\caption{Case~(a). $u$ is degree-$2$ node. $k=1$ truncation at $z$.}
\figlab{Deg2_k1}
\end{figure}

Note that the triangle $u z t_1$ is not coplanar with the left face $v_3 t_1 u a$.
Still, when we truncate through $z$, then cut the truncation edges and unfold,
that triangle $u z t_1$ unfolds attached to the unfolding of the left face.
We perform the same calculations to truncate $k$ times at $z$, and the same logic
(bisectors $B_i$ and mediator planes $\Pi_i$) leads to the conclusion that the
truncation edges are part of $\C(x)$.

The two side edges $u t_1$ and $u t_2$ are not part of $\C(x)$:
a point $p \in u t_1$ is closer to $x$ via a geodesic segment up the left face, 
closer than any other path from $x$ to $p$.
So $u$ has degree-$4$ in $\Sk(P)$ but degree-$2$ in $\C(x)$.


\subsection{Case~(b)}
\seclab{Case_b}
In this case, a degree-$2$ node $u$ is a parent of a leaf node $v_1$,
as illustrated in Fig.~\figref{FourCases}(b). 
This requires the two non-$\C(x)$ edges incident to $u$ to connect
to adjacent base vertices $v_2$ and $v_3$, and consequently to other branches of $\C(x)$. 

We confine ourselves now to starting with a regular tetrahedron 
with $x$ at the centroid of the base and $a$ the apex, as in Case~(a).
We first describe the construction at a high level, concentrating on $v_3$,
with the understanding $v_2$ will be handled similarly.

For the regular tetrahedron, the surface angle incident to $v_3$
can be partitioned into two halves,
to each side of the path $q = x v_3 \cup v_3 a$,
$\a$ to the left of $q$, with
$\a = \angle x v_3 v_2 + \angle v_2 v_3 a$,
and $\b$ to the right of $q$, with 
$\b = \angle x v_3 v_1 + \angle v_3 v_3 a$.
Because $\a=\b$, the edge $v_3 a \subset \C(x)$.
Introducing $u$ on the path $a u v_1$ 
modifies the angle at $v_3$ to the right, to $\b' < \b$, breaking the bisection.
So our goal is to temporarily increase $\b'$ by altering the path $a u v_1$,
so that by further truncations we could decrease it to $\a$.

In particular, $u$ will connect to $v''_1 \neq v_1$.
(See ahead to Fig.~\figref{Caseb}.)
The vertices $v_2,v_3,a$ and the source $x$ remain unaltered,
which ensures the changes are localized, and that the construction
generalizes beyond the regular tetrahedron.

Let $v'_1$ be a point on the ray $x v_1$, beyond but close to $v_1$.
(For now, this slightly increases $\b'$.)
We'll place $u$ on the segment $a v_1'$.
But first we locate an auxiliary variable point $y \in a v'_1$
which will bound $u$ to lie above $y$ on the segment $y a$.
The construction will work for any $u$ in that range.

Let $y_1$ be the vertical projection of $y$ onto $x v'_1$.
Now consider the polyhedron $P = \conv \{v_2,v_3, a, y, y_1 \}$.
We examine the angle $\d$ incident to $v_3$ from the right of the
path $q$. Angle $\d$ will serve as $\b'$.
It is composed of three angles: on the base, the face including vertical
edge $y \, y_1$, and the face including $a$:
\begin{align*}
\d = \angle x v_3 y_1 + \angle y_1 v_3 y + \angle y v_3 a \;.
\end{align*}
See Fig.~\figref{DeltaPoyh}.
\begin{figure}[htbp]
\centering
\includegraphics[width=0.75\textwidth]{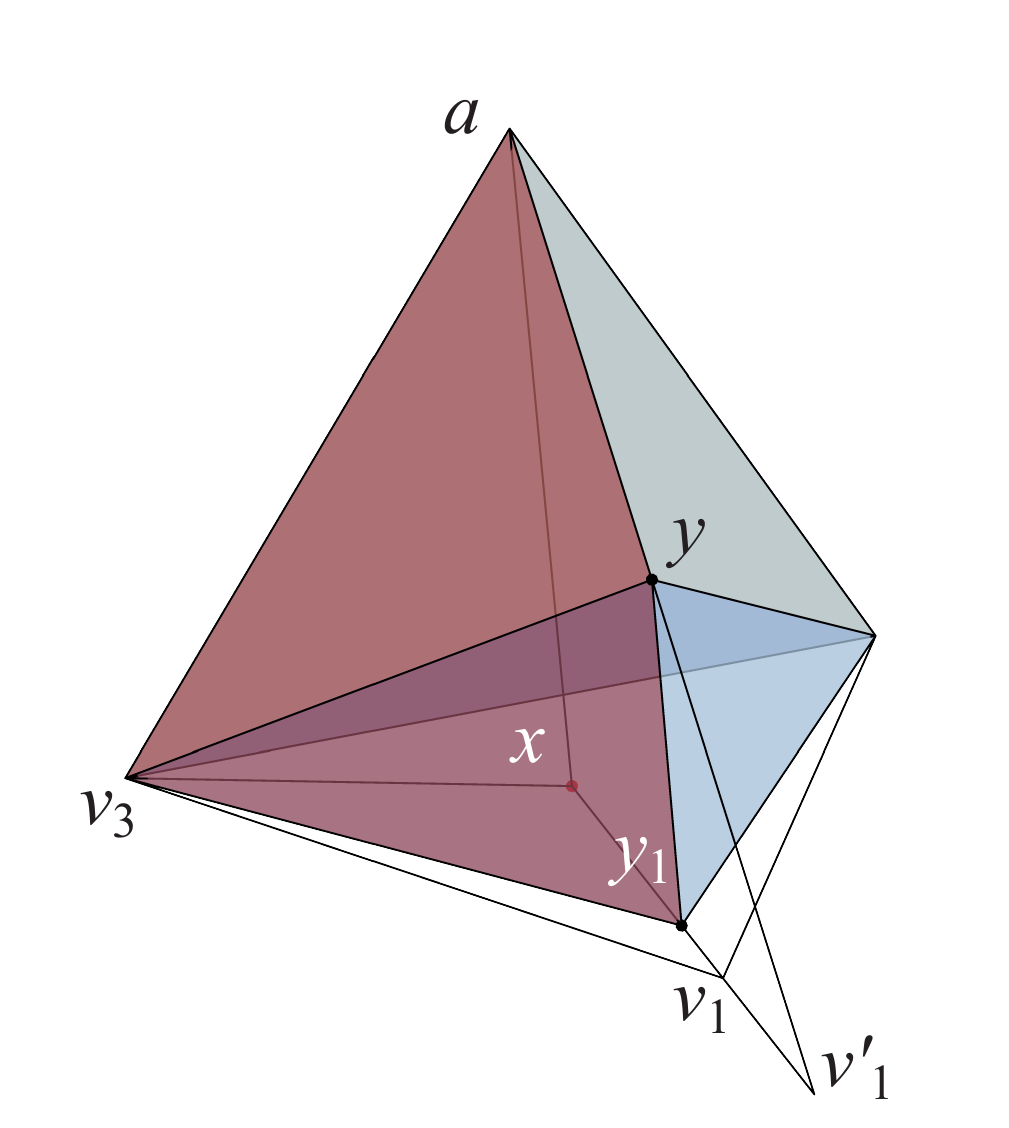}
\caption{Case~(b). 
The total angle at $v_3$ is 
$\angle x v_3 y_1 + \angle y_1 v_3 y + \angle y v_3 a$.
}
\figlab{DeltaPoyh}
\end{figure}

Next view $y=y(t)$ as continuously varying on $a v'_1$, with
$y(0) = a$ and $y(1) = v'_1$.
So $\d=\d(t)$ is also varying continuously.
We now argue that the extreme values of $\d$ are less than and respectively greater than
the fixed angle $\a$.

\begin{itemize}
\item At $t=0$, $\d(0) = \angle x v_3 a$ because the first two angle terms
are zero when $y \, y_1 = a x$.
This angle $\d(0)$ is smaller than $\a$ because 
it is smaller than each of the two angles comprising $\a$. 
\item At $t=0$, $\d(1) = \angle x v_3 v'_1 + \angle v'_1 v_3 a$,
because $y = y_1 = v'_1$.
This angle $\d(0)$ is larger than $\a$ by
the triangle inequality for spherical distances
(see e.g., Lemma~2.8 in~\cite{Reshaping}).
\end{itemize}

Therefore, there is some $t$ such that $\d(t) = \a$.
We henceforth define $y$ to be that $y(t)$.
The resulting polyhedron $P=P(t)$ achieves $\a = \b$.
Thus if we place $u$ at $y$, we have achieved our goal of increasing $\b'$ to $\b$,
and thus ensuring that $v_3 a \in \C(x)$.
However, for more complicated trees, we may need to truncate the 
vertical edge $y \, y_1$,
so we would like choose $u$ to slant to $v''_1$.

We now claim that there is some $v''_1$ on $x v'_1$ such
that the three angles to the right of $q$ sum to exactly $\a$.
The analogous three angles are as in the $\d$ argument above;
see Fig.~\figref{Caseb}.

\begin{align*}
\angle x v_3 y_1 + \angle y_1 v_3 u + \angle u v_3 a < \a \;.
\end{align*}
%
\begin{align*}
\angle x v_3 v'_1 + \angle v'_1 v_3 u + \angle u v_3 a > \a \;.
\end{align*}
%
Therefore there is some $v''_1 \in y_1 v'_1$ so that
the construction using the slanted edge $u v''_1$ achieves $\a$,
guaranteeing that $\a = \b$ and $v_3 a \in \C(x)$.
(Note that $v''_1$ could be closer to $x$ than $v_1$, or further.)

As in Case~(a),
the two segments $a u$ and $u v_1''$ are in $\C(x)$, 
as they lie on the vertical symmetry plane containing $a x v_1''$.

\begin{figure}[htbp]
\centering
\includegraphics[width=0.75\textwidth]{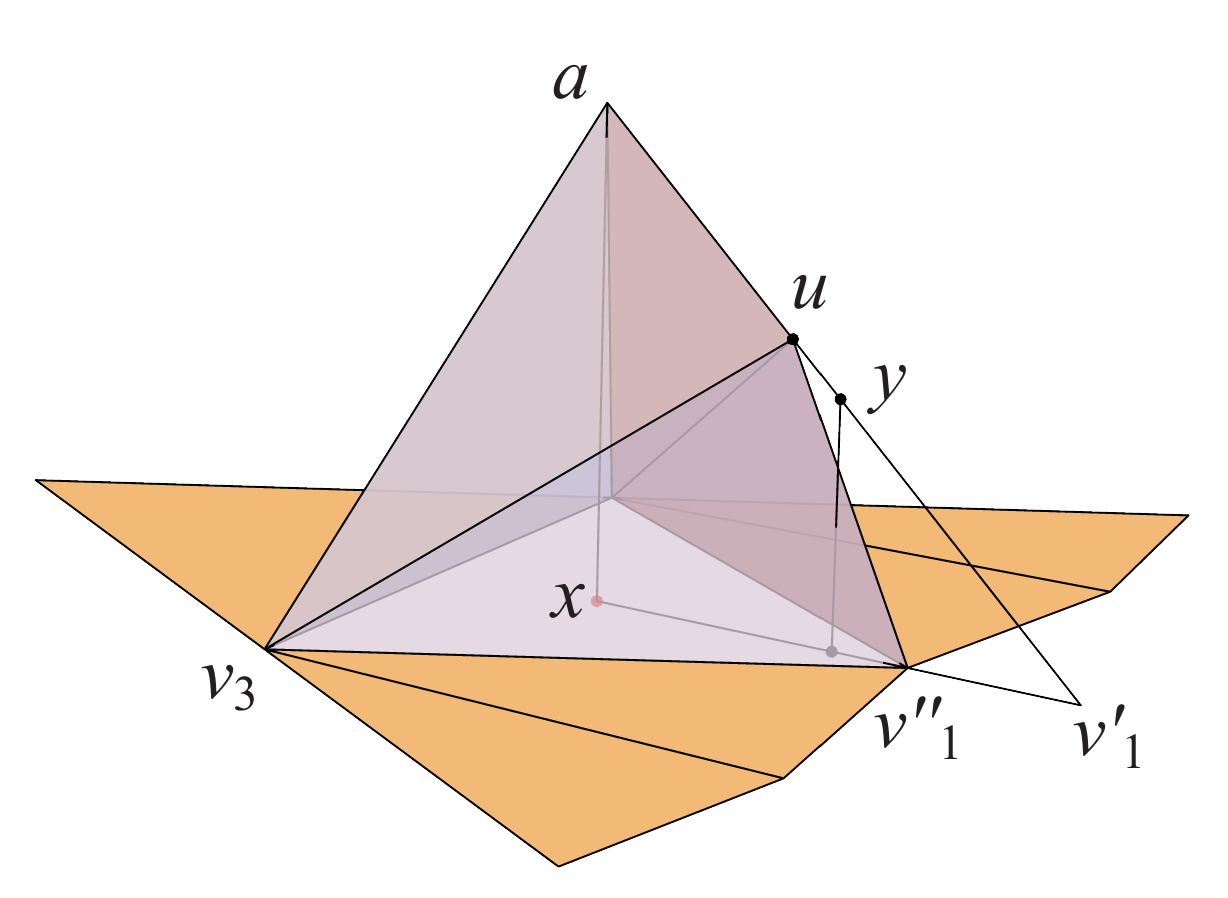}
\caption{Case~(b): Source unfolding after locating $v''_1 \in y_1 v'_1$.}
\figlab{Caseb}
\end{figure}

We note here that, if there are several nodes of degree-$2$ 
falling under Case~(b)
in a cascade, they should be treated 
together by a similar procedure.

\clearpage
\subsection{Case~(c)}
\seclab{Case_c}

It is characteristic of Case~(c) that a chain $C$ of degree-$2$ nodes, 
represented by the single edge $u_1 u_2$ in Fig.~\figref{FourCases}(c), 
connects to base vertices $v_i, v_j$, $1\leq i<j \leq n$, on either end of $C$.
In contrast, in Case~(d) the chain closes to itself, as in
Fig.~\figref{StackedPyramids}.


The proof in Case~(c) is quite involved and comprises several steps.
Some of those steps describe the construction, while the others give the necessary argument for the construction to work.
The proof falls roughly into two parts,
the distinction of which will be useful for Remark~\rmkref{About Case (c)} at the end of the proof.
\textbf{Part~I} (Item~\ref{Item 1} to Item~\ref{dist-x-a'-a13}) finds the points $a_{13}, u_{13}$, 
and then \textbf{Part~II} (Item~\ref{a-12} to Item~\ref{a=b}) finds the points $a_{12}, u_{12}, y_1, u_{21}$.
These points will be defined as they occur in the proof.


\medskip
\noindent
\textbf{Part~I}.
\begin{enumerate}[(1)]
\item 
\label{Item 1}
Start with the regular tetrahedron\footnote{
The proof works, with minor changes, for arbitrary regular pyramids.} 
$S= av_1v_2v_3$, and let $x$ be the projection of $a$ on the base plane $\Pi_b = v_1v_2v_3$.

Take $v_3' \in [xv_3]$.

This has two purposes. One is to destroy the chance of obtaining in the end as a solution precisely the original tetrahedron $S$.
The other one is to assure the convexity of the resulting polyhedron $R$, and will become apparent at Item~\ref{convexity}.

Denote by $T$ the tetrahedron $a v_1 v_2 v_3'$.

\item Choose $u_1 \in [av_1]$ and take $u_2 \in [av_2]$ such that $|au_1|=|au_2|$.

%

Fig.~\figref{Proof_u12_3} illustrates the setup so far.
The vertices $\{ v_3', a, u_1, u_2 \}$ will henceforth remain fixed.
The plan is to move $v_1 v_2$ parallel to itself toward $x$, to $y_1 y_2$.
This will introduce a bend at $u_1$ and $u_2$. The remainder of the argument aims
to calculate $y_1 y_2$ so that $\C(x)$ includes $a u_1 y_1$ and $a u_2 y_2$.

It may help to look ahead to the final construction:
see Fig.~\figref{Proof_u12_12}.
In some sense, we are constructing the source unfolding of the 
``to-be-found'' polyhedron.

%

Because of the symmetry of $T$, we will concentrate on the $u_1$ side of $T$.

The triangle $v_3' a u_1$ of $T$ is now fixed;
it will become a face of the final polyhedron.


In the source unfolding, it will become $v_3' a_{13} u_{13}$.
A key is locating this unfolded face, i.e., determining the position of $a_{13}$.

\begin{figure}[htbp]
\centering
\includegraphics[width=1.0\textwidth]{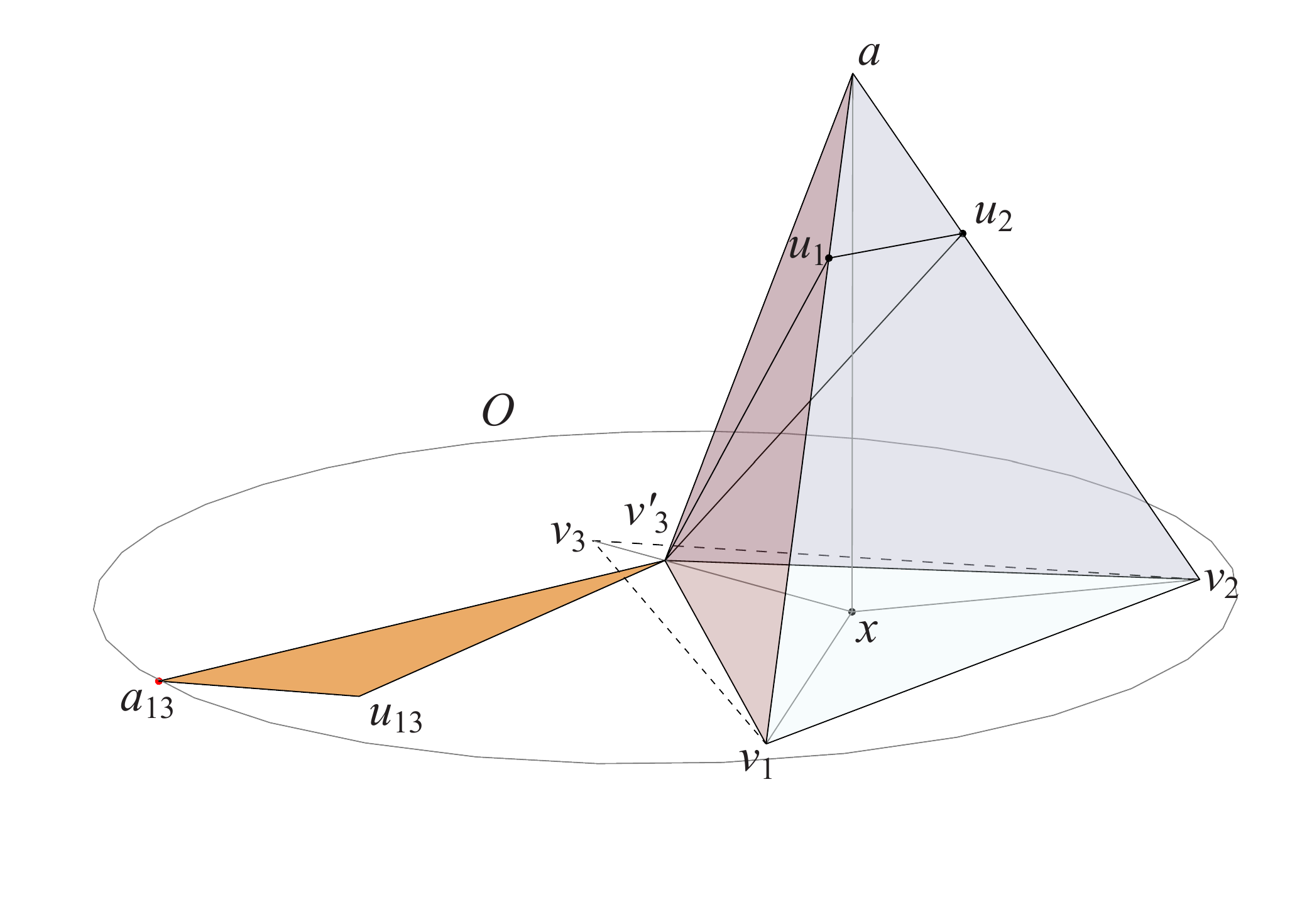}
\caption{$v_3$ has been moved to $v_3'$. 
The unfolding of face $v_3' a u_1$ is shown.}
\figlab{Proof_u12_3}
\end{figure}

%
\item
\label{phi*}
Let $\q_{13},\q_{12}$ be the angles at the apex $a$:

$$\q_{13} = \angle u_1av_3' \; \; {\rm and}\; \; \q_{12} = \angle u_1 a u_2.$$

A key angle will be $\q_{13}-\q_{12}/2$.
Assuming the construction is finished and looking
ahead, the argument will partition $\q_{13}$ as follows;
see Fig.~\figref{Proof_u12_6}.
%
\begin{figure}[htbp]
\centering
\includegraphics[width=1.0\textwidth]{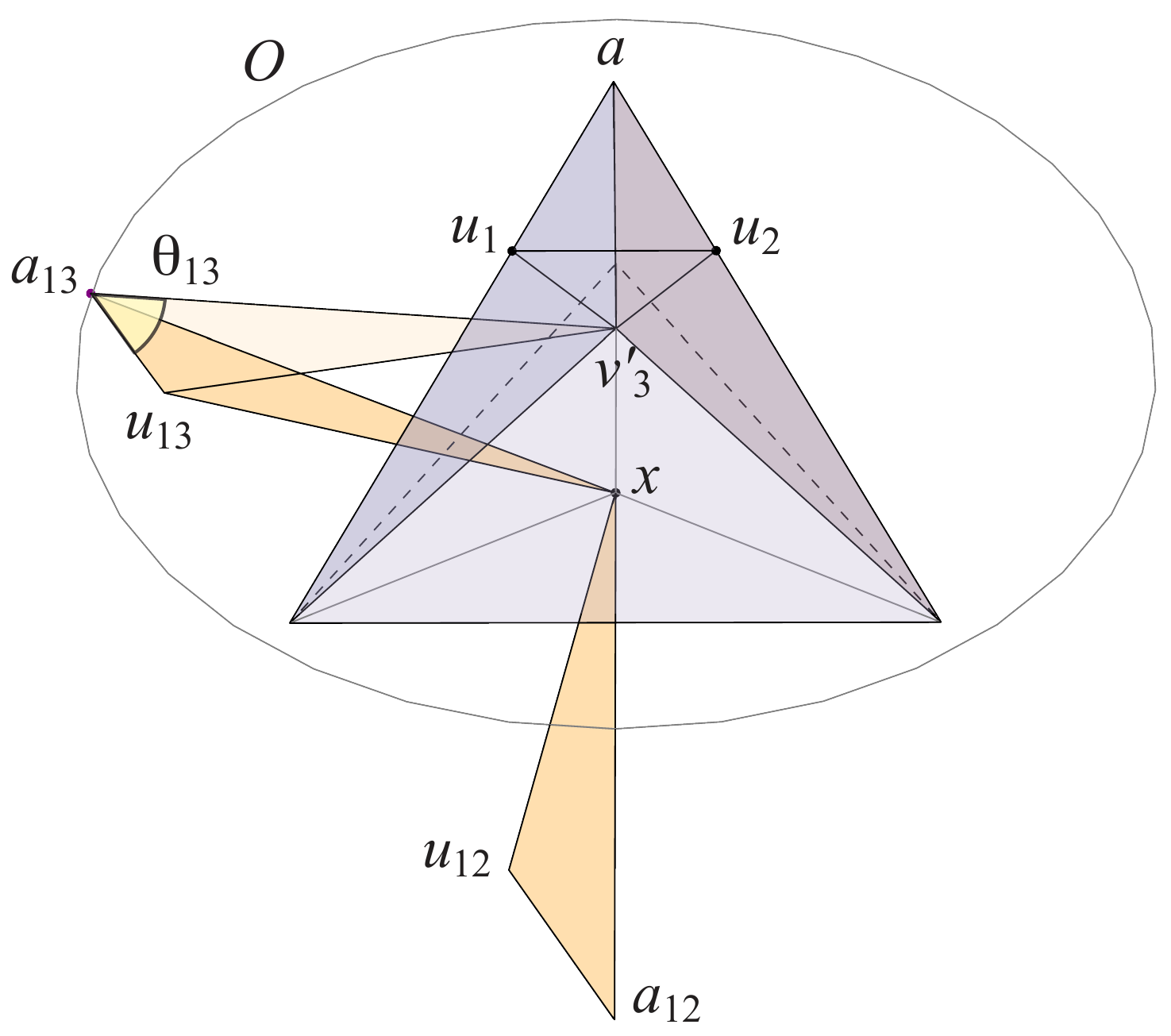}
\caption{$\q_{13} - \q_{12}/2$.
The labeled points $a,u_1,u_2$ are
above the base plane, and all the other labeled points lie in the base plane.
}
\figlab{Proof_u12_6}
\end{figure}
%
\begin{align*}
\q_{13} & = \angle u_1 a v_3' \\
& = \angle u_{13} a_{13} v_3' \\
& = \angle x a_{13} v_3' + \angle x a_{13} u_{13} \\
& = \angle x a_{13} v_3' + \angle x a_{12} u_{12} \\
& = \angle x a_{13} v_3'  + \q_{12}/2 \\
\q_{13} - \q_{12}/2 & = \angle x a_{13} v_3' 
\end{align*}


\item 
Notice that the source unfolding of the target polyhedron $R$ would produce a point $a_{13}$ (an image of $a$) in $\Pi_b$ with $|a_{13} v_3'|=|av_3'|$.
So we consider in $\Pi_b$ the circle $O$ centered at $v_3'$ and radius $|av_3'|$, and will determine $a_{13} \in O$ by Items~\ref{min-max-p} and~\ref{a13}.

Notice that $x$ is inside $O$.

\item
\label{min-max-p}
\begin{lem}
\lemlab{mi-max-p}
Consider a variable point $z \in O$.
The extreme values of $\p= \p (z) =\angle xzv_3'$ are as follows.

The minimum value of $\p$ is $0$, achieved precisely for $z,x,v_3'$ collinear.

The maximum $\p_0$ of $\p$ is obtained precisely for $zx$ perpendicular to $xv_3'$, hence for two positions of $z$.

Moreover, on each of the four arcs of $O$ determined by those extreme values, 
$\p (z) =\angle xzv_3'$ is a strictly monotone function on $z$.
\end{lem}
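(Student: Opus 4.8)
The plan is to reduce the statement to a one-variable trigonometric analysis and then read off the geometry of the critical points. Place $v_3'$ at the origin of the base plane $\Pi_b$, write $\rho = |a v_3'|$ for the radius of $O$ and $d = |x v_3'|$ for the distance from the fixed interior point $x$ to the center; since $x$ lies inside $O$ we have $0 < d < \rho$. Because the whole configuration is symmetric under reflection in the line $x v_3'$, the function $\p(z)$ is invariant under that reflection, so it suffices to study $z$ on one of the two semicircles cut off by the line $x v_3'$ and then mirror the conclusions. I would parametrize such a $z$ by the central angle $\psi = \angle z v_3' x \in [0,\pi]$, which is a strictly monotone coordinate along the semicircle.

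Next I would obtain a closed form for $\p$. In the triangle $x z v_3'$ the two sides $|v_3' z| = \rho$ and $|v_3' x| = d$ are fixed and the included angle at $v_3'$ is $\psi$, while the angle at $z$ is exactly $\p$. Writing $x=(d,0)$ and $z=(\rho\cos\psi,\rho\sin\psi)$, an elementary computation gives $|xz|^2 = \rho^2 + d^2 - 2\rho d\cos\psi$ together with $\cos\p = (\rho - d\cos\psi)/|xz|$, hence
\[
\tan\p(\psi) = \frac{d\,\sin\psi}{\rho - d\cos\psi},
\]
where the denominator is strictly positive since $\rho > d \ge d\cos\psi$. In particular $\cos\p>0$, so $\p\in[0,\tfrac{\pi}{2})$ throughout, where $\tan$ is an increasing bijection; thus it suffices to track $\tan\p$. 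Differentiating,
\[
\frac{d}{d\psi}\tan\p(\psi) = \frac{d\,(\rho\cos\psi - d)}{(\rho - d\cos\psi)^2},
\]
so the sign of the derivative equals the sign of $\rho\cos\psi - d$. Hence $\p$ has a unique interior critical point at $\psi^{\ast} = \arccos(d/\rho)\in(0,\tfrac{\pi}{2})$, switching from strictly increasing to strictly decreasing; this is a maximum, while the boundary values $\p(0)=\p(\pi)=0$ are minima.

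It then remains to translate $\psi\in\{0,\pi\}$ and $\psi=\psi^{\ast}$ back into the claimed geometric descriptions and to assemble the four arcs. At the endpoints $z$ lies on the line $x v_3'$, so $z,x,v_3'$ are collinear and $\p=0$; these are precisely the two intersections of that line with $O$. For the critical value I would note that in the coordinates above $(x-z)\cdot(v_3'-x) = d(\rho\cos\psi - d)$, which vanishes exactly when $\cos\psi = d/\rho$, i.e. exactly at $\psi^{\ast}$; thus the maximizer is characterized by $xz\perp xv_3'$, and reflection supplies the second such position. Combining the increase-then-decrease behavior on the chosen semicircle with its reflected copy exhibits $O$ as four arcs bounded by these four extremal points, with $\p$ strictly monotone on each. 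The trigonometric identities are routine; the only genuinely delicate points are the bookkeeping ones—verifying that $\psi$ is a legitimate monotone coordinate so monotonicity in $\psi$ transfers to monotonicity in $z$, that $\rho\cos\psi-d$ has only simple zeros so the monotonicity is strict, and that the two interior maxima together with the two boundary minima partition $O$ into exactly four arcs after reflection.
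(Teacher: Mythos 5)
Your proposal is correct, but it takes a genuinely different route from the paper. The paper argues synthetically: dropping the height $h$ from $v_3'$ onto the line $xz$ gives $\sin\p = h/|zv_3'| = h/\rho$, and since $h \le |xv_3'|$ with equality exactly when the foot of the perpendicular is at $x$, the maximum of $\p$ occurs precisely when $zx \perp xv_3'$ (using that $\sin$ is increasing, which tacitly requires $\p < \pi/2$); the monotonicity on the four arcs is then simply asserted as ``elementary.'' You instead set up coordinates, parametrize $z$ by the central angle $\psi$, derive the closed form $\tan\p = d\sin\psi/(\rho - d\cos\psi)$, and differentiate, locating the unique interior critical point at $\cos\psi = d/\rho$ and verifying by a dot-product computation that this is exactly the perpendicularity condition. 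Your calculus route is longer but buys two things the paper leaves implicit: a genuine proof of the strict monotonicity claim (the sign of $d(\rho\cos\psi - d)$ changes only at the simple zero $\psi^{\ast}$), and an explicit verification that $\p \in [0,\pi/2)$ via $\cos\p = (\rho - d\cos\psi)/|xz| > 0$, which is the fact the paper's ``$\sin$ is increasing'' step silently relies on. The paper's argument, in exchange, is coordinate-free and identifies the maximizer in one line. Both proofs are complete and compatible; yours could even be seen as a strengthening, since it supplies the omitted monotonicity details.
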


\begin{proof}
The ``minimum'' claim is clear: there are precisely two positions of $z \in O$ for which the mimimum value of $\p$ is attained, say at the north and the south poles of $O$.

To prove the rest, take the height from $v_3'$ in the triangle $v_3'xz$, and notice that it is smaller than, or equal to, $[xv_3']$.
Therefore, because $\sin$ is an increasing function, we get the two positions $z_0 \in O$ of $z$ obtaining maximal value $\p_0$ of $\p$: $z_0 x \perp xv_3'$.
One of them is in the left semi-circle of $O$ and another one in the right half-circle.

The monotonicity of the angular function $\p(z)$ on each of the resulting four arcs is elementary.
\end{proof}

\noindent
We will denote $z_0$ by $a_0$.

\item 
\label{Four-a13}
Lemma~\lemref{mi-max-p} helps to prove the following.

\begin{lem}
There exist precisely four positions $z_{13}$ of the variable point $z \in O$ s.t. 
$$\angle x z_{13} v_3' = \q_{13} - \q_{12}/2.$$
\end{lem}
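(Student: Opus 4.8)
The plan is to read off the number of solutions of $\p(z)=\q_{13}-\q_{12}/2$ directly from the shape of the angular function $\p$ recorded in Lemma~\lemref{mi-max-p}. Write $\tau:=\q_{13}-\q_{12}/2$ for the target value; by the angle partition computed in Item~\ref{phi*}, $\tau$ is exactly $\angle x a_{13} v_3'$, i.e.\ the value of $\p$ that a genuine unfolding image $a_{13}$ must realize, so the claim is that $\p$ attains $\tau$ at precisely four points of $O$. By Lemma~\lemref{mi-max-p}, $\p$ is continuous on $O$, attains its minimum $0$ at exactly the two poles and its maximum $\p_0$ at exactly the two points $\pm a_0$, and is strictly monotone on each of the four open arcs into which these four points cut $O$. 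Since the extrema alternate around $O$, each arc joins a pole to an $a_0$, so $\p$ restricts on it to a continuous strictly monotone bijection onto $(0,\p_0)$. Hence, provided $0<\tau<\p_0$, the intermediate value theorem gives exactly one $z$ with $\p(z)=\tau$ on each arc, while the four endpoints contribute nothing because there $\p\in\{0,\p_0\}\neq\tau$; summing over the arcs yields precisely four solutions $z_{13}$. Everything therefore reduces to the strict two-sided bound $0<\tau<\p_0$.

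For the lower bound, note that $\q_{13}=\angle v_1 a v_3'$ and $\q_{12}=\angle v_1 a v_2$ (since $u_1,u_2$ lie on the rays $av_1,av_2$), and as $v_3'$ ranges over $[xv_3]$ the apex angle $\angle v_1 a v_3'$ stays between $\angle v_1 a x$ and $\angle v_1 a v_2$; a short computation for the regular pyramid then shows $\angle v_1 a v_3'>\tfrac12\angle v_1 a v_2$ throughout, i.e.\ $\tau>0$. For the upper bound I would first identify $\p_0$ intrinsically: because $a_0 x\perp x v_3'$, the right triangle $a_0 x v_3'$ gives $\sin\p_0=|xv_3'|/|a v_3'|$, and the right triangle $a x v_3'$ (right-angled at $x$ since $ax\perp\Pi_b$) gives the same ratio for the spatial apex angle $\angle x a v_3'$; hence $\p_0=\angle x a v_3'$. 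The inequality $\tau<\p_0$ then becomes a relation purely among apex angles at $a$, namely $\angle v_1 a v_3'-\tfrac12\angle v_1 a v_2<\angle x a v_3'$, which I would establish by spherical angle-chasing in the triangle cut out by the rays $av_1,av_2,av_3'$ on a small sphere about $a$, with the axis $ax$ supplying the fourth marked point.

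I expect this upper bound to be the main obstacle, because it is genuinely not automatic. As $v_3'\to x$ one has $\p_0=\angle x a v_3'\to 0$ while $\tau\to\angle v_1 a x-\tfrac12\angle v_1 a v_2>0$, so the bound \emph{fails} once $v_3'$ is too close to $x$. The hypotheses of the construction---in which $v_3'$ is taken near $v_3$ precisely so as to preserve convexity (Item~\ref{Item 1})---are exactly what force the spherical inequality to hold, and pinning down this dependence (equivalently, locating the axis point relative to $v_3'$ on the perpendicular bisector of the $v_1v_2$ arc) is the delicate step. Once $0<\tau<\p_0$ is secured, the counting argument of the first paragraph gives the four points $z_{13}$ immediately.
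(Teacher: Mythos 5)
Your counting argument is the paper's own: both run the intermediate value theorem over the four arcs on which Lemma~\lemref{mi-max-p} gives strict monotonicity, and both reduce the lemma to the two-sided bound $0 < \q_{13}-\q_{12}/2 < \p_0$. Your identification $\p_0 = \angle x a v_3'$ coincides with the paper's observation that the right triangles $a_0 x v_3'$ and $a x v_3'$ are congruent, your treatment of the lower bound is essentially the paper's (a direct computation for the regular pyramid), and your remark that the upper bound genuinely requires $v_3'$ near $v_3$ --- and fails as $v_3' \to x$ --- is correct and matches the paper, whose proof is stated for $v_3'$ close enough to $v_3$.

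The gap is that the upper bound, which you yourself call the main obstacle, is never proved: you reduce it to the apex-angle inequality $\angle v_1 a v_3' - \tfrac12\angle v_1 a v_2 < \angle x a v_3'$ and offer only ``spherical angle-chasing'' as a plan, explicitly deferring the delicate step. The paper closes exactly this step with a short Euclidean argument, no spherical trigonometry needed. Let $p \in v_1 v_3$ be the foot of the perpendicular from $x$; by the theorem of the three perpendiculars, $a p \perp v_1 v_3$ as well, so
$$\sin (\tfrac12 \angle v_1 a v_3 ) = \sin \angle p a v_3 = \frac{|p v_3|}{|a v_3|} < \frac{|x v_3|}{|a v_3|} = \sin \angle x a v_3,$$
the middle inequality holding because $|p v_3|$ is a leg and $|x v_3|$ the hypotenuse of the right triangle $x p v_3$. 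Hence $\angle x a v_3 > \tfrac12 \angle v_1 a v_3 = \angle v_1 a v_3 - \q_{12}/2 \geq \q_{13} - \q_{12}/2$, where the last step uses $\q_{13} = \angle v_1 a v_3' \leq \angle v_1 a v_3$. Since $\angle x a v_3' \to \angle x a v_3$ as $v_3' \to v_3$, continuity gives $\p_0 = \angle x a v_3' > \q_{13} - \q_{12}/2$ for $v_3'$ sufficiently close to $v_3$, which is exactly the regime the construction works in. Without this (or an equivalent) argument, your proposal only shows that there are four solutions \emph{provided} the target value lies strictly between the extremes of $\p$; it does not show that it does, and that conditional statement is the routine part of the lemma.
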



We described above in Item~\ref{phi*}
why the angle $\q_{13} - \q_{12}/2$
is the appropriate choice to determine the position of $a_{13}$ on $O$.

\begin{proof}
Notice that the variable $\p$ defined in Lemma~\lemref{mi-max-p}
clearly depends continuously on $z$, and 
\begin{equation}
\label{q13-q12/2}
0 < \q_{13} - \q_{12}/2 < \p_0,
\end{equation}
where $0$ is the minimal value of $\p$, and $\p_0$ is its maximal value given by Lemma~\lemref{mi-max-p}.

The first inequality follows for the regular tetrahedron from the numerical values
$\q_{13} \approx \pi/3$ (for $v_3'$ close to $v_3$) and $\q_{12} = \pi/3$.\footnote{
If, instead of the regular tetrahedron, we would start with an arbitrary regular pyramid, we would use Lm.2.8 in \cite{Reshaping} to derive the conclusion.}

To see that $\p_0 > \q_{13} - \q_{12}/2$, consider $a_0 \in O$ such that $a_0 x \perp xv_3'$.
The triangles $a_0xv_3'$ and $axv_3'$ are congruent, because $x$ is the projection of $a$ on $\Pi_b$.

Next we show that 
$$\angle x a v_3 > \angle  v_1 a v_3 / 2 =\angle  v_1 a v_3 - \q_{12}/2.$$
To see this, let $xp \perp v_3 v_1$, with $p \in v_3 v_1$. The theorem of the three perpendiculars implies $ap \perp v_3 v_1$, hence 
$$\sin (\angle  v_1 a v_3 / 2) = \sin \angle pav_3 = |pv_3|/|av_3| < |xv_3| / |av_3| 
= \sin \angle xav_3.$$
For $v_3'$ close enough to $v_3$ we have $\angle xav_3' \approx \angle xav_3$, so we still have\footnote{
This argument is valid for arbitrary regular pyramids.}
$$\p_0 = \angle xav_3' > \angle  v_1 a v_3 / 2 = \angle  v_1 a v_3 - \q_{12}/2 > \q_{13} - \q_{12}/2.$$

By the continuity of $\p$ and the inequalities (\ref{q13-q12/2}), there are precisely four intermediate positions of $z \in O$ for which $\angle xz_{13}v_3' = \q_{13} - \q_{12}/2$.
In the left semi-circle of $O$, one is below and one is above the point $z_0$ realizing the maximum of $\p$ (i.e., $z_0 \in O$, $z_0 x \perp xv_3'$). 
\end{proof}

\item
\label{a13}
Fig.~\figref{Proof_u12_5} shows the position of $z_{13}$ we choose for (i.e., denote by) $a_{13}$: the left one above $z_0$.
This choice will be important at Item \ref{a'}.

\item 
\label{congruent-triangles}
Construct the triangle $v_3' a_{13} u_{13}$ congruent to $v_3'au_1$,
with $u_{13}$ inside the angle $\angle x v_3' a_{13}$.

\item
\label{a'}
Rotate the triangle $v_3' a v_1$ about $v_3' v_1$ until it lies in the base plane, and denote by $a'$ the resulting image of $a$ farthest from $v_2$, so $a'$ clearly lies in the left half-circle of $O$.

It follows that $a'$ is above $a_{13}$; i.e., $a_{13}$ lies between $a_0$ and $a'$, on the left half of $O$.
This is a direct consequence of
Items~\ref{Four-a13},~\ref{a13},~\ref{comparison}, and Lemma~\lemref{comp}.

\begin{figure}[htbp]
\centering
\includegraphics[width=1.0\textwidth]{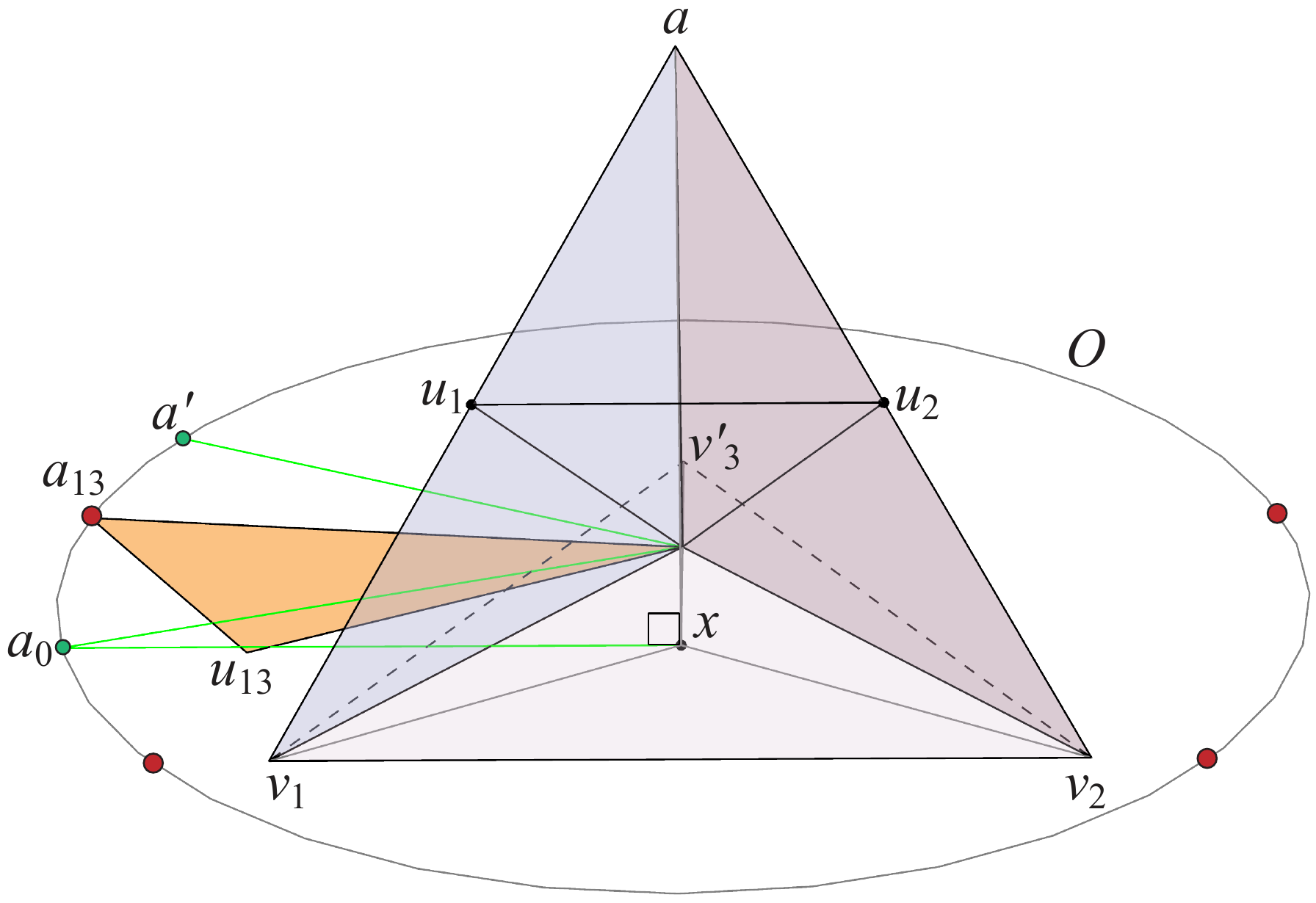}
\caption{The four $\p$ solutions (red), and $a_{13}$ 
between (green) points $a_0$ below and $a'$ above.
The labeled points $a,u_1,u_2$ are 
above the base plane, and all the other labeled points lie in the base plane.
}
\figlab{Proof_u12_5}
\end{figure}

\item
\label{comparison}
\begin{lem}
\lemlab{comp}
$\angle v_3' a' x < \q_{13} - \q_{12}/2$.
\end{lem}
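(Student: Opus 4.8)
The plan is to relate the angle $\angle v_3' a' x$ to the apex angles $\q_{13}$ and $\q_{12}$ by tracking how the triangle $v_3' a v_1$ behaves under the rotation that produces $a'$. Recall that $a'$ is the image of $a$ when the face $v_3' a v_1$ is rotated flat into the base plane about the edge $v_3' v_1$. Since this rotation is an isometry, $|v_3' a'| = |v_3' a|$ and the angle $\angle v_3' a' v_1 = \angle v_3' a v_1$ is preserved. The quantity I want to bound, $\angle v_3' a' x$, is the portion of this preserved angle ``cut off'' by the ray $a' x$. So the first step is to express $\angle v_3' a' x$ in terms of $\angle v_3' a' v_1$ (equivalently $\angle v_3' a v_1$) minus $\angle x a' v_1$.

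The key step is to compare $\angle v_3' a' x$ with $\angle v_3' a x = \p_0$ (the maximal value from Lemma~\lemref{mi-max-p}, attained at $a_0$), and then to invoke the inequality already established inside the proof of the four-solutions lemma, namely $\p_0 = \angle x a v_3' > \angle v_1 a v_3' / 2$, together with the partition identity from Item~\ref{phi*}. Concretely, I would argue that the flattening rotation moves $a$ to a position $a'$ for which the angle subtended at $a'$ between $v_3'$ and $x$ is strictly smaller than the corresponding angle at the genuine (lifted) apex, because flattening the face increases the planar distance from $x$ to the apex image relative to $v_3'$. Then $\angle v_3' a' x < \angle v_3' a_0 x = \p_0$, and since we have $\q_{13} - \q_{12}/2 < \p_0$ is not directly what we want, I would instead set up the comparison so that $\angle v_3' a' x$ lands below $\q_{13} - \q_{12}/2$. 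The cleanest route is to observe that by definition $\angle v_3' a_{13} x = \q_{13} - \q_{12}/2$ (Item~\ref{Four-a13} and~\ref{a13}), so the claim $\angle v_3' a' x < \q_{13} - \q_{12}/2$ is exactly the statement that $a'$ sits \emph{above} $a_{13}$ on the left half of $O$ in the sense of the monotone angular function $\p$; thus I would reduce the lemma to a monotonicity comparison on $O$.

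I would therefore carry out the steps in this order. First, note both $a_0$ and $a'$ lie on the left half-circle of $O$ (for $a_0$ this is its defining position $a_0 x \perp x v_3'$; for $a'$ this is stated in Item~\ref{a'}). Second, use Lemma~\lemref{mi-max-p}: on the arc of the left semicircle \emph{above} $a_0$, the function $\p(z) = \angle x z v_3'$ is strictly monotone, decreasing away from its maximum $\p_0$ at $a_0$. Third, show that $a'$ lies on this arc and that $\p(a') < \q_{13} - \q_{12}/2$ by comparing $\angle v_3' a' x$ to the previously derived bound $\angle x a v_3' > \angle v_1 a v_3'/2 \ge \q_{13} - \q_{12}/2$; the flattening strictly decreases the angle because the rotation carries $a$ to the base plane where its angular separation from $x$ as seen from $v_3'$ drops below the three-dimensional value $\p_0$. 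The monotonicity then converts this angular inequality into the desired position statement.

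The main obstacle, I expect, is rigorously justifying the direction of the strict inequality $\angle v_3' a' x < \angle v_3' a x$ produced by the flattening rotation. One must argue carefully that unfolding the face $v_3' a v_1$ into the base plane moves the apex image $a'$ to a location where the ray to $x$ makes a \emph{smaller} angle with $v_3' a'$ than the spatial apex does with $v_3' a$; this is geometrically plausible because $x$ is the foot of the perpendicular from $a$ and flattening pivots $a$ away from the vertical over $x$, but making it precise requires comparing the planar triangle $v_3' a' x$ against the spatial configuration, likely via the law of cosines or via the observation that $|a' x| > |a x|$ while $|v_3' a'| = |v_3' a|$ and $|v_3' x|$ is fixed. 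Once that single comparison is nailed down, the rest follows from the monotonicity already packaged in Lemma~\lemref{mi-max-p}.
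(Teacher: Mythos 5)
Your opening reduction is sound and in fact coincides with the paper's first move: since the segment $a'x$ crosses $v_1 v_3'$ at the foot $m$ of the perpendicular from $x$---which, by the theorem of the three perpendiculars, is also the foot of the perpendicular from $a$, so the path $xma$ unfolds straight---one gets $\angle v_3' a' x = \angle v_3' a m = \q_{13} - \angle m a v_1$, and the lemma is equivalent to $\angle m a v_1 > \q_{12}/2$. From that point on, however, your argument has a genuine gap, and is in fact circular. The ``cleanest route'' you propose---restating the claim as ``$a'$ lies above $a_{13}$ on $O$''---is not an independent statement: $a_{13}$ is \emph{defined} (Items~\ref{Four-a13} and~\ref{a13}) as the point of the relevant arc where $\p$ takes the value $\q_{13}-\q_{12}/2$, so by the monotonicity of Lemma~\lemref{mi-max-p} that positional claim is word-for-word equivalent to the inequality you are trying to prove, and you supply no independent way to verify it. The paper's logic runs in the opposite direction: Item~\ref{a'} \emph{deduces} that $a_{13}$ lies between $a_0$ and $a'$ \emph{from} Lemma~\lemref{comp}. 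Moreover, the two inequalities you actually bring to bear cannot decide the question: the flattening comparison gives $\p(a') < \p_0$ (automatic anyway, since $\p_0$ is the maximum of $\p$ on $O$), and the bound recycled from the four-solutions lemma gives $\q_{13}-\q_{12}/2 < \p_0$. Both numbers lie below $\p_0$, so these facts are compatible with either order between $\p(a')$ and $\q_{13}-\q_{12}/2$. For the same reason the step you single out as the ``main obstacle,'' namely $\angle v_3' a' x < \angle v_3' a x = \p_0$, is true but useless here: even granted, it only reproves $\p(a') < \p_0$.

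What is missing is a second, concrete realization of the number $\q_{12}/2$ inside the figure, and this is the paper's key idea. On the original tetrahedron $S = a v_1 v_2 v_3$, let $m' \in v_1 v_3$ be the foot of the perpendicular from $x$; by the three-perpendiculars theorem $a m' \perp v_1 v_3$, and by the symmetry of $S$ one has $\angle m' a v_1 = \angle v_3 a v_1 /2 = \q_{12}/2$. The lemma then becomes a comparison of two face angles, $\angle m a v_1 > \angle m' a v_1$, where $m \in v_1 v_3'$ is the analogous foot on $T$. Since $a m v_1$ and $a m' v_1$ are right triangles with right angles at $m$ and $m'$, this is equivalent to $\angle m v_1 a < \angle m' v_1 a$, i.e.\ $\angle v_3' v_1 a < \angle v_3 v_1 a$, which holds precisely because $v_3'$ was chosen strictly between $x$ and $v_3$. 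Without such an identification of $\q_{12}/2$ with an angle one can point at, every comparison available to you passes through $\p_0$, and, as explained above, that can never separate $\p(a')$ from $\q_{13}-\q_{12}/2$.
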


\begin{proof}
Notice that the claimed inequality is equivalent to
$\q_{12}/2 < \q_{13} - \angle v_3' a x = \angle x a v_1$.

Let $m \in v_1 v_3'$ such that $xm \perp v_1 v_3'$.
See Fig.~\figref{mmp}.

\begin{figure}[htbp]
\centering
\includegraphics[width=0.75\textwidth]{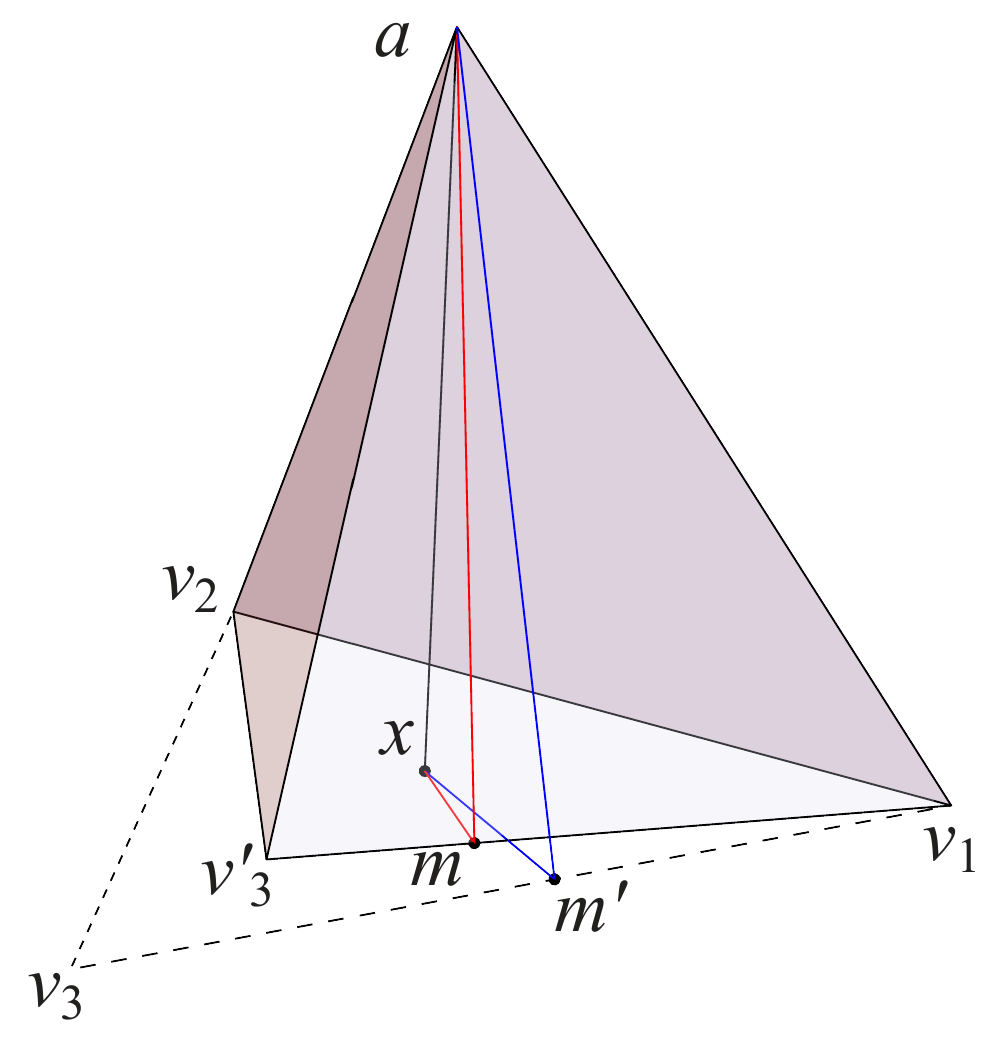}
\caption{$xm \perp v_1 v_3'$ and $xm' \perp v_1 v_3$.
The geoseg $x m a$ (red) on $T$ is shorter than $x m' a$ (blue) on $S$.
}
\figlab{mmp}
\end{figure}

By the theorem of the three perpendiculars, $am \perp v_1 v_3'$, so the path $xma$ is a geodesic segment.
Therefore, $\angle x a v_1 > \angle m a v_1$.

In the tetrahedron $S=a v_1 v_2 v_3$, let $m' \in v_1 v_3$ such that $xm' \perp v_1 v_3$.
Then, again by the theorem of the three perpendiculars, $am' \perp v_1 v_3$, 
so the path $xm'a$ is a geodesic segment on $S$.
Therefore, $\q_{12}/2 = \angle v_3 a v_1 /2 = \angle m' a v_1$.

So the claimed inequality is 
reduced to 
$\angle m a v_1 > \angle m' a v_1$.
This follows from $\angle m v_1 a < \angle m' v_1 a$, which is a direct consequence of the choice of $v_3'$ between $x$ and $v_3$.
\end{proof}

\item
\label{dist-x-a'-a13}
The above argument also shows that $|xm| < |xm'|$ and $|ma| < |m'a|$, hence
$|xm| + |ma| < |xm'| +  |m'a|$. 
I.e., the distance from $x$ to $a$ is shorter on $T$ than on $S$.

Moreover, because $a_{13}$ lies between $a_0$ and $a'$, on the left half of $A$,
$|x a_{13}| < |xa'|$.
Therefore, $|x a_{13}|$ is shorter than the distance from $x$ to $a$ on $S$.

\noindent
This concludes identifying the points $a_{13}, u_{13}$.
\end{enumerate}

\noindent
\textbf{Part~II}
\begin{enumerate}[resume*] 
\item 
\label{a-12}
Now we begin the second part of the proof,
identifying the points $a_{12}, u_{12}, y_1, u_{21}$.

On the ray at $x$ orthogonal to $v_1v_2$, take the point $a_{12}$ determined by 
$|xa_{13}|=|xa_{12}|$.

\item 
\label{u12}
Construct, inside $\angle a_{12} x a_{13}$, the triangle
$a_{12} x u_{12}$ congruent to $a_{13} x u_{13}$.
We illustrated this step earlier in Fig.~\figref{Proof_u12_6}.

Clearly, the triangle $a_{13} x u_{13}$ lies inside $\angle v_3xv_1$, 
hence the triangle $a_{12} x u_{12}$ lies inside $\angle v_2xv_1$.
Therefore, the triangles $a_{12}xu_{12}$ and $a_{13} x u_{13}$ share only the point $x$. 

\item 
\label{L13}
The mediator plane $\Pi_{13}$ of $u_1, u_{13}$
intersects the plane $\Pi_b$ under the line $L_{13}$ through $v_3'$.

The point $y_1$ we are constructing should be at 
equal distances from $u_1, u_{13}$, and from $u_{12}$.
That $|y_1 u_1|=|y_1 u_{13}|$ is achieved by $\Pi_{13}$,
and that these distances equal $|y_1 u_{12}$ is achieved by $\Pi_{12}$.

Consider $u' \in a' v_1$ such that $|a' u'|=|a u_1|$.
Moving continuously $v_3$ to $v_3'$ would move continuously several objects: 
$a'$ to $a_{13}$;
$u'$ to $u_{13}$; 
the mediator plane $\Pi_1'$ of $u_1, u'$ to $\Pi_{13}$; 
and the intersection line $v_3' v_1 = \Pi_1' \cap \Pi_b$ to $L_{13}$.

Therefore, because $a_{13}$ lies between $a_0$ and $a'$ (see Item~\ref{a'}),
$L_{13}$ enters at $v_3'$ the triangle $v_3' v_1 v_2$ and 
consequently it intersects the edge $v_1 v_2$.

\item 
\label{y1}
The mediator plane $\Pi_{12}$ of $u_1, u_{12}$
intersects the plane $\Pi_b$ under the line $L_{12}$.
Because of Item~\ref{dist-x-a'-a13}, $L_{12}$ separates $x$ from $v_1 v_2$.

Denote by $y_1$ the intersection point of $L_{13}$ and $L_{12}$.

Then, by Item~\ref{L13}, $y_1$ lies inside the triangle $v_1 v_1 v_3'$.

\item It follows that $\angle u_1 v_3' y_1 = \angle u_{13} v_3' y_1$, hence the triangle $y_1 v_3' u_{13}$ folds to the face $y_1 v_3' u_1$.

\item Furthermore, by Item \ref{congruent-triangles}, the triangle $v_3' u_{13} a_{13}$ folds to the face $v_3' u_1 a$.

\item 
\label{symmetry}
Proceed similarly for $u_2$, to obtain the points $a_{23} \in O$, $u_{23}$, $u_{21}$ and $y_2$.
The construction is symmetric with respect to the plane $a x v_3 ' \perp v_1 v_2$.

\item 
\label{choice a13}
Notice that the triangles $au_1u_2$ and $a_{12} u_{12} u_{21}$ are congruent,
as they have congruent sides from $a$ resp. $a_{12}$, 
$a u_1 \equiv a_{13}u_{13} \equiv a_{12}u_{12}$ (by Items~\ref{congruent-triangles} and~\ref{u12}),
and congruent angles between those sides (by the choice of $a_{13}$ at Item~\ref{a13} and the consecquent costruction).
So $|u_1u_2|=|u_{12}u_{21}|$.

\item
\label{isosceles-trapezoids}
Because we have $|u_{12} y_1|=|u_1 y_1|=|u_{13} y_1|$ (see Items~\ref{L13}-\ref{y1}) and similarly for $u_{21}$, and by $|u_1u_2|=|u_{12}u_{21}|$,
the isosceles trapezoids $u_1 u_2 y_2 y_1$ and $u_{12} u_{21} y_2 y_1$ are congruent.

Therefore, the pentagon $a_{12} u_{12} y_1 y_2 u_{21}$ folds to the faces $a u_1 u_2$ and $u_1 u_2 y_2 y_1$.

Moreover, because $u_1 u_2 \parallel v_1v_2$, $L_{12}$ is also parallel to $v_1v_2$.

\item Because $|y_1 u_{13}|=|y_1 u_1|=|y_1 u_{12}|$, we have $\angle x y_1 u_{13} = \angle x y_1 u_{12}$.
Therefore, the edge $y_1 u_1$ is in $\C(x)$, by the bisecting property of $\C(x)$ at $y_1$.

By construction, we have that 
$\angle a_{13} u_{13} y_1 = \angle a_{12} u_{12} y_1$, hence the edge $a u_1$ is also in $\C(x)$.

\item 
\label{convexity}
Finally, notice that the resulting polyhedron is convex, because 
$y_1$ and $y_2$ lie inside the base $v_1v_2v_3'$ (Items~\ref{y1} and~\ref{symmetry}).

\item
\label{a=b}
We next argue that the angles incident to $u_1$ from either side---$\a$ to the left
and $\b$ to the right---are equal, proving that $\C(x)$ bisects at $u_1$.
It will be easiest to work with angles in the source unfolding. So
\begin{align*}
\a & = \angle y_1 u_{13} a_{13} \\
\b & = \angle y_1 u_{12} a_{12} 
\end{align*}
\begin{figure}[htbp]
\centering
\includegraphics[width=1.0\textwidth]{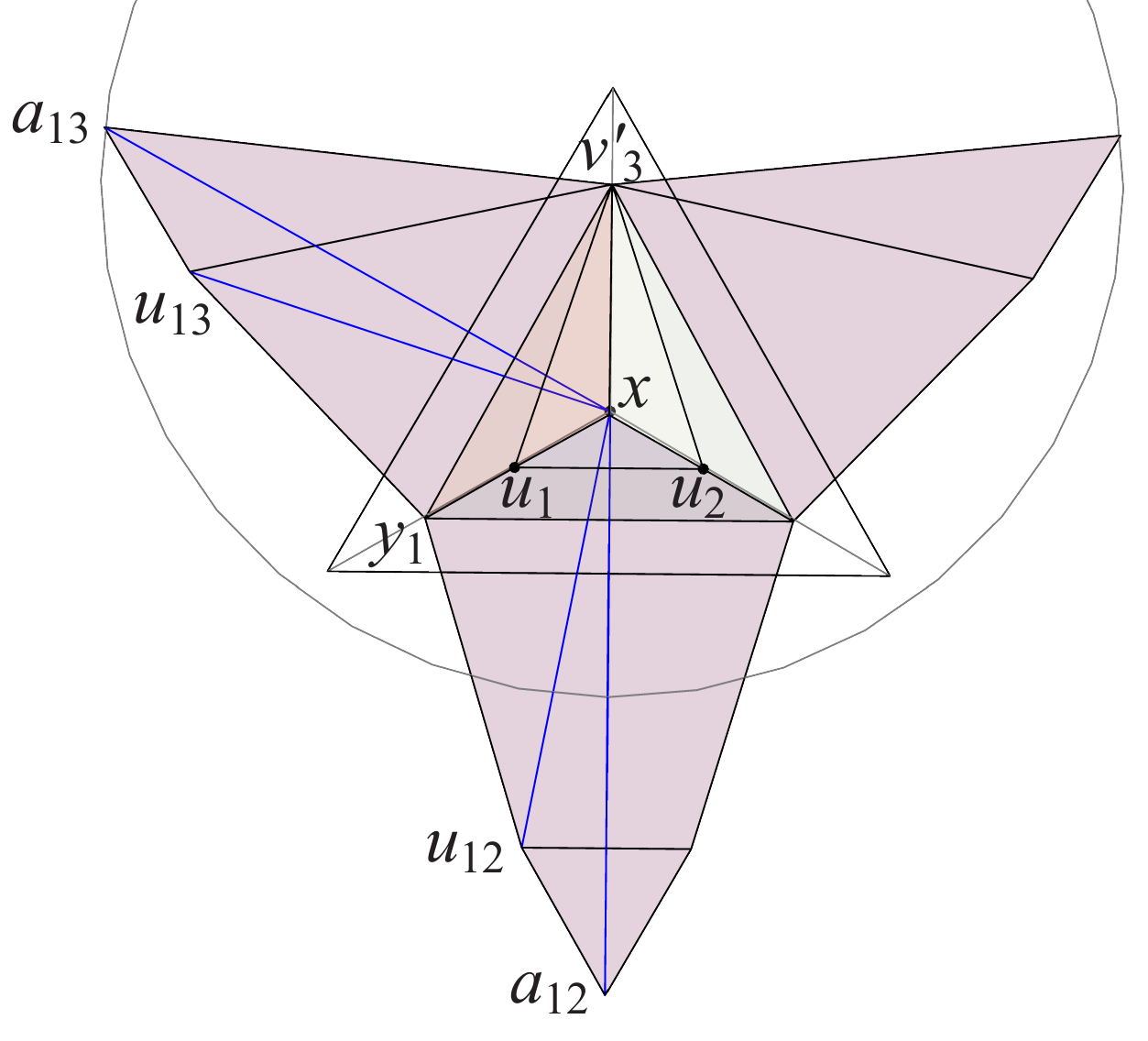}
\caption{Overhead view. $\a=\b$.}
\figlab{Proof_u12_alphabeta}
\end{figure}
Now re-interpret these angles from $x$:
see Fig.~\figref{Proof_u12_alphabeta}:
\begin{align*}
\a & = \angle x u_{13} a_{13} + \angle x u_{13} y_1 \\
\b & = \angle x u_{12} a_{12} + \angle x u_{12} y_1
\end{align*}
By construction (Item~\ref{u12})
\begin{align*}
\angle x u_{13} a_{13}  = \angle x u_{12} a_{12}
\end{align*}
and the triangles $x u_{13} y_1$ and $x u_{12} y_1$
are congruent as all their respective edges are congruent.
Therefore $\a=\b$.
\end{enumerate}

Fig.~\figref{Proof_u12_12} shows the completed construction
for $u_1 = 0.7 a + 0.3 v_1$.
Note that $y_1 = L_{13} \cap L_{12}$  does not necessarily lie on $x v_1$.
And note that $L_{12}$ is parallel to $v_1 v_2$, but
$L_{13}$ is not parallel to $v_1 v_3$.
These details may be more evident in Fig.~\figref{Proof_u12_13_04},
when $u_1 = 0.4 a + 0.6 v_1$.
In a sense, this lack of ``symmetry'' reflects the fact that in
Case~(c), the polyhedron edges left and right of $u_1$ have different destinations:
to one side terminating on the base, to the other side terminating at $u_2$.

\begin{figure}[htbp]
\centering
\includegraphics[width=1.0\textwidth]{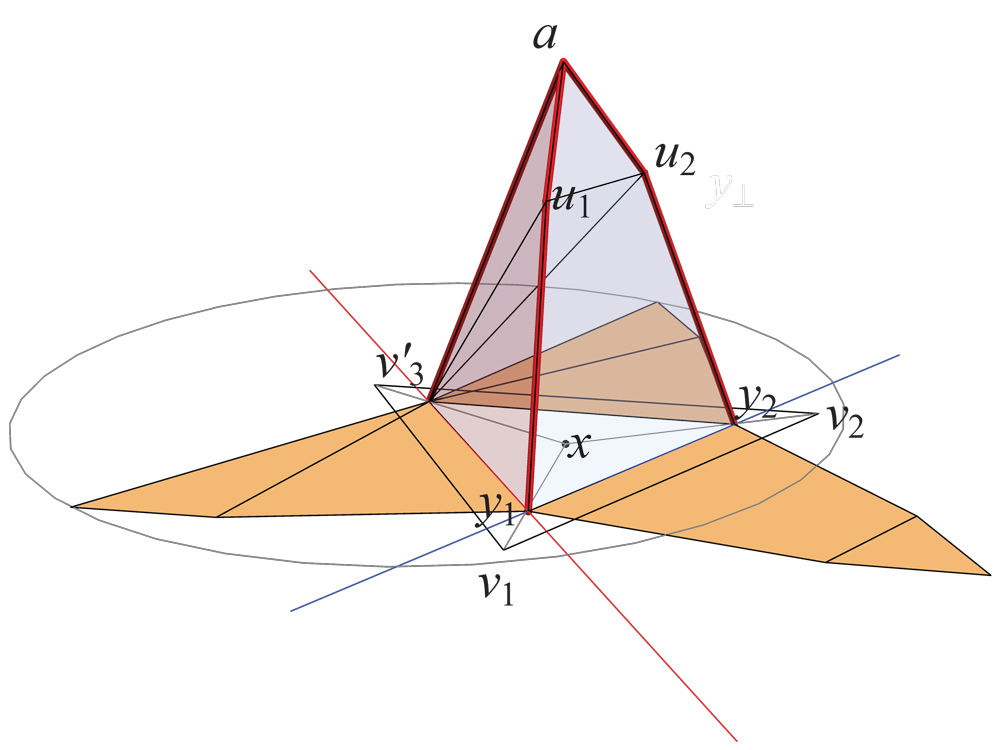}
\caption{Final construction.
$u_1,u_2$ at  $70$\% of $a v_1$}
\figlab{Proof_u12_12}
\end{figure}

\begin{figure}[htbp]
\centering
\includegraphics[width=1.0\textwidth]{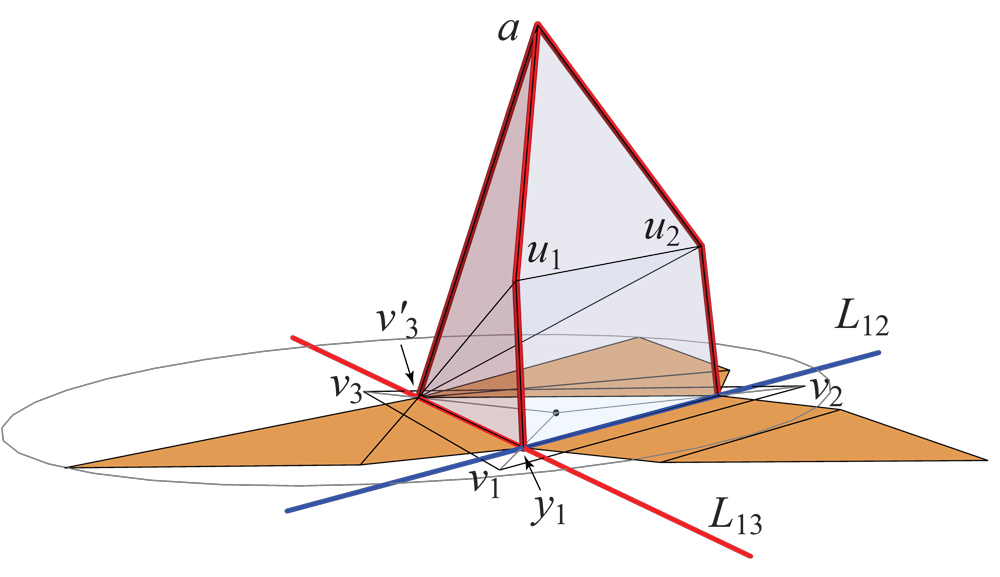}
\caption{$u_1,u_2$ lower: $40$\% of $a v_1$.}
\figlab{Proof_u12_13_04}
\end{figure}

\bigskip


\begin{rmk}
\rmklab{About Case (c)}
Assume that, in Fig.~\figref{FourCases}(c), instead of only one blue edge $u_1u_2$, 
there is a chain $C$ of several blue (horizontal) edges between $u_1$ and $u_2$, 
separated by red branches of $\C(x)$.
Then the above construction still works:
we first apply it for the two blue edges incident to $u_1$: a slanted one and a horizontal one;
afterward we iterate only its second part, starting with Item~\ref{a-12}.
%
%
%
\end{rmk}


\subsection{Case~(d)}
\seclab{Case_d}
Recall that Case~(d) occurs when a chain of degree-$2$ nodes, connected by
polyhedron edges (blue in Fig.~\figref{FourCases}(d)),
closes to itself, as in the polyhedron in Fig.~\figref{StackedPyramids}.

For simplicity of the exposition, we start with a tetrahedron $T=a v_1 v_2 v_3$
and with the point $x \in v_1v_2v_3$ realizing a skeletal cut locus $\C(x)$.
(The case of arbitrary pyramids is analogous.)

Our goal is to modify $T$
by adding several nodes of degree-$2$ on ``consecutive'' branches of
$\C(x)$, consecutive in the sense that they end at consecutive leaves $v_i$.

\medskip

\noindent\textbf{Step~1}. First we show how to modify $T$ in order to obtain a degree-$2$ node $u_1$, and later (Step~2) 
we explain how these modifications are compatible with
modifications necessary to obtain other degree-$2$ nodes, on other branches
of the given tree.
(Step~(1) shares some similarity to Case~(b), Section~\secref{Case_b}.)

\begin{enumerate}
\item 
We focus on the edge $a v_1 \subset \C(x)$.
Denote by $\d$ the distance on $T$ from $x$ to $a$,
which is realized by 
three geosegs, each crossing one lateral face.
\item 
Consider a point $y_1$ on the line segment $xv_1$, close to $v_1$.
Construct through $y_1$ the lines $L_{12} \parallel v_1v_2$ and $L_{13} \parallel v_1v_3$.
See Fig.~\figref{Cased_1}.
\begin{figure}[htbp]
\centering
\includegraphics[width=0.75\textwidth]{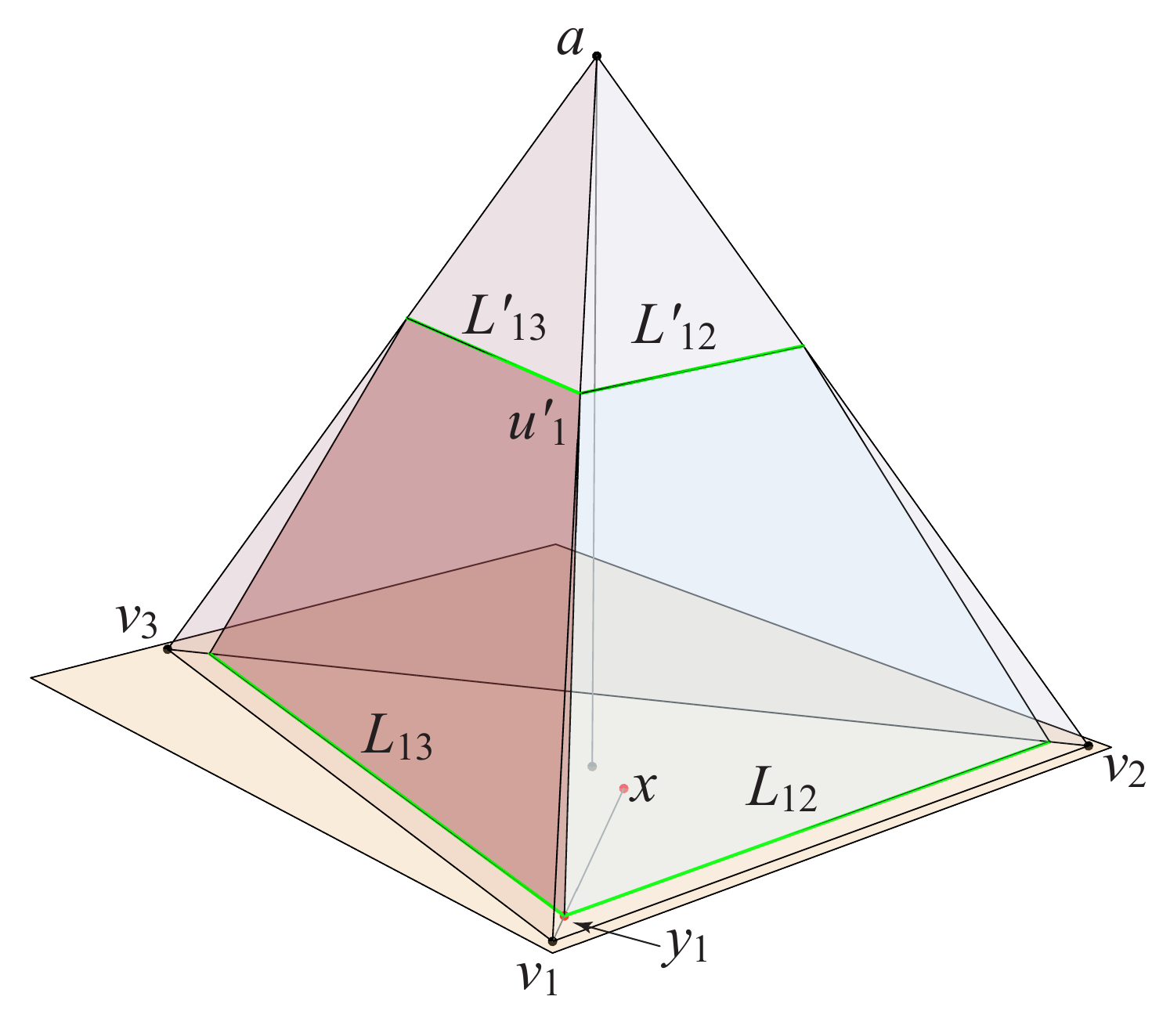}
\caption{$P'$: $T$ truncated by planes through 
(green) lines:
$L_{12} \cup L'_{12}$, and 
$L_{13} \cup L'_{13}$.}
\figlab{Cased_1}
\end{figure}
%
\item 
Consider a point $u'_1$ on the edge $a v_1$.
Construct through $u'_1$ the lines $L'_{12} \parallel v_1v_2$ and $L'_{13} \parallel v_1v_3$.
\item 
Truncate $T$ with the two planes determined by $L_{12} \cup L'_{12}$, and 
by $L_{13} \cup L'_{13}$.
Denote by $P'$ the resulting polyhedron.

Clearly, there are two geoarcs on $P'$ from $x$ to $a$,
a ``right'' one and a ``left'' one.
Because $L_{12}, L_{13}$ are inside $T$, both those geoarcs are smaller than $\d$.
\item
Now the plan is (informally) to move  $L'_{12}$ rightward until the modified right faces increase the right geoarc's distance from $x$ to $a$ to match $\d$, and then to move 
$L'_{13}$  leftward to achieve the same distance $\d$ for the left geoarc.
\item 
Consider a variable line $\D_z$ displacing continuously 
by a horizontal offset $z$ from its initial position $L'_{12}$ toward the exterior of $P'$, 
maintaining at all times $\D_z \parallel L_{12}$.

Denote by $\d_z$ the length of the shortest path from $x$ to $a$ which crosses $L_{12}$ and $\D_z$.
Clearly, $\d_z$ increases continuously from some value $< \d$ to arbitrarily large values.
Therefore, there exists a position of $D_z$ for which $\d_z=\d$.
Denote by $\D_{12}$ this position.
See Fig.~\figref{Cased_2}.
\begin{figure}[htbp]
\centering
\includegraphics[width=1.0\textwidth]{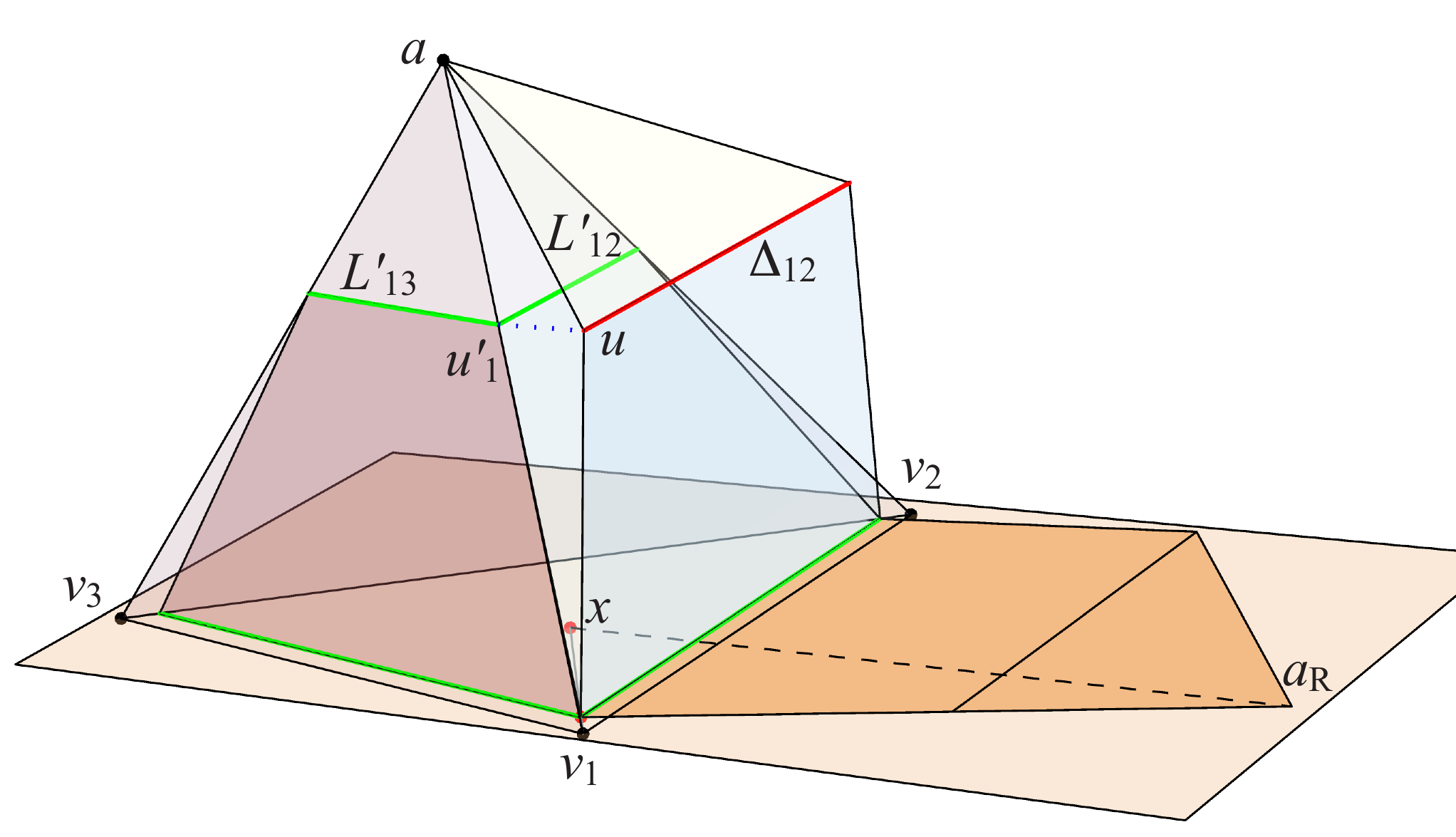}
\caption{$L'_{12}$ (green) $\to \D_{12}$ (red), when $\d_z=\d$.
The geoseg from $x$ to $a$ across $\D_{12}$ unfolds to straight segment $x a_R$
in the base plane.}
\figlab{Cased_2}
\end{figure}
%
\item 
\label{var u}
Move continuously a variable point $u$ along $\D_{12}$, and consider the line
$\D_u \parallel L_{13}$ through $u$.

The starting position of $u$ is when $\D_u$ is the supporting line of $P'$ closer to $L_{13}$.
The point $u$ displaces toward the exterior of $P'$,
maintaining at all times $\D_z \parallel L_{13}$.

Denote by $\d_u$ the length of the shortest path from $x$ to $a$ which crosses $L_{13}$ and $\D_u$.
Clearly, $\d_u$ increases continuously from some value $<\d$ to arbitrarily large values.
Therefore, there exists a position of $D_u$ for which $\d_u=\d$.
Denote by $\D_{13}$ this position.
Note that the right faces of $P'$ are unaltered by the movement of $\D_u$,
so $\d_z=\d$ still holds.
See Fig.~\figref{Cased_3}.
\begin{figure}[htbp]
\centering
\includegraphics[width=1.0\textwidth]{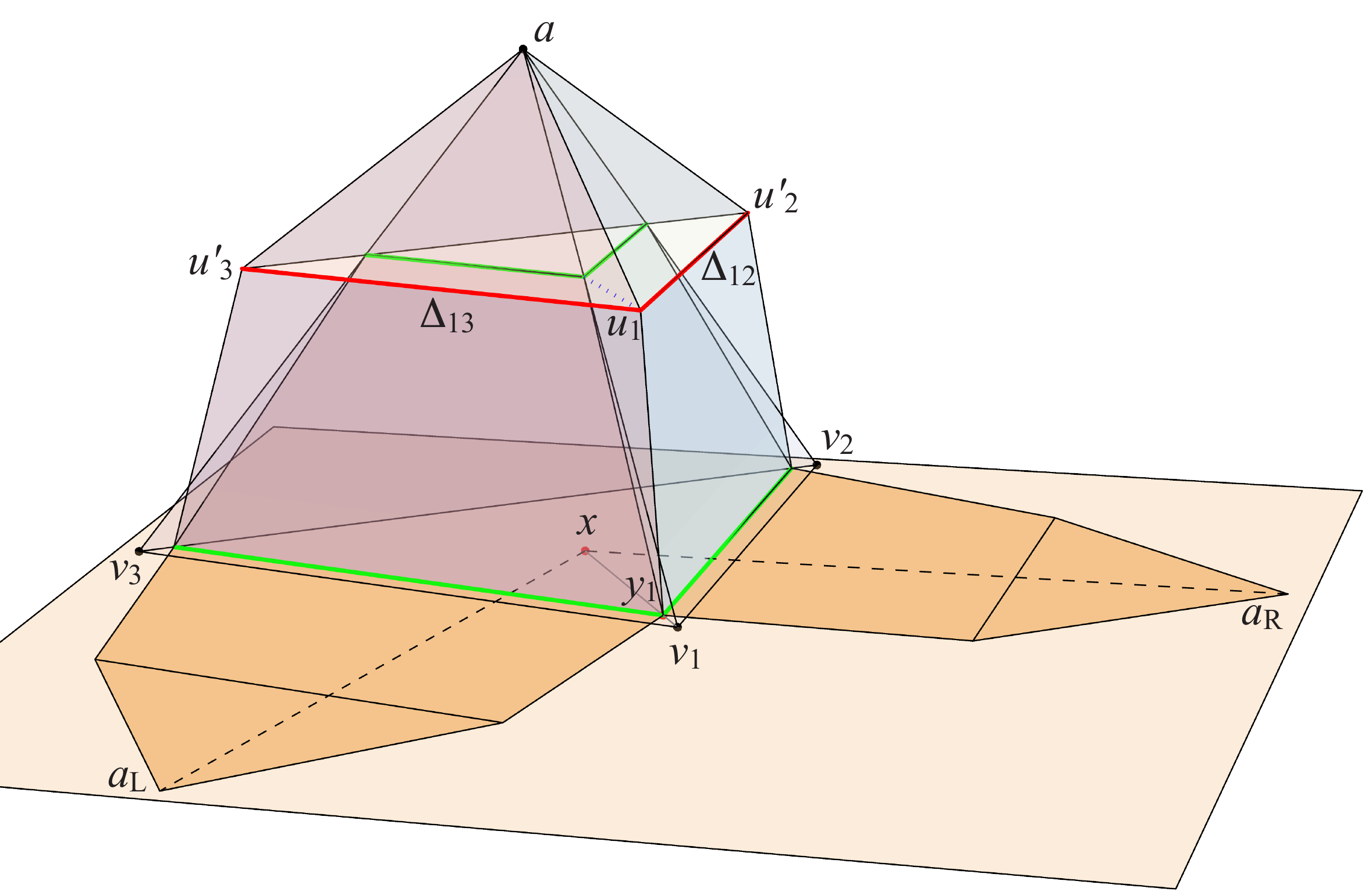}
\caption{$u_1 = \D_{12} \cap \D_{13}$.}
\figlab{Cased_3}
\end{figure}

%
\item
Let $u_1$ denote the intersection point of $\D_{12}$ and $\D_{13}$.
Let $B$ be the plane containing the back face $a v_2 v_3$ of $T$.

Put $\{u'_2\} = \D_{12} \cap B$, $\{u'_3\} = \D_{13} \cap B$,
$\{y'_2\} = L_{12} \cap B$, $\{y'_3\} = \D_{13} \cap B$.
See Fig.~\figref{Cased_4}.
\begin{figure}[htbp]
\centering
\includegraphics[width=1.0\textwidth]{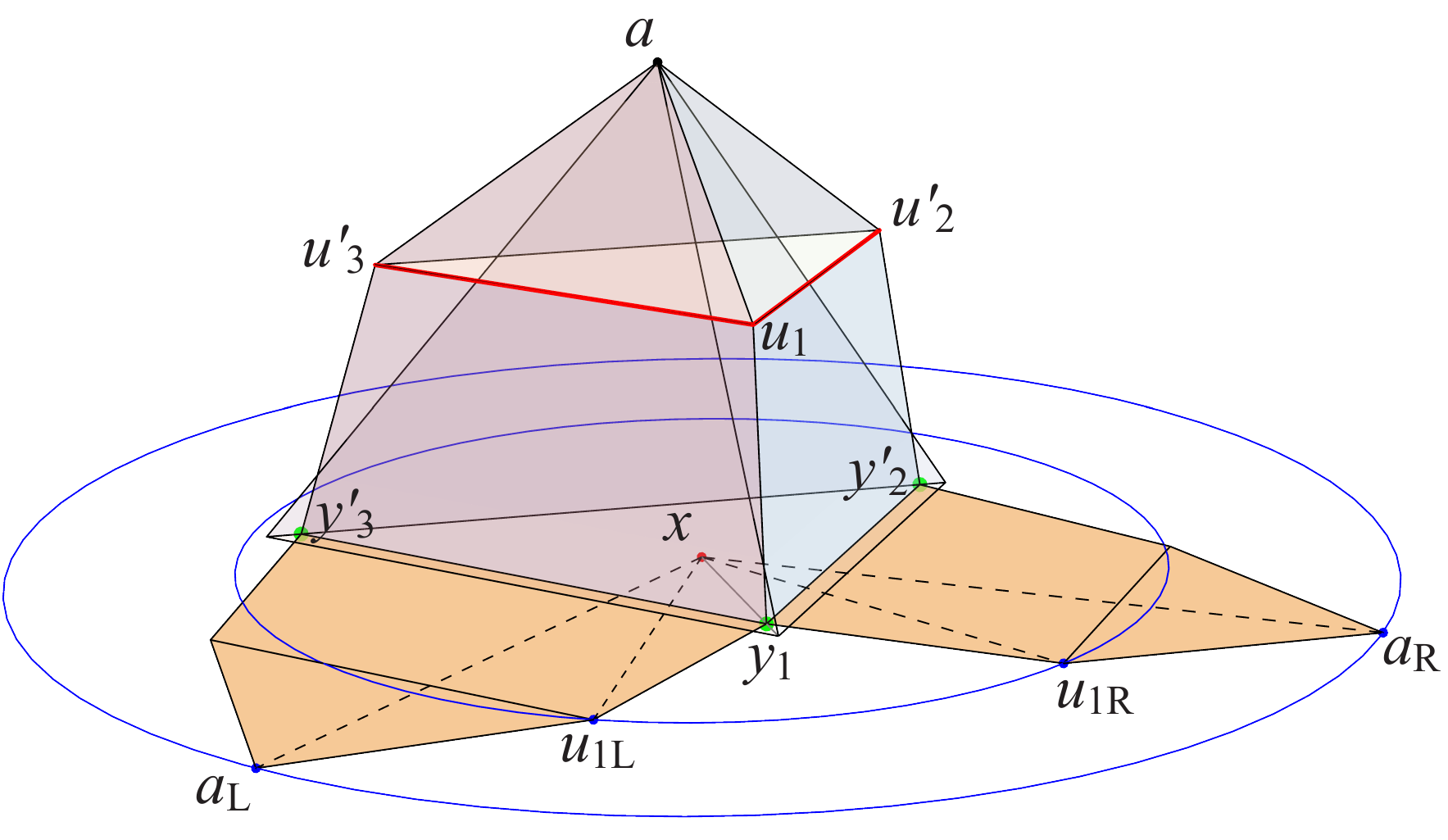}
\caption{$P=\conv(a, u_1, u'_2, u'_3, y_1, y'_2, y'_3)$.
Unfolded images of $x a$ and $x u_1$ shown dashed.}
\figlab{Cased_4}
\end{figure}
%
Denote by $P$ the convex polyhedron with vertices $a, u_1, u'_2, u'_3, y_1, y'_2, y'_3$. 
Notice that $P$ converges to $T$ if $y_1$ converges to $v_1$.

Let ${\rm dist}^Q(x,p)$ be the distance from $x$ to point $p$ on polyhedron $Q$.
To this stage of the argument, we have established that
the point $u_1$ is exterior to $T$ such that, on $P$:
\begin{itemize}
\item ${\rm dist}^P(x,a)$ is equal to the distance $\d$ on $T$ from $x$ to $a$---${\rm dist}^T(x,a)$---and
\item ${\rm dist}^P(x,a)$ is obtained by three geosegs: one
crossing $\D_{12}$, one crossing $\D_{13}$, and one up the back face $B$,
which derives from $T$ and has remained unaltered by all previous changes.
\end{itemize}
We thus have shown that $a \in \C(x)$.
\item Unfolding $T$ and $P$ in the base plane gives the same images $a_L$, $a_R$ of $a$,
by the choice of $u_1$. See Fig.~\figref{Cased_4}.
We next turn to showing that $u_1 y_1$ is in $\C(x)$,
by considering the source-unfolding images $u_{1R},u_{1L}$.
\item The planar triangles $xv_1a_R$ and $xv_1a_L$ are congruent (all sides equal), hence $\angle y_1 x a_R = \angle y_1 x a_L$.
\item 
\label{equal angles 1}
Therefore the triangles $y_1 x a_R$ and $y_1 x a_L$ are congruent, hence 
$|y_1 a_R|=|y_1 a_L|$ and $\angle x y_1 a_R = \angle x y_1 a_L$.
\item 
\label{equal angles 2}
Therefore, the triangles $y_1 u_R a_R$ and $y_1 u_L a_L$ are congruent (all sides equal). So $\angle a_R y_1 u_R = \angle a_L y_1 u_L$.
\item From \ref{equal angles 1} and \ref{equal angles 2} we have 
$\angle x y_1 u_R =\angle x y_1 u_L$.
Therefore, $y_1u_1$ is contained in $\C(x)$ on $P$.
\item Therefore, on $P$, $|x u_R| = |x u_L|$. Because $a \in \C(x)$ on $P$, 
$a u_1 \subset \C(x)$ on $P$.
\end{enumerate}

This completes Step~1:
both $a u_1$ and $u_1 y_1$ are in $\C(x)$, and $u_1$ is of degree-$2$ in $\C(x)$.

\bigskip

\noindent\textbf{Step~2}.
We now discuss the compatibility of the above changes around $v_1 /y_1$ with other changes.

We take $\{y_2\}= L_{12} \cap x v_2$, and we iterate the above construction from Item~\ref{var u} onward. 
So we identify the point $u_2 \in \D_{12}$ and the line $\D_{23}$.

Finally, we take $\{y_3\}= L_{23} \cap x v_3$, and we iterate the above construction from Item~\ref{var u} onward. This identifies the point $u_3 \in \D_{23}$.

It remains to prove that, by the above mentioned iterations, we close the chain
of base edges joining $y_i$ and horizontal edges joining $u_i$
(as in Fig.~\figref{StackedPyramids}).

\medskip

First we show that the chain of base edges joining $y_i$ closes.
This follows directly from the next two pairs of similar triangles, with the same similarity ratio:
$x v_1 v_2$ and $x y_1 y_2$, $x v_2 v_3$ and $x y_2 y_3$. 
Consequently, the triangles $x v_1 v_3$ and $x y_1 y_3$ 
are also similar with the above similarity ratio.

\medskip 

Now we see that the chain of horizontal edges joining $u_i$ closes.
This follows from the fact that the lengths of the 
geosegs on the resulting polyhedron, 
from $x$ to $a$ and crossing the respective edges, are all equal to $\d$.

The proof for Case~(d) is thus complete.

\bigskip

Call the polyhedra obtained by successively applying a finite sequence of our constructions \emph{tapered polyhedra}.


\section{Induction Proof}
\seclab{Induction Proof}
With the four degree-$2$ cases settled, we can prove our main theorem.

\setcounter{thm}{0}
\begin{thm}
\thmlab{deg-2}
Given any combinatorial tree $\T$ 
there is a convex polyhedron $P$ and a point $x \in P$
such that the cut locus $\C(x)$ is entirely contained in $\Sk(P)$,
and the combinatorics of $\C(x)$ match $\T$.
\end{thm}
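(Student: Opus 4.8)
The plan is to prove the statement by induction on the size of $\T$, realizing $\T$ top-down from a chosen root by assembling the explicit constructions already established in the previous sections. First I would dispose of the degenerate case: if $\T$ is a path (no node of degree $\geq 3$), then by property~(v) and Proposition~\propref{Degenerate} a doubly-covered convex polygon with $x$ on the rim has a path cut locus lying entirely in $\Sk(P)$, and the polygon and the position of $x$ can be chosen so that the combinatorics of this path match $\T$. Henceforth I may assume $\T$ has at least one ramification point. I would then fix a ramification point $a$ of $\T$ as the root and start $P$ as a regular pyramid over a $(\deg_\T a)$-gon base $Q$ with $x$ the centroid of $Q$; when $\deg_\T a = 3$ this is the regular tetrahedron of Section~\secref{ConstructionDetails}. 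The lateral edges at $a$ then lie in $\C(x)$ and realize the edges of $\T$ incident to the root, while the base vertices are the (currently leaf) children of $a$.

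The inductive step processes each node $w$ of $\T$ whose incoming edge has already been realized as an edge of $\C(x) \subset \Sk(P)$, and installs $w$ together with the edges to its children, branching on $\deg_\T w$. If $w$ is a ramification point of degree $d$, I apply the truncation construction of Lemma~\lemref{TruncEdges}: cutting $k = d-2$ planes through the vertex realizing $w$ produces a vertex of degree $d$ whose $k+1$ new truncation edges, together with the incoming edge, all lie in $\C(x)$, matching the $d$ edges of $\T$ at $w$. If instead $w$ has degree $2$, I classify it by the local shape of $\T$ around $w$ and invoke the matching case of Section~\secref{deg2}: Case~(a) when $w$'s child is a ramification point reached through a non-base vertex, Case~(b) when the child is a leaf, Case~(c) when $w$ lies in a chain of degree-$2$ nodes whose two ends attach to base vertices, and Case~(d) when that chain closes on itself. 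In each case the two edges of $\C(x)$ at $w$ are produced while the two extra polyhedron edges at $w$ are kept out of $\C(x)$. I then recurse on the child subtrees; finiteness of the recursion is immediate since each call handles a strictly smaller subtree, and Lemma~\lemref{NoInterference} guarantees that subtrees rooted at distinct children occupy disjoint angular sectors of $Q$, so the truncations never collide and the bisection property verified locally in each case is preserved globally.

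The main obstacle is the orchestration of the degree-$2$ machinery inside a single convex polyhedron rather than in the isolated tetrahedral models of Section~\secref{deg2}. The delicate points are: (i) verifying that each degree-$2$ modification (which moves base edges inward, e.g. $v_1 v_2 \to y_1 y_2$, and slants the new edges) stays confined to its angular sector and preserves convexity, so that the independent modifications for sibling branches remain mutually compatible; (ii) correctly distinguishing, for a maximal chain of consecutive degree-$2$ nodes, between the open case~(c) and the closed case~(d), and iterating the second part of those constructions (Remark~\rmkref{About Case (c)} and Step~2 of Case~(d)) to absorb an entire chain at once; and (iii) bookkeeping that the vertices created are exactly the nodes of $\T$ and the edges of $\C(x)$ exactly the edges of $\T$, so that the combinatorics genuinely match rather than merely embed. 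Once these local compatibilities are checked against the already-established bisection property of each construction, the induction closes and yields the desired $P$ and $x$.
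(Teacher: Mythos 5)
Your building blocks are the same as the paper's (root at a ramification point, regular pyramid start, the truncation construction of Lemma~\lemref{TruncEdges} for nodes of degree $\geq 3$, the four cases of Section~\secref{deg2} for degree-$2$ nodes), but the way you organize the induction---a depth-first recursion on child subtrees---has a genuine gap: it cannot accommodate Case~(d). A Case~(d) chain is not a path of degree-$2$ nodes inside one subtree of $\T$; it is a ring of degree-$2$ nodes lying on \emph{different} branches at the same distance from the root, joined pairwise by polyhedron edges that are \emph{not} in $\C(x)$ (Fig.~\figref{StackedPyramids}), and the construction of Section~\secref{Case_d} builds this ring by one coordinated modification of all branches: Step~2 there iterates around $v_1,v_2,v_3$ and must then verify that the chain of new edges closes up. A recursion in which ``each call handles a strictly smaller subtree'' processes branches one at a time and in isolation, so at the moment you treat the first degree-$2$ node of the ring, the neighboring nodes it must be joined to do not yet exist. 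Your point~(ii) concedes the chain must be absorbed ``at once,'' but that requirement contradicts the recursion you set up rather than being a detail checkable inside it. This is precisely why the paper's induction is organized level-by-level: it realizes $\T_k$, the set of nodes within distance $k$ of the root, for increasing $k$ (Section~\secref{Induction Proof}), so that all nodes of the new level are created in a single step, governed by the global trichotomy: all branches reach level $k$ and some level-$k$ node has degree $2$ (Case~(d)); some branch has already terminated and degree-$2$ nodes exist (Cases~(a)--(c)); all level-$k$ nodes have degree $\geq 3$ (truncations only).

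A second, related omission: in the paper's Case~(d) step, the level-$k$ nodes that should end up with degree $\geq 3$ are \emph{first} created as degree-$2$ nodes of the ring and only \emph{afterwards} raised to higher degree by the truncations of Section~\secref{ConstructionDetails}. Your per-node dispatch (``if $w$ is a ramification point, truncate; if degree $2$, pick a case'') misses this ordering. Moreover, the blanket appeal to Lemma~\lemref{NoInterference} cannot substitute for it: that lemma confines the \emph{truncation} construction to angular sectors of the base, whereas the degree-$2$ constructions of Cases~(c) and~(d) move base edges inward and, in Case~(d), wrap around the entire polyhedron, so sibling modifications are not sector-confined and their compatibility is exactly what the paper's level-synchronized orchestration is designed to secure. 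Aside from this orchestration issue, your outline (degenerate path case on a doubly-covered polygon, choice of root, per-case realization of degree-$2$ nodes) matches the paper's proof.
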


\begin{proof}
If all nodes of $\T$ have degree-$2$ then it can be realized on a doubly covered polygon, see e.g. \cite[Lem.2.2]{Reshaping}. 

So we may assume that $\T$ has at least one node of degree $\ge 3$, say $a$. 
Fix the root of $\T$ at $a$; this will become the apex of the realizing polyhedron.

The proof is constructive, by induction over the 
discrete distance (i.e., the number of edges) to $a$ in $\T$.
Precisely, denote by $\T_k$ the subtree of $\T$ consisting of all nodes at distance at most $k$ from $a$, together with all edges joining them in $\T$.
We show by induction that all $\T_k$ can be realized as cut loci.

For $k=1$, $\T_1$ has one internal node $a$ and as many leaves as the degree of $a$ in $\T$. 
Assume, for the simplicity of the exposition, that $\deg a =3$. (The case $k>3$ can be treated analogously.
See Fig.~\figref{AbstractInduction_2} for an abstract example).

We realize $\T_1$ on a regular tetrahedron $P_1$, 
with $a$ the top apex and $x$ the center of the base $v_1 v_2 v_3$.

Assume now that we have realized $\T_k$ as a cut locus 
on a tapered polyhedron $P_k$
with $a$ the top apex and $x$ the center of the base $v_1 v_2 v_m$, 
where $m$ is number of leaves of $T_k$.
By the induction hypothesis,
$P_k$ is obtained from $P_1$ by successive modifications detailed by our case constructions.

We must manage the interactions between the newly added edges to create degree-$2$ and degree $3$ nodes. Therefore, we start with the creation of degree-$2$ nodes.
We illustrate some cases in Fig.~\figref{AbstractInduction_1}.
\begin{enumerate}[(1)]
\item If all branches have nodes of level $k$, and at least one of them is of degree-$2$, then we start by applying Case~(d).
(This occurs with nodes $2,3,4$ in Fig.~\figref{AbstractInduction_1}(a), 
when Case~(d) is applied).
This will create the necessary degree-$2$ nodes; moreover, all nodes to become of degree $\geq 3$ are now degree-$2$ nodes.
After that we apply other necessary changes, to transform 
some of the nodes of degree $2$ into nodes of higher degree, by appropriate truncations described in Section~\secref{ConstructionDetails}. 
\item Otherwise, if at least one branch has no node of level $k$ and there are nodes of degree-$2$, we apply Cases~(a)-(c) whichever ones are appropriate
(This occurs with nodes $5,6$ in Fig.~\figref{AbstractInduction_1}(a), when Case~(a) is applied).
And finally,
\item If all branches have nodes of level $k$ and degree $\geq 3$ we apply the respective construction.
\end{enumerate}

\medskip
This completes the induction proof and establishes Theorem~\thmref{deg-2}.
\end{proof}

\begin{figure}[htbp]
\centering
\includegraphics[width=1.0\textwidth]{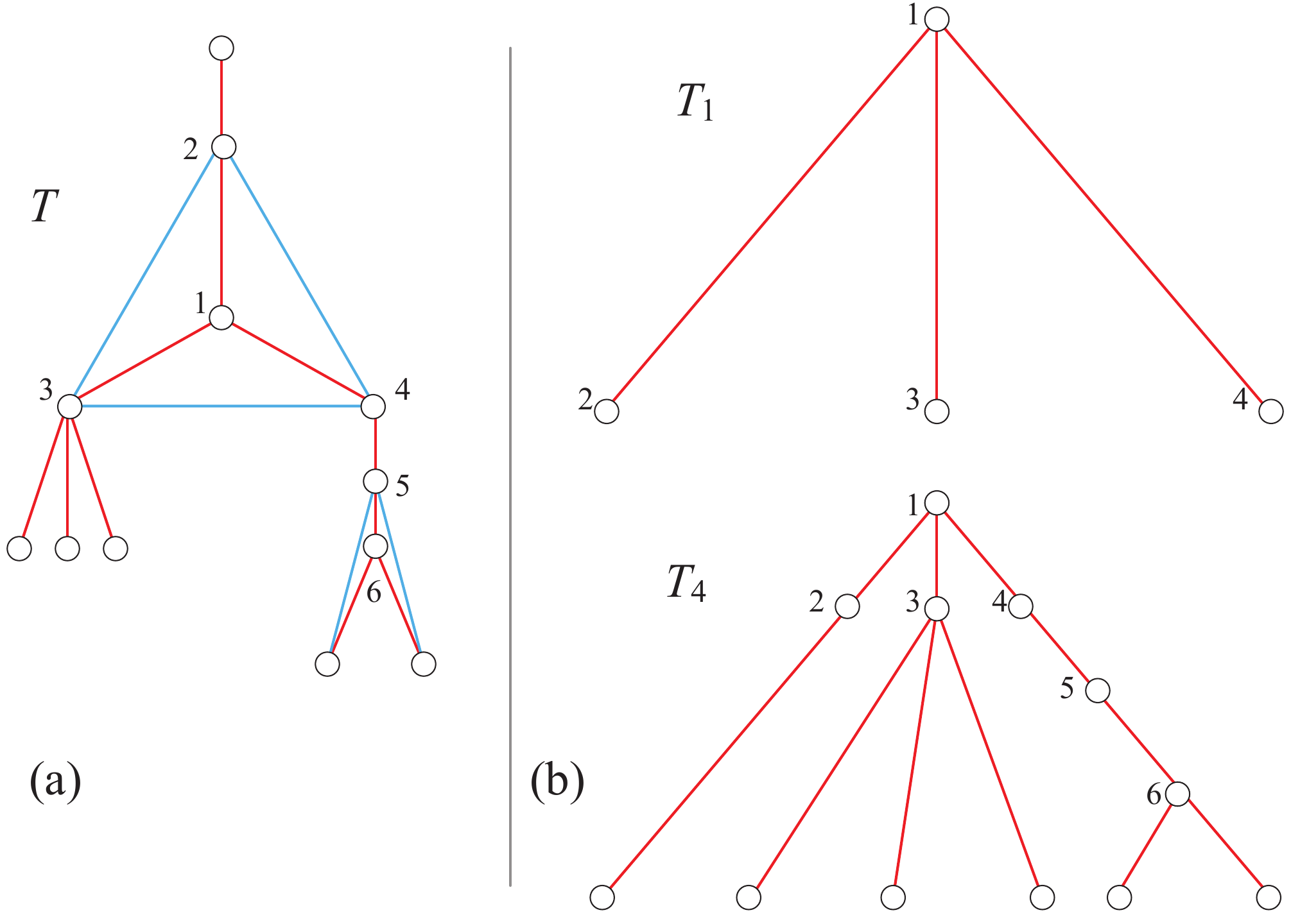}
\caption{(a)~Tree $\T$. Blue edges are non-$\C(x)$ polyhedron edges.
Case~(d) is applied to nodes $1,2,3,4$.
Case~(a) is applied to nodes $5,6$.
(b)~The first and last induction steps. 
}
\figlab{AbstractInduction_1}
\end{figure}

\bigskip
\noindent
We also illustrate the induction steps in a second example,
shown in Fig.~\figref{AbstractInduction_2}.
The steps in this example are as follows:
\begin{enumerate}[(a)]
\item The introduction of nodes $2$ and $3$ creates an instance of Case~(c),
because node $2$ is degree-$2$.
\item Node $4$ creates an instance of 
the construction in Section~\secref{ConstructionDetails}. 
\item Node $5$ creates an instance of Case~(b).
\item Nodes $6$ and $7$ constitute an instance of Case~(c).
\end{enumerate}

\begin{figure}[htbp]
\centering
\includegraphics[width=0.65\textwidth]{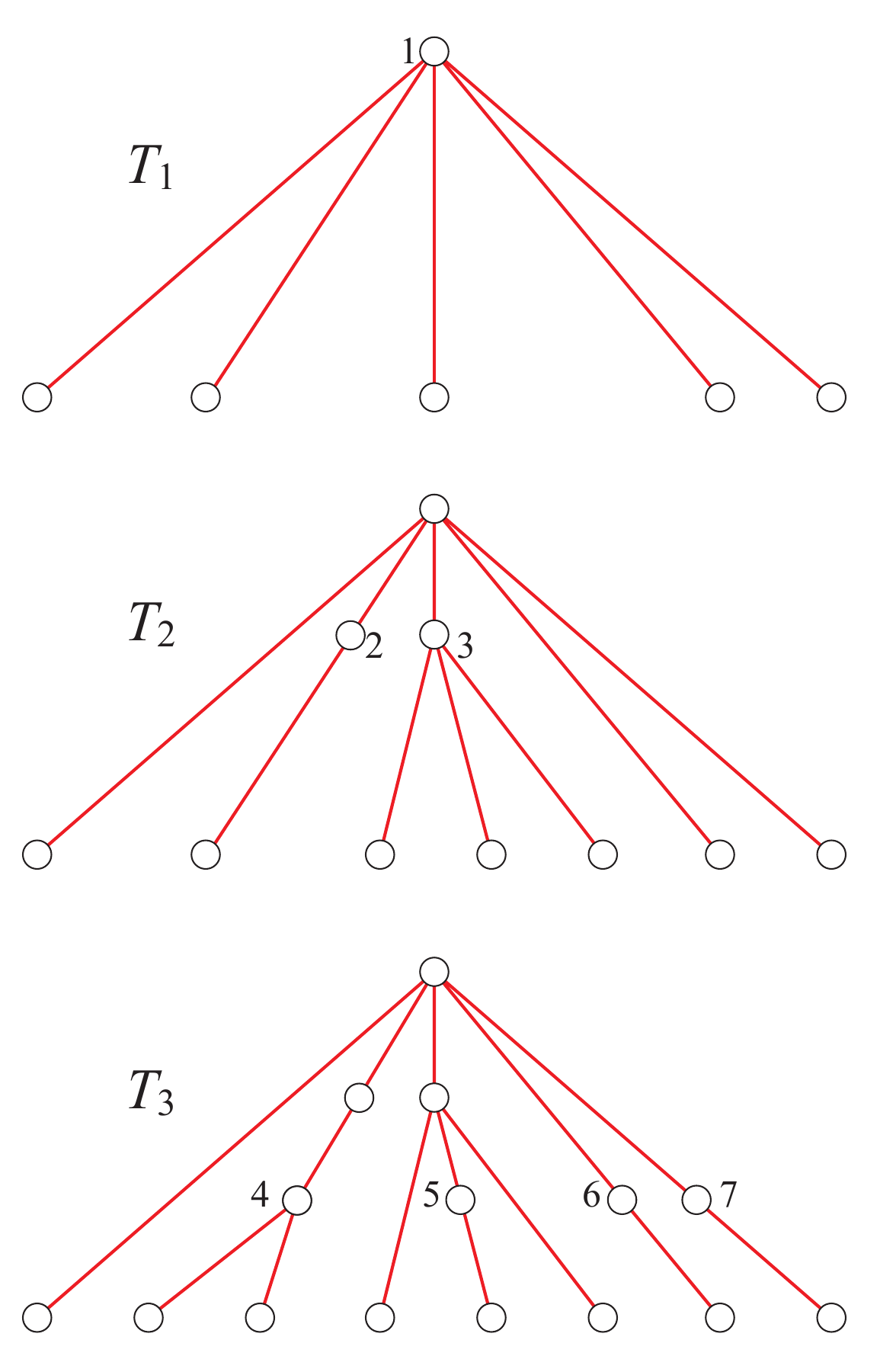}
\caption{Example of induction construction.
Two nodes $2,3$ are added in the transition from $\T_1$ to $\T_2$,
degree-$2$ and degree-$4$ respectively.
Four nodes are added in the transition from $\T_2$ to $\T_3$:
$4$ of degree-$3$ and $5,6,7$ of degree-$2$.
}
\figlab{AbstractInduction_2}
\end{figure}

\bigskip

As our goal has been proving Theorem~\thmref{EveryTree},
we have focussed on the existence of a realizing polyhedron $P$ for any given tree $\T$.
We have not addressed the algorithmic question of actually constructing $P$ from $\T$.
However, because all of our constructions are explicit
(and have been individually implemented), the proof in some sense
already constitutes an algorithm.
Without analyzing it carefully, we expect that the complexity of the algorithm is
proportional to the number of nodes in $\T$, with some data-structure overhead.
We leave establishing this formally to future work.


\section{The Class of Tapered Polyhedra}
\seclab{Tapered}
As we noted in the Introduction, a polyhedron with a skeletal cut locus
leads directly to an edge-unfolding to a net (a nonoverlapping polygon in the plane).
This is because the source unfolding from $x$, which is known to be a net, is achieved by
cutting the cut locus $\C(x)$, which maps to the outer boundary of the unfolding.
Despite considerable effort by researchers to resolve D\"urer's problem~\cite{o-dp-13}---whether or
not every convex polyhedron has an edge-unfolding to a net---there are only a few
infinite classes of polyhedra known to edge-unfold to a net. One class is the
\emph{domes}, polyhedra with a distinguished base face $B$ such that every other face
shares an edge with $B$~\cite[Sec.~25.5.2]{do-gfalop-07}.
A slight extension to \emph{g-domes} (generalized domes) allows a face to share
just a vertex with $B$~\cite[Sec.~3.1]{Reshaping}.

The main theorem of this paper leads to what we call
\emph{tapered polyhedra}, a class that properly includes
some g-domes (and therefore domes).
It is therefore of interest to list a few geometric characteristics of tapered polyhedra:

\noindent
\begin{itemize}
\item Every non-base vertex projects orthogonally to be strictly inside the base.\footnote{
Basically, this was our choice to simplify the reasoning; 
see for example Section~\secref{ConstructionDetails} and Fig.~\figref{Tfour_3d}. 
Avoiding this choice would lead to technical difficulties we have not addressed.}
\item Case~(a) in Fig.~\figref{FourCases} already goes beyond domes, to g-domes:
Two triangle faces share just a vertex with the base.
\item 
A tempered polyhedron is partitioned into zero or more
``level rings'' of nodes deriving from Case~(d), connected
in a cycle of vertices, as (for example) in Fig.~\figref{StackedPyramids}.
These rings take us beyond domes and g-domes, and so the tapered polyhedra
constitute a new class of polyhedra with edge-unfoldings to nets.
\end{itemize}

\noindent
We do not yet know a complete geometric characterization of tapered polyhedra.

We mention here that the order of assembling the cases to prove (by induction) Theorem~
\thmref{EveryTree} may differ. What we proposed in Section~\secref{Induction Proof} is just one viable way, among perhaps several others.
And the order of assembling the cases may result in different subclasses of polyhedra, 
within the class of tapered polyhedra.
%

\bigskip
We conclude by repeating this central open problem from Section~\secref{Introduction}.
\begin{op}
Characterize all the polyhedra $P$ that support a skeletal cut locus,
i.e., characterize the cut locus amenable polyhedra.
\end{op}

\clearpage
\bibliographystyle{alpha}

\begin{thebibliography}{DDH{\etalchar{+}}11}

\bibitem[AAOS97]{aaos-supa-97}
Pankaj~K. Agarwal, Boris Aronov, Joseph O'Rourke, and Catherine~A. Schevon.
\newblock Star unfolding of a polytope with applications.
\newblock {\em SIAM J. Comput.}, 26:1689--1713, 1997.

\bibitem[DDH{\etalchar{+}}11]{demaine2011continuous}
Erik~D. Demaine, Martin~L. Demaine, Vi~Hart, John Iacono, Stefan Langerman, and
  Joseph O’Rourke.
\newblock Continuous blooming of convex polyhedra.
\newblock {\em Graphs and Combinatorics}, 27:363--376, 2011.

\bibitem[DO07]{do-gfalop-07}
Erik~D. Demaine and Joseph O'Rourke.
\newblock {\em Geometric Folding Algorithms: Linkages, Origami, Polyhedra}.
\newblock Cambridge University Press, 2007.

\bibitem[GNRZ24]{goedgebeur2024hist}
Jan Goedgebeur, Kenta Noguchi, Jarne Renders, and Carol~T. Zamfirescu.
\newblock {HIST}-critical graphs and {Malkevitch's} conjecture.
\newblock arXiv:2401.04554., 2024.

\bibitem[LP22]{lebedeva2022alexandrov}
Nina Lebedeva and Anton Petrunin.
\newblock Alexandrov's embedding theorem.
\newblock arXiv:2212.10479, 2022.

\bibitem[Mou85]{m-fspcp-85}
David~M. Mount.
\newblock On finding shortest paths on convex polyhedra.
\newblock Technical {Report} 1495, Dept. Computer Science, Univ. Maryland,
  1985.

\bibitem[O'R13]{o-dp-13}
Joseph O'Rourke.
\newblock D{\"u}rer's problem.
\newblock In Marjorie Senechal, editor, {\em Shaping Space: Exploring Polyhedra
  in Nature, Art, and the Geometrical Imagination}, pages 77--86. Springer,
  2013.

\bibitem[OV23]{ov-clrcp-2023}
Joseph O'Rourke and Costin V{\^\i}lcu.
\newblock Cut locus realizations on convex polyhedra.
\newblock {\em Comput. Geom.: Theory \& Appl.}, 114, 2023.

\bibitem[OV24a]{Reshaping}
Joseph O'Rourke and Costin V\^{i}lcu.
\newblock {\em Reshaping {C}onvex {P}olyhedra}.
\newblock Springer, 2024.

\bibitem[OV24b]{SkCL-CCCG}
Joseph O'Rourke and Costin V{\^\i}lcu.
\newblock Skeletal cut loci on convex polyhedra.
\newblock In {\em 36th Canad. Conf. Comput. Geom.}, pages 25--31, July 2024.

\bibitem[SS86]{ss-spps-86}
Micha Sharir and Amir Schorr.
\newblock On shortest paths in polyhedral spaces.
\newblock {\em SIAM J. Comput.}, 15:193--215, 1986.

\end{thebibliography}
\newcommand{\etalchar}[1]{$^{#1}$}

\end{document}